\newtheorem{definition}{Definition}
\newtheorem{proposition}{Proposition}
\newtheorem{privacy_test}{Privacy Test}
\newif\ifconferenceon\conferenceonfalse
\newcommand{\conference}[1]{#1}
\newcommand{\arxiv}[1]{}
\newcommand{\conference}[1]{}
\newcommand{\arxiv}[1]{#1}
\newcommand{\nats}{\mathbb{N}}
\newcommand{\reals}{\mathbb{R}}
\newcommand{\nngreals}{\mathbb{R}_{\ge0}}
\newcommand{\nnints}{\mathbb{Z}_{\ge0}}
\newcommand{\argmax}{\operatornamewithlimits{argmax}}
\newcommand{\DP}{DP}
\newcommand{\PD}{PD}
\renewcommand{\epsilon}{\varepsilon}
\newcommand{\powerset}{\mathcal{P}}
\newcommand{\calC}{\mathcal{C}}
\newcommand{\calE}{\mathcal{E}}
\newcommand{\calF}{\mathcal{F}}
\newcommand{\calL}{\mathcal{L}}
\newcommand{\calM}{\mathcal{M}}
\newcommand{\calQ}{\mathcal{Q}}
\newcommand{\calR}{\mathcal{R}}
\newcommand{\calS}{\mathcal{S}}
\newcommand{\calT}{\mathcal{T}}
\newcommand{\calU}{\mathcal{U}}
\newcommand{\calV}{\mathcal{V}}
\newcommand{\calX}{\mathcal{X}}
\newcommand{\calZ}{\mathcal{Z}}
\newcommand{\calN}{\mathcal{N}}
\newcommand{\calW}{\mathcal{W}}
\newcommand{\one}{^{\mathrm{I}}}
\newcommand{\two}{^{\mathrm{II}}}
\newcommand{\tth}{^{(t)}}
\newcommand{\tmoneth}{^{(t-1)}}
\newcommand{\inv}{^{\!-1}}
\newcommand{\bmZ}{\mathbf{Z}}
\newcommand{\bmA}{\mathbf{A}}
\newcommand{\bmB}{\mathbf{B}}
\newcommand{\bmC}{\mathbf{C}}
\newcommand{\bmD}{\mathbf{D}}
\newcommand{\bmQ}{\mathbf{Q}}
\newcommand{\bmR}{\mathbf{R}}
\newcommand{\bmS}{\mathbf{S}}
\newcommand{\bmW}{\mathbf{W}}
\newcommand{\bma}{\mathbf{a}}
\newcommand{\bmb}{\mathbf{b}}
\newcommand{\bmc}{\mathbf{c}}
\newcommand{\bmd}{\mathbf{d}}
\newcommand{\bmm}{\mathbf{m}}
\newcommand{\bmw}{\mathbf{w}}
\newcommand{\bmbc}{\mathbf{bc}}
\newcommand{\bmbd}{\mathbf{bd}}
\newcommand{\bmab}{\mathbf{ab}}
\newcommand{\hbmR}{\hat{\bmR}}
\newcommand{\muZ}{\mu_\bmZ}
\newcommand{\muA}{\mu_\bmA}
\newcommand{\muB}{\mu_\bmB}
\newcommand{\muC}{\mu_\bmC}
\newcommand{\muD}{\mu_\bmD}
\newcommand{\Lam}{\Lambda}
\newcommand{\LamZ}{\Lam_\bmZ}
\newcommand{\LamA}{\Lam_\bmA}
\newcommand{\LamB}{\Lam_\bmB}
\newcommand{\LamC}{\Lam_\bmC}
\newcommand{\LamD}{\Lam_\bmD}
\newcommand{\The}{\Theta}
\newcommand{\PsiZ}{\Psi_\bmZ}
\newcommand{\PsiA}{\Psi_\bmA}
\newcommand{\PsiB}{\Psi_\bmB}
\newcommand{\PsiC}{\Psi_\bmC}
\newcommand{\PsiD}{\Psi_\bmD}
\newcommand{\calFPro}{\calF_{\textit{PPMTF}}}
\newcommand{\calMPro}{\calM_{\textit{PPMTF}}}
\newcommand{\textsyn}{\textsf}
\newcommand{\textdat}{\textsf}
\newcommand{\textutl}{\textsf}
\begin{document}


\title{Privacy-Preserving Multiple Tensor Factorization for Synthesizing Large-Scale Location Traces with Cluster-Specific Features\arxiv{\thanks{This is a full version of 
the paper accepted at PETS 2021 (The 21st Privacy Enhancing Technologies Symposium). 
This full paper 
includes 
Appendix~\ref{sec:PPMTF_share} (Effect of Sharing \textbf{A} and \textbf{B}) 
and 
Appendix~\ref{sec:details_gibbs} (Details of Gibbs Sampling). 
This study was supported by JSPS KAKENHI JP19H04113, JP17K12667, and by Inria under the project LOGIS.}}}



%
%
%
%

\numberofauthors{3} 

\author{
%
%
\alignauthor
Takao Murakami\\
      \affaddr{AIST, Tokyo, Japan}\\
      \email{takao-murakami at aist.go.jp}
\alignauthor
Koki Hamada\\
      \affaddr{NTT, Tokyo, Japan}\\
      \affaddr{RIKEN, Tokyo, Japan}\\
      \email{hamada.koki at lab.ntt.co.jp}
\alignauthor
Yusuke Kawamoto\\
      \affaddr{AIST, Tsukuba, Japan}\\
      \email{yusuke.kawamoto at aist.go.jp}
\and  
\alignauthor
Takuma Hatano\\
      \affaddr{NSSOL, Yokohama, Japan}\\
      \email{hatano.takuma.hq2 at jp.nssol.nssmc.com}
}
\date{30 July 1999}

\maketitle

\begin{abstract}
With the widespread use of LBSs (Location-based Services), 
synthesizing location traces plays an increasingly important role in  
analyzing 
spatial big data 
while protecting user privacy. 
In particular, a synthetic trace that preserves a feature specific to 
a cluster of users 
(e.g., those who commute by train, those who go shopping) 
is important for various geo-data analysis tasks and for providing a synthetic location dataset. 
Although location synthesizers have been widely studied, 
existing synthesizers do not provide 
sufficient 
utility, privacy, or scalability, 
hence are not practical for large-scale location traces. 
To overcome this issue,
we propose a novel location synthesizer called \textit{PPMTF (Privacy-Preserving Multiple Tensor Factorization)}.
We \allowbreak model various statistical features of the original traces by a transition-count tensor 
and a visit-count tensor. 
We factorize these two tensors simultaneously via multiple tensor factorization, 
and train factor matrices via posterior sampling.
Then we synthesize traces 
from reconstructed tensors, 
and perform a plausible deniability test for a synthetic trace. 
We comprehensively 
evaluate 
PPMTF 
using two datasets. 
Our experimental results 
show 
that 
PPMTF 
preserves various statistical 
features 
including 
cluster-specific 
features, 
protects user privacy, 
and synthesizes large-scale location traces in practical time. 
PPMTF 
also significantly outperforms the state-of-the-art 
methods in terms of 
utility and scalability at the same level of privacy. 
\end{abstract}

\section{Introduction}
\label{sec:intro}
LBSs (Location-based Services) have been used in a variety of applications such as POI (Point-of-Interest) search, route finding, 
and geo-social networking. 
Consequently, 
numerous 
location traces (time-series location trails) 
have been collected into the LBS provider. 
The LBS provider can provide these location traces (also called spatial big data \cite{Shekhar_MobiDE12}) 
to a third party (or data analyst) to perform 
various geo-data analysis tasks; 
e.g., 
finding popular POIs \cite{Zheng_WWW09}, 
semantic annotation of POIs \cite{Do_TMC13,Ye_KDD11}, 
modeling human mobility patterns \cite{Cranshaw_ICWSM12,Lichman_KDD14,Liu_CIKM13,Song_TMC06}, 
and 
road map inference \cite{Biagioni_TRB12,Liu_KDD12}. 

Although such geo-data analysis is important for industry and society, 
some important privacy issues arise. 
For example, 
users' sensitive locations (e.g., homes, hospitals), 
profiles 
(e.g., age, profession) 
\cite{Kawamoto_ESORICS19,Matsuo_IJCAI07,Zheng_LBSN09}, 
activities (e.g., sleeping, shopping) \cite{Liao_IJRR07,Zheng_LBSN09}, 
and 
social relationships \cite{Bilogrevic_WPES13,Eagle_PNAS09} 
can be estimated from traces. 

Synthesizing location traces 
\cite{Bindschaedler_SP16,Chow_WPES09,Kato_SIGSPATIAL12,Kido_ICPS05,Suzuki_GIS10,You_MDM07} 
is 
one of the most promising approaches 
to perform geo-data analysis while protecting 
user 
privacy. 
This approach first trains 
a 
generative model from the original traces (referred to as training traces). 
Then it generates synthetic traces 
(or 
fake 
traces) 
using the trained generative model. 
The synthetic traces preserve some statistical features (e.g., population distribution, transition matrix) of the original traces 
because 
these features are modeled by the generative model. 
Consequently, 
based on the synthetic traces, 
a data analyst can perform 
the various 
geo-data analysis tasks explained above. 

In particular, a synthetic trace that preserves a feature specific to 
a \textit{cluster} of users who exhibit similar behaviors 
(e.g., those who commute by car, those who often go to malls) 
is important for tasks such as semantic annotation of POIs \cite{Do_TMC13,Ye_KDD11}, modeling human mobility patterns \cite{Cranshaw_ICWSM12,Lichman_KDD14,Liu_CIKM13,Song_TMC06}, 
and road map inference \cite{Biagioni_TRB12,Liu_KDD12}. 
The 
cluster-specific 
features are also necessary for providing a synthetic dataset for research \cite{Iwata_AAAI19,SNS_people_flow} or anonymization competitions \cite{PWSCup2019}. 
In addition to preserving various statistical features, 
the synthetic traces are (ideally) designed to 
protect 
privacy 
of users who provide the original traces 
from a possibly malicious data analyst or any others who obtain the synthetic traces. 


Ideally, 
a location synthesizer should satisfy 
the following three features: 
(i) \textbf{high utility:} it synthesizes traces that preserve various statistical features of the original traces; 
(ii) \textbf{high privacy:} it protect privacy of users who provide the original traces; 
(iii) \textbf{high scalability:} it generates 
numerous 
traces within an acceptable time; e.g., within days or weeks at most.
All of these features are 
necessary for 
spatial big data analysis 
or providing a large-scale synthetic dataset. 

Although many location synthesizers 
\cite{Bindschaedler_SP16,Chen_CCS12,Chen_KDD12,Chow_WPES09,He_VLDB15,Kato_SIGSPATIAL12,Kido_ICPS05,Suzuki_GIS10,You_MDM07} 
have been studied, 
none of them are 
satisfactory in terms of 
all 
three features: 

\smallskip
\noindent{\textbf{Related work.}}~~Location privacy has been 
widely studied 
(\cite{Chatzikokolakis_FTPS17,privacy_for_LBS,Krumm_PUC09,Primault_CST19} presents related surveys) 
and 
synthesizing location traces is 
promising in terms of 
geo-data analysis and 
providing a dataset, 
as explained above. 
%
Although 
location synthesizers have been 
widely studied 
for over a decade,
Bindschaedler and Shokri \cite{Bindschaedler_SP16} showed 
that 
most of them 
(e.g., \cite{Chow_WPES09,Kato_SIGSPATIAL12,Kido_ICPS05,Suzuki_GIS10,You_MDM07}) 
do not satisfactorily preserve statistical features (especially, \textit{semantic features} of human mobility, e.g., ``many people spend night at home''), 
and 
do not provide high utility. 

A 
synthetic location traces generator 
in \cite{Bindschaedler_SP16} 
(denoted by \textsyn{SGLT}) 
is 
a state-of-the-art 
location synthesizer. 
\textsyn{SGLT} first trains \textit{semantic clusters} 
by grouping semantically similar locations 
(e.g., homes, offices, 
and 
malls) 
based on 
training traces. 
Then it generates a synthetic trace from a training trace 
by replacing each location with all locations in the same cluster and then sampling a trace via the Viterbi algorithm. 
Bindschaedler and Shokri 
\cite{Bindschaedler_SP16} showed that 
\textsyn{SGLT} 
preserves semantic features 
explained above 
and therefore 
provides high utility. 

However, \textsyn{SGLT} 
presents issues of 
scalability, which is 
crucially important 
for spatial big data analysis. 
Specifically, 
the running time of semantic clustering in \textsyn{SGLT} is quadratic in the number of training users and cubic in the number of locations. 
Consequently, \textsyn{SGLT} cannot be used for generating large-scale traces. 
For example, 
we show that 
when the numbers of users and locations are about $200000$ and $1000$, respectively, 
\textsyn{SGLT} would require over four years 
to execute 
even by using $1000$ nodes of a supercomputer in parallel. 

Bindschaedler \textit{et al.} \cite{Bindschaedler_VLDB17} proposed a synthetic data generator (denoted by \textsyn{SGD}) 
for any kind of data using a dependency graph. 
However, \textsyn{SGD} was not applied to location traces, and its effectiveness for traces was unclear. 
We apply \textsyn{SGD} to location traces, and 
show 
that 
it 
cannot preserve 
cluster-specific 
features (hence cannot provide high utility) while keeping high privacy. 
Similarly, 
the location synthesizers in \cite{Chen_CCS12,Chen_KDD12,He_VLDB15} 
generate traces only based on parameters common to all users, and hence 
do not preserve 
cluster-specific 
features. 
%

\smallskip
\noindent{\textbf{Our contributions.}}~~In this paper, 
we propose a novel location synthesizer 
called 
\textit{PPMTF 
(Privacy-Preserving Multiple Tensor Factorization)}, which 
has high utility, privacy, and scalability. 
%
Our contributions are as follows:

\begin{itemize}
\item We propose 
PPMTF 
for synthesizing traces. 
PPMTF 
models 
statistical features of training traces, 
including 
cluster-specific 
features, 
by two tensors: a \textit{transition-count tensor} 
and \textit{visit-count tensor}. 
The transition-count tensor 
includes 
a transition matrix for 
each user, 
and 
the visit-count tensor 
includes a time-dependent histogram of visited locations for each user. 
PPMTF 
simultaneously factorizes 
the two tensors 
via MTF (Multiple Tensor Factorization) \cite{Khan_ECMLPKDD14,Takeuchi_ICDM13}, 
and trains 
factor matrices 
(parameters in our generative model) 
via posterior sampling \cite{Wang_ICML15}. 
Then it synthesizes traces 
from reconstructed tensors, and 
performs the PD (Plausible Deniability) test  \cite{Bindschaedler_VLDB17} 
to 
protect user privacy. 
Technically, 
this work is the first to 
propose MTF in a privacy preserving 
way, 
to our knowledge. 
\item We 
comprehensively show 
that 
the proposed method (denoted by \textsyn{PPMTF}) 
provides high utility, privacy, and scalability 
(for details, see below). 
\end{itemize}
Regarding 
utility, 
we show that 
\textsyn{PPMTF} preserves all of the following statistical features. 
%


\smallskip
\noindent{\textbf{\textit{(a) Time-dependent population distribution.}}}~~The population distribution (i.e., distribution of visited locations) 
is a key feature to find 
popular POIs \cite{Zheng_WWW09}. 
It can also be used to provide 
information about 
the number of 
visitors at a 
specific POI \cite{Hu_SIGMOD12}. 
The population distribution is inherently time-dependent. 
For example, restaurants have two peak times corresponding to lunch and dinner periods \cite{Ye_KDD11}. 

\smallskip
\noindent{\textbf{\textit{(b) Transition matrix.}}}~~The transition matrix 
is a main feature for modeling 
human movement patterns 
\cite{Liu_CIKM13,Song_TMC06}. 
It is used for predicting the next POI \cite{Song_TMC06} or recommending POIs \cite{Liu_CIKM13}. 

\smallskip
\noindent{\textbf{\textit{(c) Distribution of visit-fractions.}}}~~A distribution of visit-fractions (or visit-counts) is a key feature for semantic annotation of POIs \cite{Do_TMC13,Ye_KDD11}. 
For example, 
\cite{Do_TMC13} 
reports 
that many people spend $60\%$ of the time at their home and $20\%$ of the time at work/school. 
\cite{Ye_KDD11} 
reports 
that most users visit a hotel 
only once, 
whereas 
$5\%$ of users 
visit a restaurant 
more than ten times. 

\smallskip
\noindent{\textbf{\textit{(d) Cluster-specific population distribution.}}}~~At an individual level, 
a location distribution 
differs 
from 
user to user, and forms some clusters; 
e.g., 
those who live in Manhattan, 
those who commute by car, 
and those who often visit malls. 
The population distribution for such a cluster 
is useful 
for modeling human location patterns \cite{Cranshaw_ICWSM12,Lichman_KDD14}, 
road map inference \cite{Biagioni_TRB12,Liu_KDD12}, and 
smart cities 
\cite{Cranshaw_ICWSM12}.

\smallskip
\noindent{We} show that 
\textsyn{SGD} does not 
consider 
cluster-specific 
features in a practical setting 
(similarly, \cite{Chen_CCS12,Chen_KDD12,He_VLDB15} do not preserve cluster-specific features), 
and 
therefore provides neither 
(c) nor (d). 
In contrast, we show that \textsyn{PPMTF} provides all of (a)-(d). 
Moreover, \textsyn{PPMTF} \textit{automatically} finds user clusters in (d); i.e., manual clustering is not necessary. 
Note that user clustering is very challenging because 
it must be done in a privacy preserving manner (otherwise, user clusters may reveal information about users who provide the original traces). 


Regarding privacy, 
there are two possible scenarios about parameters of the generative model: 
(i) the parameters are made public and (ii) the parameters are kept secret (or discarded after synthesizing traces) and only synthetic traces are made public. 
We assume scenario (ii) in the same way as 
\cite{Bindschaedler_SP16}. 
In this scenario, 
\textsyn{PPMTF} provides PD (Plausible Deniability) in \cite{Bindschaedler_VLDB17} 
for a synthetic trace. 
Here we use PD because both \textsyn{SGLT} \cite{Bindschaedler_SP16} and \textsyn{SGD} \cite{Bindschaedler_VLDB17} use PD as a privacy metric 
(and others \cite{Chen_CCS12,Chen_KDD12,He_VLDB15} do not preserve 
cluster-specific features). 
In other words, we can evaluate \textbf{how much \textsyn{PPMTF} advances the state-of-the-art 
in terms of utility and scalability at the same level of privacy}. 
We also empirically show that \textsyn{PPMTF} can prevent  \textit{re-identification (or de-anonymization) attacks} \cite{Shokri_SP11,Gambs_JCSS14,Murakami_TrustCom15} and \textit{membership inference attacks} \cite{Shokri_SP17,Jayaraman_USENIX19} in scenario (ii). 
One limitation is that 
\textsyn{PPMTF} does not guarantee privacy in scenario (i). 
We clarify this issue at the end of Section~\ref{sec:intro}.

Regarding 
scalability, 
for a larger number $|\calU|$ of training users and a larger number $|\calX|$ of locations, 
\textsyn{PPMTF}'s  time complexity $O(|\calU||\calX|^2)$ is much smaller than \textsyn{SGLT}'s complexity $O(|\calU|^2|\calX|^3)$.
Bindschaedler and Shokri \cite{Bindschaedler_SP16} evaluated 
\textsyn{SGLT} 
using 
training traces of only $30$ users. 
In this paper, we use 
the Foursquare dataset in \cite{Yang_WWW19} 
(we use six cities; $448839$ training users in total) 
and show that \textsyn{PPMTF} generates 
the corresponding traces 
within 
$60$ hours 
(about 
$10^6$ 
times faster than \textsyn{SGLT}) 
by using one node of a supercomputer. 
\textsyn{PPMTF} can also deal with traces of a million users. 

In summary, 
\textsyn{PPMTF} is the first to 
provide all of the utility in terms of (a)-(d), 
privacy, and scalability to our knowledge. 
We 
implemented \textsyn{PPMTF} with C++, and 
published it as open-source software
\cite{PPMTF}.
\textsyn{PPMTF} 
was also used as a part of the location synthesizer to provide a dataset for an anonymization competition \cite{PWSCup2019}. 

\smallskip
\noindent{\textbf{Limitations.}}~~Our results would be stronger if user privacy 
was 
protected even when we 
published 
the parameters of the generative model; i.e., scenario (i).
However, 
\textsyn{PPMTF} 
does not guarantee meaningful privacy 
in this scenario. 
Specifically, 
in Appendix~\ref{sec:proof_DP}, 
we use DP (Differential Privacy) \cite{Dwork_ICALP06,DP} as a privacy 
metric 
in scenario (i), and show 
that the privacy budget $\epsilon$ in DP needs to be very large 
to achieve high utility. 
For example, if we consider neighboring data sets that differ in one trace, then $\epsilon$ needs to be larger than $2 \times 10^4$ (which guarantees no meaningful privacy) to achieve high utility. 
Even if we consider neighboring data sets that differ in a single location (rather than one trace), $\epsilon = 45.6$ or more. 
We also explain the reason that a small $\epsilon$ is difficult in 
Appendix~\ref{sec:proof_DP}. 
We leave providing strong privacy guarantees in scenario (i) as future work. 
In Section~\ref{sec:conc}, we also discuss future research directions towards this scenario. 

\section{Preliminaries}
\label{sec:pre}

\begin{figure}[t]
\centering
\includegraphics[width=0.8\linewidth]{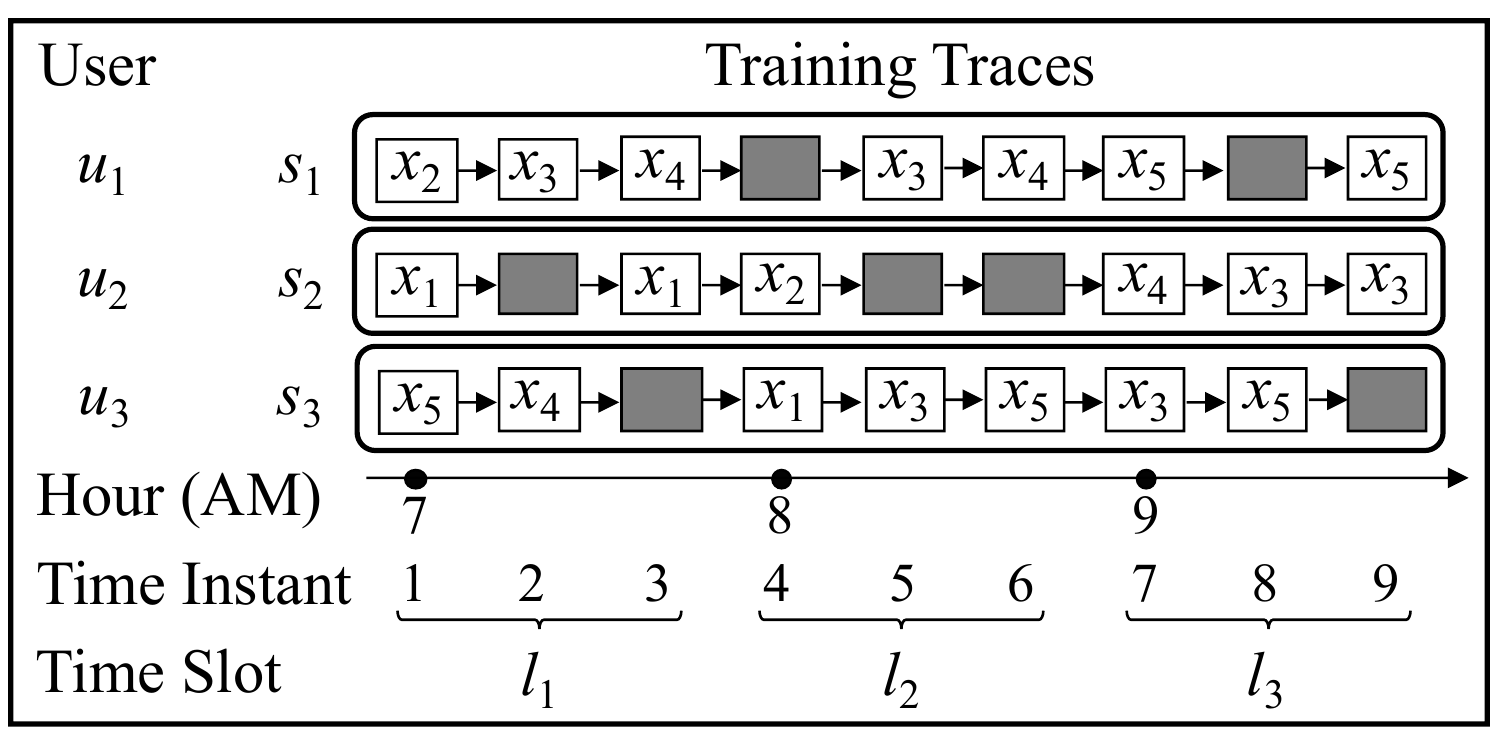}
\caption{Training traces ($|\calU| = 3$, $|\calX| = 5$, $|\calT| = 9$, $|\calL| = 3$). 
Missing events are marked with gray.
}
\label{fig:traces}
\end{figure}

\subsection{Notations}
\label{sub:notations}

Let 
$\nats$, $\nnints$, $\reals$, and $\nngreals$ be the set of 
natural numbers, non-negative integers, real numbers, 
and non-negative real numbers, respectively. 
For $n\in\nats$, let $[n] = \{1, 2, \cdots, n\}$. For a finite set $\calZ$, 
let $\calZ^*$ be the set of all finite sequences of elements of $\calZ$. 
Let $\powerset(\calZ)$ be the power set of $\calZ$. 

We discretize locations by dividing the whole map into distinct regions or 
by 
extracting 
POIs. 
Let 
$\calX$ 
be a finite set of discretized locations (i.e., regions or POIs). 
Let 
$x_i \in \calX$ be the $i$-th location. 
We 
also discretize time 
into \textit{time instants} 
(e.g., 
by rounding down minutes to a multiple of $20$, as in Figure~\ref{fig:traces}), and represent a time instant as a natural number. 
Let $\calT \subset \nats$ be a finite set of time instants under consideration. 

In addition to the time instant, we introduce a \textit{time slot} as 
a time resolution in geo-data analysis; 
e.g., if we want to compute the time-dependent population distribution for every hour, 
then the length of each time slot is one hour.
We represent a time slot as 
a set of time instants. 
Formally, let $\calL \subseteq \powerset(\calT)$ be a finite set of time slots, 
and $l_i \in \calL$ be the $i$-th time slot. 
Figure~\ref{fig:traces} shows an example of time slots, where 
$l_1 = \{1,2,3\}$, $l_2 = \{4,5,6\}$, $l_3 = \{7,8,9\}$, and 
$\calL = \{l_1, l_2, l_3\}$. 
The time slot can 
comprise 
either one time instant 
or multiple time instants (as in Figure~\ref{fig:traces}). 
The time slot can also 
comprise 
separated time instants; 
e.g., 
if we 
set the interval between two time instants to $1$ hour, 
and 
want to average the population distribution 
for every two hours over two days, then 
$l_1 = \{1, 2, 25, 26\}, 
l_2 = \{3, 4, 27, 28\}, 
\cdots, l_{12} = \{23, 24, 47, 48\}$, and 
$\calL = \{l_1, \cdots l_{12}\}$.

Next we formally define 
traces as 
described below. 
We refer to a pair of a location and a time instant as an \textit{event}, and denote the set of all events by $\calE= \calX \times \calT$.
Let 
$\calU$ be a finite set of all training users, and 
$u_n \in \calU$ be the $n$-th training user. 
Then we define each trace as a pair of a user and a finite sequence of events, 
and denote the set of all traces by $\calR = \calU \times \calE^*$.
Each 
trace may be missing some events. 
Without loss of generality, we assume that each training user has provided a single training trace 
(if a user provides multiple temporally-separated traces, we can concatenate them into a single trace by regarding events between the traces as missing). 
Let $\calS \subseteq \calR$ be the finite set of all training traces, and $s_n \in \calS$ be 
the $n$-th training trace (i.e., training trace of $u_n$). 
%
%
%
%
In Figure~\ref{fig:traces}, 
$s_1 = (u_1, (x_2, 1), (x_3, 2), (x_4, 3), (x_3, 5), \allowbreak (x_4, 6), \allowbreak(x_5, 7), (x_5, 9))$ and 
$\calS = \{s_1, s_2, s_3\}$. 



We train parameters of a generative model 
(e.g., semantic clusters in 
\textsyn{SGLT} \cite{Bindschaedler_SP16}, 
factor matrices 
in \textsyn{PPMTF}) 
from training traces, and use the model 
to synthesize a trace. 
Since we want to preserve
cluster-specific features, 
we assume 
a type of generative model 
in \cite{Bindschaedler_SP16,Bindschaedler_VLDB17} 
as described below.
Let $y \in \calR$ be a synthetic trace. 
For $n \in [|\calU|]$, let $\calM_n$ be a generative model of user $u_n$ that outputs a synthetic trace $y \in \calR$ with probability $p(y = \calM_n)$. 
$\calM_n$ is designed so that 
the synthetic trace $y$ (somewhat) resembles the training trace $s_n$ of $u_n$, 
while protecting the privacy of $u_n$. 
Let 
$\calM$ 
be 
a probabilistic generative model 
that, given 
a user index $n \in [|\calU|]$
as input, 
outputs a synthetic trace $y \in \calR$ 
produced 
by $\calM_n$; i.e., 
$p(y = \calM(n)) = p(y = \calM_n)$. 
$\calM$ consists of $\calM_1, \cdots, \calM_{|\calU|}$, and the parameters of $\calM_1, \cdots, \calM_{|\calU|}$ are trained from training traces $\calS$.
A synthetic trace $y$ that resembles $s_n$ too much can violate the privacy of $u_n$, whereas 
it 
preserves a lot of features specific to clusters $u_n$ belongs to. 
Therefore, there is a trade-off between the 
cluster-specific 
features and user privacy. 
In Appendix~\ref{sec:details_SGD}, we show an example of $\calM_n$ in \textsyn{SGD} \cite{Bindschaedler_VLDB17}. 


In Appendix~\ref{sec:notation_table}, we also show 
tables summarizing the basic notations and abbreviations. 

\subsection{Privacy Metric}
\label{sub:privacy_metrics}
We 
explain 
\PD{} (Plausible Deniability) \cite{Bindschaedler_SP16,Bindschaedler_VLDB17} 
as a privacy metric. 
%
The notion of \PD{} 
was originally introduced by 
Bindschaedler and Shokri \cite{Bindschaedler_SP16} 
to quantify how well a 
trace 
$y$ 
synthesized from a
generative model $\calM$ provides privacy for 
an input user $u_n$. 
However, 
\PD{} in \cite{Bindschaedler_SP16} 
was defined using a semantic distance between traces, 
and its 
relation 
with \DP{} was unclear. 
Later, Bindschaedler \textit{et al.} \cite{Bindschaedler_VLDB17} modified \PD{} 
to clarify the 
relation 
between \PD{} and \DP{}. 
%
In this paper, we use 
\PD{} in \cite{Bindschaedler_VLDB17}: 


\begin{definition} [$(k, \eta)$-\PD{}] \label{def:PD} 
Let $k \in \nats$ and $\eta \in \nngreals$. 
For a training trace set $\calS$ with $|\calS| \geq k$, 
a synthetic trace $y \in \calR$ 
output 
by a generative model $\calM$ with 
an input user index $d_1 \in [|\calU|]$ 
is releasable with 
\emph{$(k, \eta)$-\PD{}} 
if 
there exist at least $k-1$ distinct 
training user indexes $d_2, \cdots, d_k \in [|\calU|] \backslash \{d_1\}$ 
such that
for any $i,j \in [k]$,
\begin{align}
\hspace{-2mm} e^{\!-\eta} p(y {=} \calM(d_j)) \leq p(y {=} \calM(d_i)) \leq e^\eta p(y {=} \calM(d_j)).
\label{eq:PD}
\hspace{-1.6mm}
\end{align}
\end{definition}


The intuition behind $(k, \eta)$-\PD{} can be explained as follows. 
Assume that 
user $u_n$ is an input user of the synthetic trace $y$. 
Since $y$ resembles 
the training trace 
$s_n$ 
of $u_n$, it would be natural to 
consider an adversary who attempts to recover $s_n$ (i.e., infer a pair of a user and the whole sequence of events in $s_n$) from $y$. 
This attack is called the \textit{tracking attack}, and is decomposed into two phases: re-identification (or de-anonymization) and de-obfuscation \cite{Shokri_SP11}. 
The adversary first 
uncovers 
the fact that user $u_n$ is an input user 
of 
$y$, via re-identification. Then she infers events of $u_n$ via de-obfuscation. 
$(k, \eta)$-\PD{} 
can prevent 
re-identification 
because it guarantees that 
the input user $u_n$ is indistinguishable from at least $k-1$ other training users. 
Then the tracking attack is prevented even if de-obfuscation is perfectly done. 
A large $k$ and a small $\eta$ are 
desirable for strong privacy. 



$(k, \eta)$-\PD{} can be used to 
alleviate 
the linkage of the input user $u_n$ and the synthetic trace $y$. 
However, $y$ may also leak information about parameters of 
the generative model $\calM_n$ because $y$ is generated using $\calM_n$. 
In Section~\ref{sub:privacy}, we discuss the overall privacy of \textsyn{PPMTF} including this issue in detail.

\section{Privacy-Preserving Multiple Tensor Factorization (PPMTF)}
\label{sec:PPMTF}

We propose 
\textsyn{PPMTF} 
for synthesizing location traces. 
We first 
present an 
overview (Section~\ref{sub:overview}). 
Then we 
explain the computation of two tensors 
(Section~\ref{sub:comp_tensors}), 
the training of 
our generative model 
(Section~\ref{sub:post_smpl}), 
and the synthesis of traces 
(Section~\ref{sub:MH}). 
Finally, we 
introduce the \PD{} (Plausible Deniability) test 
(Section~\ref{sub:privacy}).

\subsection{Overview}
\label{sub:overview}

\noindent{\textbf{Proposed method.}}~~Figure~\ref{fig:overview} 
shows 
an 
overview of \textsyn{PPMTF} 
(we formally define the symbols that newly appear in Figure~\ref{fig:overview} in Sections~\ref{sub:comp_tensors} to \ref{sub:MH}). 
It 
comprises 
the following 
four 
steps. 

\begin{enumerate}
\renewcommand{\theenumi}{(\roman{enumi})}
\item 
We compute a transition-count tensor $\bmR\one$ 
and visit-count tensor $\bmR\two$ from a training trace set $\calS$. 

The transition-count tensor $\bmR\one$ 
comprises 
the ``User,'' ``Location,'' and ``Next Location'' modes. 
Its 
($n,i,j$)-th element 
includes 
a transition-count of user $u_n \in \calU$ from location $x_i \in \calX$ to $x_j \in \calX$. 
In other words, this tensor represents the 
\textit{movement pattern 
of each training user} 
in the form of  
transition-counts. 
%
The visit-count tensor $\bmR\two$ 
comprises 
the ``User,'' ``Location,''  and ``Time Slot'' modes. 
The 
($n,i,j$)-th element 
includes 
a visit-count of user $u_n$ at location $x_i$ in 
time slot $l_j \in \calL$. 
That is, 
this tensor 
includes 
a \textit{histogram of visited locations for each user and each time slot}.
\item 
We 
factorize the two tensors $\bmR\one$ and $\bmR\two$ simultaneously via MTF (Multiple Tensor Factorization) \cite{Khan_ECMLPKDD14,Takeuchi_ICDM13}, 
which factorizes multiple tensors into low-rank matrices called \textit{factor matrices} along each mode (axis). 
In MTF, one tensor shares a factor matrix from the same 
mode with other tensors. 

In our case, we factorize $\bmR\one$ and $\bmR\two$ into factor matrices $\bmA$, $\bmB$, $\bmC$, and $\bmD$, which respectively correspond to the ``User,'' ``Location,'' ``Next Location,'' and ``Time Slot'' mode. 
Here $\bmA$ and $\bmB$ are shared between the two tensors. 
$\bmA$, $\bmB$, $\bmC$, and $\bmD$ are parameters of our generative model, and therefore we call them the \textit{MTF parameters}. 
Let $\The = (\bmA, \bmB, \bmC, \bmD)$ be the 
tuple of 
MTF parameters. 
We train MTF parameters $\The$ from the two tensors via posterior sampling \cite{Wang_ICML15}, which samples $\The$ from its posterior distribution given $\bmR\one$ and $\bmR\two$. 


\begin{figure}
\centering
\includegraphics[width=1.0\linewidth]{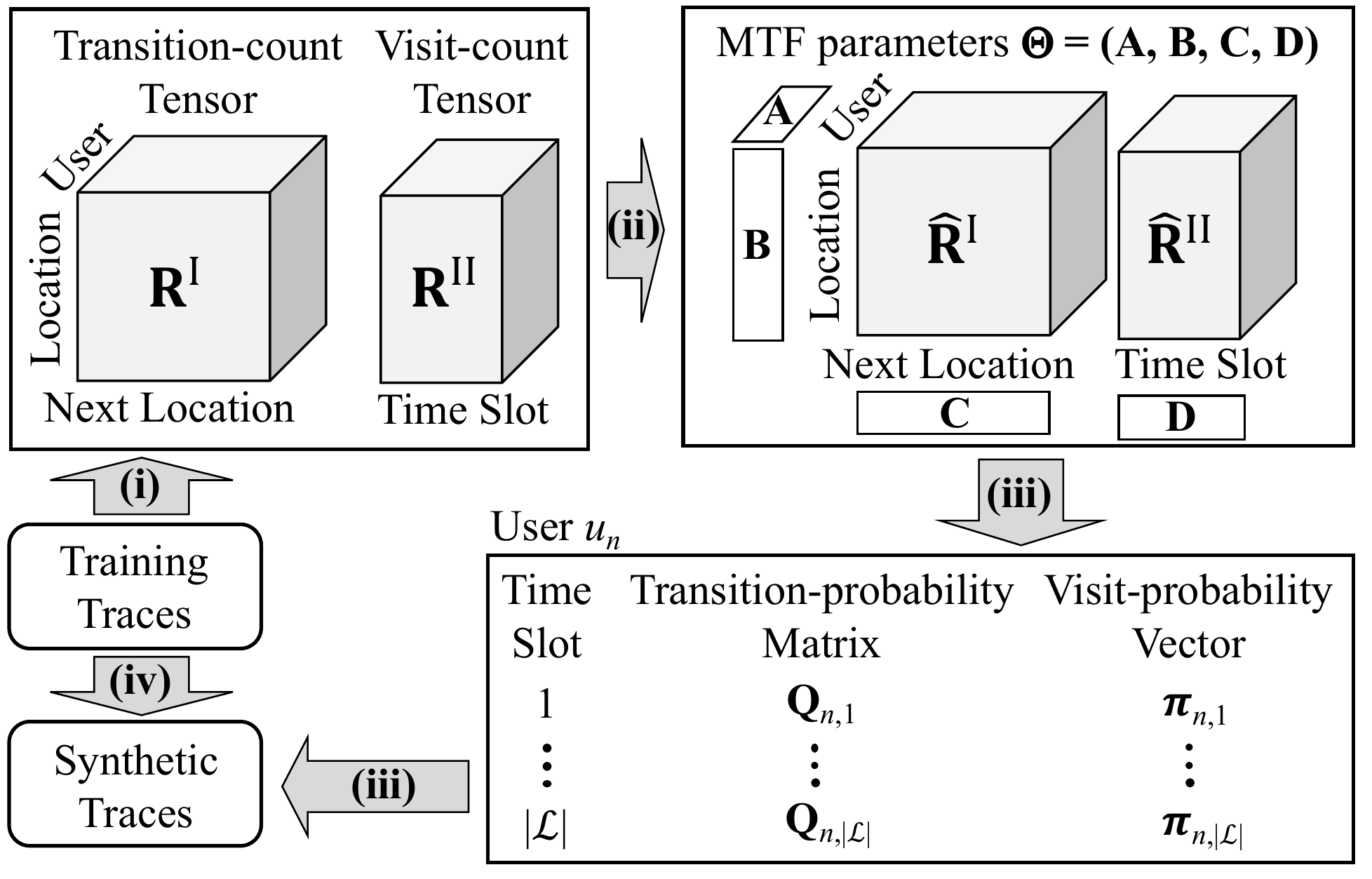}
\caption{Overview of \textsyn{PPMTF} 
with 
the 
following 
four 
steps: 
(i) computing a transition-count tensor and visit-count tensor, 
(ii) training 
MTF parameters 
via posterior sampling, 
(iii) computing a transition-probability matrix 
and visit-probability vector 
via the MH 
algorithm 
and 
synthesizing traces, 
and 
(iv) the PD test.
}
\label{fig:overview}
\end{figure}

\item 
We reconstruct two tensors from $\The$. Then, 
given an input user index $n \in [|\calU|]$, 
we compute 
a transition-probability matrix $\bmQ_{n,i}$
and visit-probability vector $\pi_{n,i}$ of 
user $u_n \in \calU$ for each time slot $l_i \in \calL$. 
We compute them from 
the reconstructed tensors 
via the MH (Metropolis-Hastings) algorithm \cite{mlpp}, which modifies 
the transition matrix 
so that 
$\pi_{n,i}$
is a stationary distribution of 
$\bmQ_{n,i}$. 
Then we 
generate a synthetic trace 
$y \in \calR$ 
by using 
$\bmQ_{n,i}$ 
and 
$\pi_{n,i}$. 
\item Finally, we 
perform the PD test \cite{Bindschaedler_VLDB17}, 
which verifies whether $y$ is releasable with $(k, \eta)$-\PD{}. 
\end{enumerate}


We explain steps (i), (ii), (iii), and (iv) in Sections~\ref{sub:comp_tensors}, \ref{sub:post_smpl}, \ref{sub:MH}, and \ref{sub:privacy}, respectively. 
We also explain how to tune hyperparameters (parameters to control the training process) in \textsyn{PPMTF} in Section~\ref{sub:privacy}. 
Below we explain the utility, privacy, and scalability of \textsyn{PPMTF}.

\smallskip
\noindent{\textbf{Utility.}}~~\textsyn{PPMTF} 
achieves high utility by modeling statistical features of training traces 
using 
two tensors. 
Specifically, the transition-count tensor represents the 
\textit{movement pattern 
of each 
user} 
in the form of transition-counts, 
whereas the visit-count tensor 
includes 
a \textit{histogram of visited locations for each user and time slot}. 
Consequently, 
our 
synthetic traces 
preserve 
a time-dependent population distribution, 
a transition matrix, 
and a distribution of 
visit-counts 
per location; i.e., 
features (a), (b), and (c) in Section~\ref{sec:intro}. 

Furthermore, 
\textsyn{PPMTF} automatically 
finds 
a cluster of users who have similar 
behaviors 
(e.g., those who always stay in Manhattan; 
those who often visit universities) and locations that are semantically similar (e.g., 
restaurants and bars) 
because 
\textit{factor matrices 
in tensor factorization represent clusters}  \cite{Cichocki_Wiley09}. 
Consequently, our synthetic traces preserve the mobility behavior of similar users and the semantics of similar locations. 
They also preserve a cluster-specific population distribution; i.e., feature (d) in Section~\ref{sec:intro}, 

More specifically, 
each column in $\bmA$, $\bmB$, $\bmC$, and $\bmD$ represents a user cluster, location cluster, location cluster, and time cluster, respectively. 
For example, elements with large values in the first column in $\bmB$, $\bmC$, and $\bmD$ may correspond to bars, bars, and night, respectively. 
Then elements with large values in the first column in $\bmA$ represent a cluster of users who go to bars at night. 

In 
Section~\ref{sec:exp}, we 
present visualization of 
some clusters, which can be divided into \textit{geographic clusters} (e.g., north-eastern part of Tokyo) and \textit{semantic clusters} (e.g., trains, malls, universities). 
Semantic annotation of POIs \cite{Do_TMC13,Ye_KDD11} 
can also be used to automatically find what each cluster represents (i.e., semantic annotation of clusters). 


\textsyn{PPMTF} also addresses 
sparseness of the tensors 
by sharing 
$\bmA$ and $\bmB$ 
between the two tensors. 
It is shown in \cite{Takeuchi_ICDM13} that the utility is improved by sharing factor matrices between tensors, especially when one of two tensors is extremely sparse. 
\conference{We also confirmed that the utility is improved by sharing $\bmA$ and $\bmB$.}\arxiv{In Appendix~\ref{sec:PPMTF_share}, we also show that the utility is improved by sharing $\bmA$ and $\bmB$.}

\smallskip
\noindent{\textbf{Privacy.}}~~\textsyn{PPMTF} 
uses the PD test in \cite{Bindschaedler_VLDB17} 
to provide PD for a synthetic trace. 
In our experiments, we show that \textsyn{PPMTF} provides $(k, \eta)$-\PD{} for reasonable $k$ and $\eta$.

We also note that 
a posterior sampling-based Bayesian learning algorithm, 
which produces a sample from a posterior distribution with bounded log-likelihood, 
provides DP without additional noise \cite{Wang_ICML15}. 
Based on this, we sample $\The$ from a posterior distribution 
given $\bmR\one$ and $\bmR\two$ 
to provide DP for $\The$. 
However, the privacy budget $\epsilon$ needs to be very large to achieve high utility in \textsyn{PPMTF}. 
We discuss this issue in 
Appendix~\ref{sec:proof_DP}. 

\smallskip
\noindent{\textbf{Scalability.}}~~Finally, \textsyn{PPMTF} achieves much higher scalability than 
\textsyn{SGLT} \cite{Bindschaedler_SP16}. 
Specifically, 
the time complexity of 
\cite{Bindschaedler_SP16} 
(semantic clustering) 
is $O(|\calU|^2 |\calX|^3 |\calL|)$, 
which is very large for 
training traces with large $|\calU|$ and $|\calX|$. 
On the other hand, 
the time complexity of \textsyn{PPMTF} 
is 
$O(|\calU| |\calX|^2| |\calL|)$ 
(see Appendix~\ref{sec:time_complexity} for details), 
which is much smaller than the 
synthesizer in \cite{Bindschaedler_SP16}. 
In 
our experiments, 
we evaluate the 
run time 
and show that our method 
is applicable 
to much larger-scale training datasets than 
\textsyn{SGLT}. 

\subsection{Computation of Two Tensors}
\label{sub:comp_tensors}
We 
next 
explain 
details of 
how to compute two tensors from a training trace set $\calS$ (i.e., step (i)). 

\smallskip
\noindent{\textbf{Two tensors.}}~~Figure~\ref{fig:tensors} 
presents 
an example of the two tensors computed from the training traces in Figure~\ref{fig:traces}. 

\begin{figure}
\centering
\includegraphics[width=0.8\linewidth]{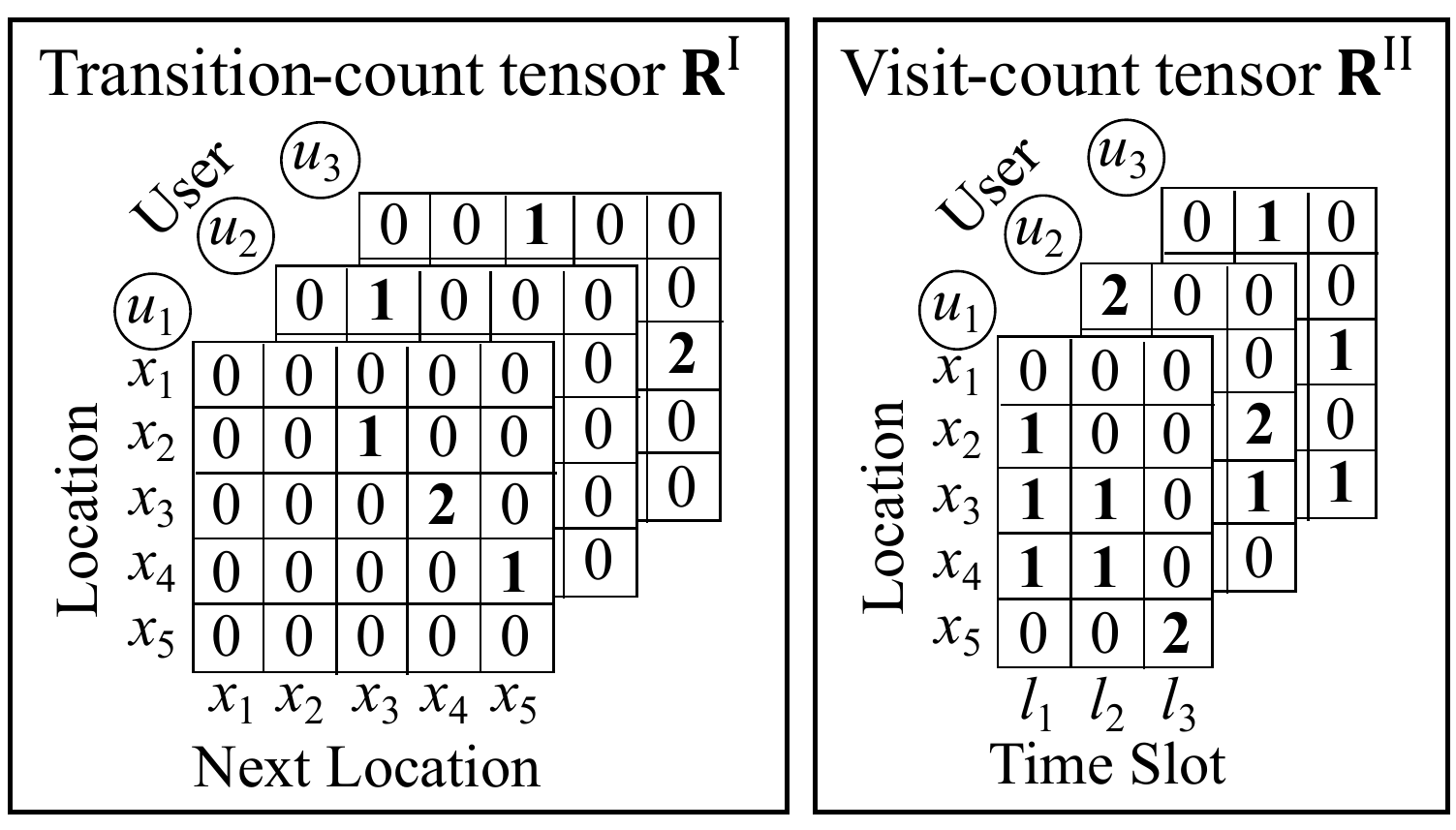}
\caption{Two tensors obtained from the training traces in Figure~\ref{fig:traces}.} 
\label{fig:tensors}
\end{figure}

The transition-count tensor 
includes 
a transition-count matrix for each user. 
Let $\bmR\one \in \nnints^{|\calU| \times |\calX| \times |\calX|}$ be the transition-count tensor, and $r_{n,i,j}\one \in \nnints$ be its ($n,i,j$)-th element. 
For example, 
$r_{1,3,4}\one = 2$ in Figure~\ref{fig:tensors} 
because 
two transitions from $x_3$ to $x_4$ are observed in 
$s_1$ of $u_1$ in Figure~\ref{fig:traces}. 
The visit-count tensor 
includes 
a histogram of visited locations for each user and each time slot. 
%
Let $\bmR\two \in \nnints^{|\calU| \times |\calX| \times |\calL|}$ be the visit-count tensor, and $r_{n,i,j}\two \in \nnints$ be its ($n,i,j$)-th element. 
For example, 
$r_{1,5,3}\two = 2$ in Figure~\ref{fig:tensors} 
because 
$u_1$ visits $x_5$ twice in $l_3$ 
(i.e., from time instant $7$ to $9$) 
in Figure~\ref{fig:traces}. 

Let $\bmR = (\bmR\one, \bmR\two)$. 
Typically, 
$\bmR\one$ and $\bmR\two$ are sparse; 
i.e., many elements are zeros. 
In particular, 
$\bmR\one$ can be extremely sparse 
because 
its size $|\bmR\one|$ is quadratic in~$|\calX|$. 


\smallskip
\noindent{\textbf{Trimming.}}~~For both tensors, we randomly delete positive elements of users who have provided much more positive elements than the average (i.e., outliers) in the same way as \cite{Liu_RecSys15}. This is called \textit{trimming}, and is effective for matrix completion \cite{Keshavan_NIPS09}. 
The trimming is also used to 
bound the log-likelihood 
in the posterior sampling method \cite{Liu_RecSys15} 
(we also show in Appendix~\ref{sec:proof_DP} that the log-likelihood is bounded by the trimming). 
Similarly, we 
set the maximum value of counts for each element, and truncate counts that exceed the maximum number. 

Specifically, let $\lambda\one, \lambda\two \in \nats$ 
respectively represent 
the maximum numbers of positive elements per user 
in $\bmR\one$ and $\bmR\two$. 
Typically, $\lambda\one \ll |\calX| \times |\calX|$ and 
$\lambda\two \ll |\calX| \times |\calL|$. 
For each user, 
if the number of positive elements in $\bmR\one$ exceeds $\lambda\one$, 
then we randomly select $\lambda\one$ elements from all positive elements, and delete the remaining positive elements. 
Similarly, we randomly delete extra positive elements in $\lambda\two$. 
In addition, let $r_{max}\one, r_{max}\two \in \nats$ 
be 
the maximum counts for each element 
in $\bmR\one$ and $\bmR\two$, respectively. 
For each element, 
we truncate $r_{n,i,j}\one$ to $r_{max}\one$ if $r_{n,i,j}\one > r_{max}\one$
(resp. $r_{n,i,j}\two$ to $r_{max}\two$ if $r_{n,i,j}\two > r_{max}\two$).
%

In our experiments, we set 
$\lambda\one = \lambda\two = 10^2$ (as in \cite{Liu_RecSys15}) 
and $r_{max}\one = r_{max}\two = 10$ 
because 
the number of positive elements per user and the value of counts were 
respectively 
less than $100$ and $10$ 
in most cases. 
In other words, the utility 
does not change 
much by increasing the values of $\lambda\one$, $\lambda\two$, $r_{max}\one$, and $r_{max}\two$. 
We also confirmed that much smaller values (e.g., 
$\lambda\one = \lambda\two = r_{max}\one = r_{max}\two = 1$) 
result in a significant loss of utility.

\subsection{Training MTF Parameters}
\label{sub:post_smpl}
After computing 
$\bmR = (\bmR\one, \bmR\two)$, we train 
the MTF parameters $\The = (\bmA, \bmB, \bmC, \bmD)$ 
via posterior sampling (i.e., step (ii)). 
Below we describe our MTF model 
and the training of $\The$.

\smallskip
\noindent{\textbf{Model.}}~~Let $z \in \nats$ be 
the number of columns (factors) in each factor matrix. 
Let 
$\bmA \in \reals^{|\calU| \times z}$, 
$\bmB \in \reals^{|\calX| \times z}$, 
$\bmC \in \reals^{|\calX| \times z}$, 
and 
$\bmD \in \reals^{|\calL| \times z}$ be the factor matrices.
Typically, the number of columns is much smaller than the numbers of users and locations; 
i.e., 
$z \ll \min\{|\calU|, |\calX|\}$. 
In our experiments, we set $z=16$ as in \cite{Murakami_TIFS16} 
(we also changed the number $z$ of factors from $16$ to $32$ and confirmed that the utility was not changed much). 

Let 
$a_{i,k}, b_{i,k}, c_{i,k}, d_{i,k} \in \reals$ 
be 
the $(i,k)$-th elements of $\bmA$, $\bmB$, $\bmC$, and $\bmD$, respectively. 
In addition, let 
$\hbmR\one \in \reals^{|\calU| \times |\calX| \times |\calX|}$ and $\hbmR\two \in \reals^{|\calU| \times |\calX| \times |\calL|}$ 
respectively represent 
two tensors that can be reconstructed from $\The$. Specifically, let 
$\hat{r}_{n,i,j}\one \in \reals$ and $\hat{r}_{n,i,j}\two \in \reals$ be 
the 
($n,i,j$)-th elements of 
$\hbmR\one$ and $\hbmR\two$, respectively. 
%
Then $\hbmR\one$ and $\hbmR\two$ are given by:
\begin{align}
\hspace{-2mm}
\hat{r}_{n,i,j}\one = \sum_{k\in[z]} a_{n,k} b_{i,k} c_{j,k},~ 
\hat{r}_{n,i,j}\two = \sum_{k\in[z]} a_{n,k} b_{i,k} d_{j,k},
\label{eq:hr_one+two}
\end{align}
where 
$\bmA$ and $\bmB$ are shared 
between 
$\hbmR\one$ and $\hbmR\two$. 

For 
MTF parameters 
$\The$, we 
use 
a hierarchical Bayes model \cite{Salakhutdinov_ICML08} 
because 
it outperforms the non-hierarchical 
one \cite{Salakhutdinov_NIPS07} in terms of the model's predictive accuracy. 
Specifically, we use a hierarchical Bayes model shown in Figure~\ref{fig:graphical}. Below we explain this model in detail. 

For the conditional distribution 
$p(\bmR | \The)$ of the two tensors $\bmR = (\bmR\one, \bmR\two)$ given the MTF parameters 
$\The = (\bmA, \bmB, \bmC, \bmD)$, we assume that 
each element $r_{n,i,j}\one$ (resp.~$r_{n,i,j}\two$) is independently generated from a normal distribution with mean $\hat{r}_{n,i,j}\one$ (resp.~$\hat{r}_{n,i,j}\two$) and precision 
(reciprocal of the variance) 
$\alpha \in \nngreals$. 
%
In our experiments, we set $\alpha$ to various values 
from $10^{-6}$ to $10^3$.

\begin{figure}
\centering
\includegraphics[width=1.0\linewidth]{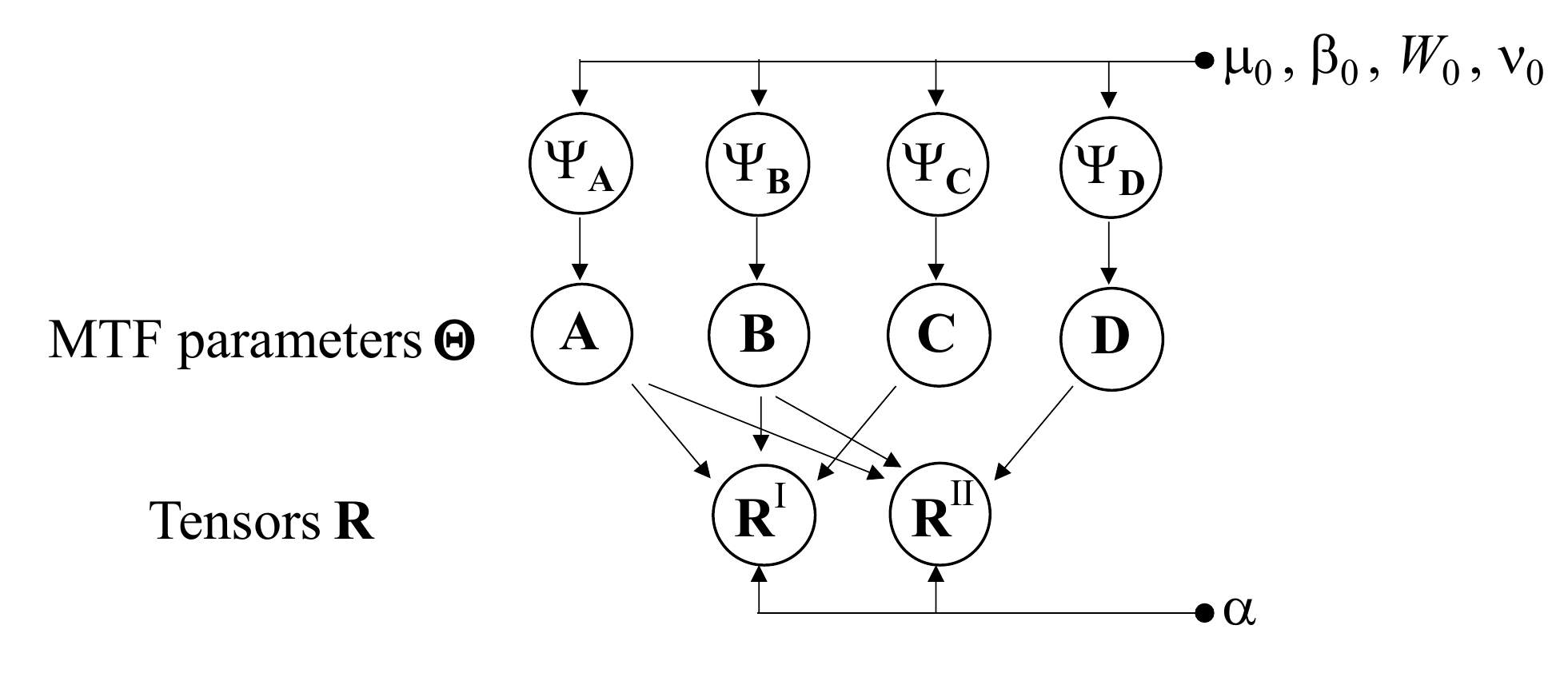}
\caption{Graphical model of \textsyn{PPMTF}.}
\label{fig:graphical}
\end{figure}

Here we 
randomly 
select a small number of zero elements in $\bmR$ 
to improve the scalability in the same way as \cite{recommender_systems,Pan_ICDM08}. 
%
Specifically, we randomly 
select 
$\rho\one \in \nats$ and 
$\rho\two \in \nats$ 
zero elements for each user in 
$\bmR\one$ and $\bmR\two$, respectively, 
where 
$\rho\one \ll |\calX| \times |\calX|$ and 
$\rho\two \ll |\calX| \times |\calL|$ 
(in our experiments, we set $\rho\one = \rho\two = 10^3$). 
We 
treat 
the remaining zero elements as missing. 
Let $I_{n,i,j}\one$ (resp.~$I_{n,i,j}\two$) be the indicator function that takes $0$ if 
$r_{n,i,j}\one$ (resp.~$r_{n,i,j}\two$) is missing, and takes $1$ otherwise. 
Note that 
$I_{n,i,j}\one$ (resp.~$I_{n,i,j}\two$) takes 1 at most $\lambda\one + \rho\one$ 
(resp.~$\lambda\two + \rho\two$) elements for each user, 
where $\lambda\one$ (resp. $\lambda\two$) is the maximum number of positive elements per user 
in $\bmR\one$ (resp. $\bmR\two$).

Then the distribution 
$p(\bmR | \The)$ can be written as: 
\begin{align}
\mathbin{\phantom{=}} p(\bmR | \The) 
&= 
p(\bmR\one | \bmA, \bmB , \bmC) p(\bmR\two | \bmA, \bmB, \bmD) \nonumber\\
&= 
\mbox{\small$\displaystyle\prod_{n, i, j}$}
[\calN(r_{n,i,j}\one | \hat{r}_{n,i,j}\one, \alpha\inv)]^{I_{n,i,j}\one} \nonumber\\
&\hspace{4.5mm} \cdot 
\mbox{\small$\displaystyle\prod_{n, i, j}$}
[\calN(r_{n,i,j}\two | \hat{r}_{n,i,j}\two, \alpha\inv)]^{I_{n,i,j}\two},
\label{eq:p_bmR_The}
\end{align}
where $\calN(r|\mu, \alpha\inv)$ denotes 
the probability of $r$ in 
the normal distribution with mean $\mu$ and precision $\alpha$ (i.e., variance $\alpha\inv$). 

Let $\bma_i, \bmb_i, \bmc_i, \bmd_i \in \reals^z$ 
be the $i$-th rows of $\bmA$, $\bmB$, $\bmC$, and $\bmD$, respectively. 
For a distribution of 
$\The = (\bmA, \bmB, \allowbreak\bmC, \bmD)$, 
we assume the multivariate normal distribution: 
\begin{align*}
p(\bmA | \PsiA) &= \textstyle\prod_{n} \calN(\bma_n | \muA, \LamA\inv)
\\
p(\bmB | \PsiB) &= \textstyle\prod_{n} \calN(\bmb_n | \muB, \LamB\inv)
\\
p(\bmC | \PsiC) &= \textstyle\prod_{n} \calN(\bmc_n | \muC, \LamC\inv)
\\
p(\bmD | \PsiD) &= \textstyle\prod_{n} \calN(\bmd_n | \muD, \LamD\inv),
\end{align*}
where 
$\muA,$ $\muB$, $\muC$, $\muD \in \reals^z$ are mean vectors, 
$\LamA$, $\LamB$, $\LamC$, $\LamD \in \reals^{z \times z}$ are precision matrices, and 
$\PsiA = (\muA, \LamA)$, $\PsiB = (\muB, \LamB)$, $\PsiC = (\muC, \LamC)$, $\PsiD = (\muD, \LamD)$. 

The hierarchical Bayes model assumes 
a distribution for each of $\PsiA$, $\PsiB$, $\PsiC$, and $\PsiD$, which is called a \textit{hyperprior}. 
We assume 
$\PsiZ\in \{ \PsiA, \PsiB, \PsiC, \PsiD \}$ 
follows a normal-Wishart distribution \cite{prml}, i.e., the conjugate prior of a multivariate normal distribution:
\begin{align}
p(\PsiZ) 
&= p(\muZ | \LamZ) p(\LamZ) \nonumber\\
&= \calN(\muZ | \mu_0, (\beta_0 \LamZ)\inv) \calW(\LamZ | W_0, \nu_0),
\end{align}
where $\mu_0 \in \reals^z$, $\beta_0 \in \reals$, and $\calW(\Lam | W_0, \nu_0)$ denotes 
the probability of $\Lam \in \reals^{z \times z}$ in the Wishart distribution with parameters $W_0 \in \reals^{z \times z}$ and $\nu_0 \in \reals$ 
($W_0$ and $\nu_0$ 
represent 
the 
scale matrix and the number of degrees of freedom, respectively). 
$\mu_0$, $\beta_0$, $W_0$, and $\nu_0$ are 
parameters of the hyperpriors, and 
are determined in advance. 
In our experiments, we set $\mu_0 = 0$, $\beta_0 = 2$, 
$\nu_0 = z$, 
and 
$W_0$ to the identity matrix, 
in the same way as \cite{Salakhutdinov_ICML08}.

\smallskip
\noindent{\textbf{Posterior sampling of $\The$.}}~~We train 
$\The$ based on the posterior sampling method \cite{Wang_ICML15}. 
This method trains $\The$ from $\bmR$ by sampling $\The$ from the posterior distribution $p(\The | \bmR)$. 
To sample $\The$ from $p(\The | \bmR)$, we use 
Gibbs sampling \cite{mlpp}, which samples each variable in turn, conditioned on the current values of the other variables. 

Specifically, we sample $\PsiA$, $\PsiB$, $\PsiC$, $\PsiD$, $\bmA$, $\bmB$, $\bmC$, and $\bmD$ in turn. 
We add 
superscript ``(t)'' to these variables to denote the sampled values at the $t$-th iteration. For initial values with ``(0)'', we use a random initialization method \cite{Albright_SAS14} that initializes each element as a random number in $[0,1]$ 
because 
it is widely used. 
Then, 
we sample 
$\PsiA\tth$, 
$\PsiB\tth$, 
$\PsiC\tth$, 
$\PsiD\tth$, 
$\bmA\tth$, 
$\bmB\tth$, 
$\bmC\tth$, and 
$\bmD\tth$ from the conditional distribution given the current values of the other variables, and iterate the sampling for a fixed number of times 
(\conference{we omit the details of the sampling algorithm for lack of space}\arxiv{see Appendix~\ref{sec:details_gibbs} for details of the sampling algorithm}). 


Gibbs sampling guarantees that 
the sampling distributions of 
$\bmA\tth, \cdots, \bmD\tth$ 
approach 
the posterior distributions $p(\bmA|\bmR), \allowbreak\cdots, p(\bmD|\bmR)$ as $t$ increases. 
Therefore, 
$\The\tth =(\bmA\tth, \allowbreak\bmB\tth, \allowbreak\bmC\tth, \bmD\tth)$ 
approximates $\The$ sampled from the posterior distribution $p(\The | \bmR)$ for large $t$. 
In our experiments, 
we discarded the first $99$ samples as ``burn-in'', and used $\The^{(100)}$ 
as an approximation of $\The$.
We also confirmed that the model's predictive accuracy 
converged within $100$ iterations. 

\subsection{Generating Traces via MH}
\label{sub:MH}
After training 
$\The = (\bmA, \bmB, \bmC, \bmD)$, 
we generate 
synthetic traces 
via the MH (Metropolis-Hastings) algorithm \cite{mlpp} 
(i.e., step (iii)). 
Specifically, 
given 
an input user index $n \in [|\calU|]$, 
we 
generate a synthetic trace $y \in \calR$ that resembles $s_n$ of user $u_n \in \calU$ 
from $(\bma_n, \bmB, \bmC, \bmD)$. 
In other words, the parameters of the generative model $\calM_n$ of user $u_n$ are $(\bma_n, \bmB, \bmC, \bmD)$. 

Let $\calQ$ be the set of $|\calX| \times |\calX|$ transition-probability matrices, and $\calC$ be the set of $|\calX|$-dimensional probability vectors (i.e., probability simplex). 
Given a transition-probability matrix $\bmQ \in \calQ$ and a probability vector $\pi \in \calC$, the MH algorithm modifies $\bmQ$ to $\bmQ' \in \calQ$ so that the stationary distribution of $\bmQ'$ is equal to $\pi$. 
$\bmQ$ is a conditional distribution called a \textit{proposal distribution}, 
and 
$\pi$ is called a \textit{target distribution}. 

\begin{figure}
\centering
\includegraphics[width=1.0\linewidth]{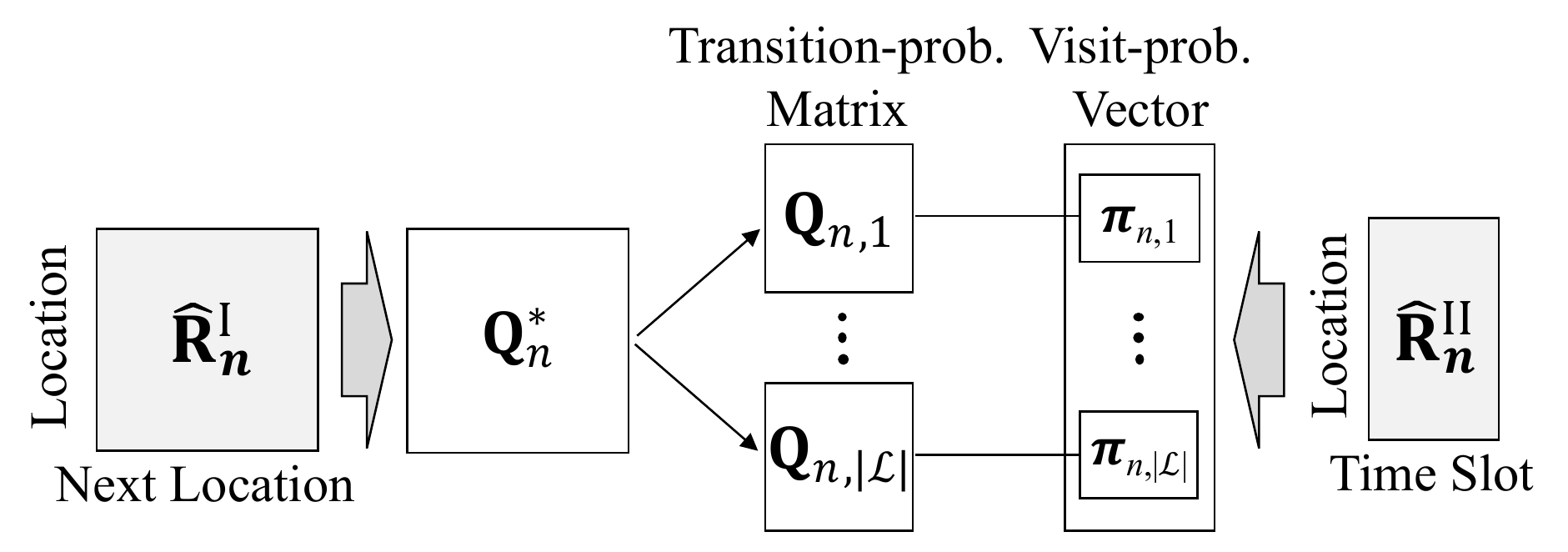}
\caption{
Computation of $(\bmQ_{n,i}, \pi_{n,i})$ via MH. 
We compute $\bmQ_n^*$ from $\hbmR_n\one$, and $\pi_{n,i}$ from $\hbmR_n\two$. Then for each time slot $l_i \in \calL$, we modify $\bmQ_n^*$ to $\bmQ_{n,i}$ whose stationary distribution is $\pi_{n,i}$.}
\label{fig:MH}
\end{figure}

In step (iii), given 
the input user index $n \in [|\calU|]$, 
we reconstruct the transition-count matrix and visit-count matrix of user $u_n$, 
and 
use the MH algorithm to 
\textit{make a transition-probability matrix of 
$u_n$ consistent with a visit-probability vector of $u_n$ for each time slot}. 
Figure~\ref{fig:MH} shows 
its overview. 
Specifically, 
let $\hbmR_n\one \in \reals^{|\calX| \times |\calX|}$ and 
$\hbmR_n\two \in \reals^{|\calX| \times |\calL|}$ be 
the $n$-th matrices in $\hbmR\one$ and $\hbmR\two$, respectively (i.e., reconstructed transition-count matrix and visit-count matrix of user $u_n$). 
We first compute $\hbmR_n\one$ and $\hbmR_n\two$ from 
$(\bma_n, \bmB, \bmC, \bmD)$ 
by (\ref{eq:hr_one+two}). 
Then we compute a transition-probability matrix $\bmQ_n^* \in \calQ$ 
of user $u_n$ from $\hbmR_n\one$ by normalizing counts to probabilities. 
Similarly, we compute a visit-probability vector $\pi_{n,i} \in \calC$ of user $u_n$ for each time slot $l_i \in \calL$ from $\hbmR_n\two$ by normalizing counts to probabilities. 
Then, for each time slot $l_i \in \calL$, 
we modify $\bmQ_n^*$ to $\bmQ_{n,i} \in \calQ$ via the MH algorithm so that \textit{the stationary distribution of $\bmQ_{n,i}$ is equal to $\pi_{n,i}$}. 
Then 
we generate a synthetic trace 
using $(\bmQ_{n,i}, \pi_{n,i})$. 

Below we explain 
step (iii) 
in more detail.

\smallskip
\noindent{\textbf{Computing $(\bmQ_{n,i}, \pi_{n,i})$ via MH.}}~~We first compute the $n$-th matrix $\hbmR_n\one \in \reals^{|\calX| \times |\calX|}$ in $\hbmR\one$ from $\The$ 
by (\ref{eq:hr_one+two}). 
Then we 
compute $\bmQ_n^* \in \calQ$ from $\hbmR_n\one$ 
by normalizing counts to probabilities as 
explained below. 
We assign a very small positive value 
$\phi \in \nngreals$ 
($\phi=10^{-8}$ in our experiments) 
to 
elements in $\hbmR_n\one$ 
with values 
smaller than 
$\phi$. 
Then we 
normalize $\hbmR_n\one$ to $\bmQ_n^*$ 
so that the sum over each row in $\bmQ_n^*$ is $1$. 
Since 
we assign 
$\phi$ 
($=10^{-8}$) to elements with smaller values in $\hbmR_n\one$, 
the transition-probability matrix $\bmQ_n^*$ is \textit{regular} \cite{mlpp}; 
i.e., it is possible to get from any location to any location in one step. 
This allows $\pi_{n,i}$ to be the stationary distribution of $\bmQ_{n,i}$, 
as explained later in detail. 

We then compute the $n$-th matrix $\hbmR_n\two \in \reals^{|\calX| \times |\calL|}$ in $\hbmR\two$ from $\The$ by 
(\ref{eq:hr_one+two}). 
For each time slot $l_i \in \calL$, we assign 
$\phi$ 
($=10^{-8}$) to elements with smaller values in $\hbmR_n\two$. 
Then we 
normalize the $i$-th column of $\hbmR_n\two$ to $\pi_{n,i} \in \calC$ so that the sum of $\pi_{n,i}$ is one. 

%
%

We use $\bmQ_n^*$ as a proposal distribution and $\pi_{n,i}$ as a target distribution, 
and apply the MH algorithm to obtain 
a transition-probability matrix $\bmQ_{n,i}$ whose stationary distribution is $\pi_{n,i}$. 
For $\bmQ \in \calQ$ and $a,b \in [|\calX|]$, 
we denote by 
$\bmQ(x_b|x_a) \in [0,1]$ 
the transition probability from $x_a \in \calX$ to $x_b \in \calX$ (i.e., the $(a,b)$-th element of $\bmQ$). 
Similarly, given $\pi \in \calC$, 
we denote by 
$\pi(x_a) \in [0,1]$ 
the visit probability at $x_a \in \calX$. 
Then, the MH algorithm computes $\bmQ_{n,i}(x_b|x_a)$ for $x_a \neq x_b$ 
as follows:
\begin{align}
\bmQ_{n,i}(x_b | x_a) = 
\bmQ_n^*(x_b | x_a) \min \big(1,{\textstyle\frac{\pi_{n,i}(x_b)\bmQ_n^*(x_a|x_b)}{\pi_{n,i}(x_a)\bmQ_n^*(x_b|x_a)}} \big),
\label{eq:bmQ_ni}
\end{align}
and 
computes $\bmQ_{n,i}(x_a|x_a)$ as follows:  $\bmQ_{n,i}(x_a|x_a) = 1 - \sum_{b \neq a} \bmQ_{n,i}(x_b | x_a)$. 
Note that $\bmQ_{n,i}$ is \textit{regular} 
because 
all elements in $\bmQ_n^*$ and $\pi_{n,i}$ are positive. 
Then the MH algorithm guarantees that $\pi_{n,i}$ is a stationary distribution of $\bmQ_{n,i}$ 
\cite{mlpp}.

%

\begin{figure}
\centering
\includegraphics[width=0.8\linewidth]{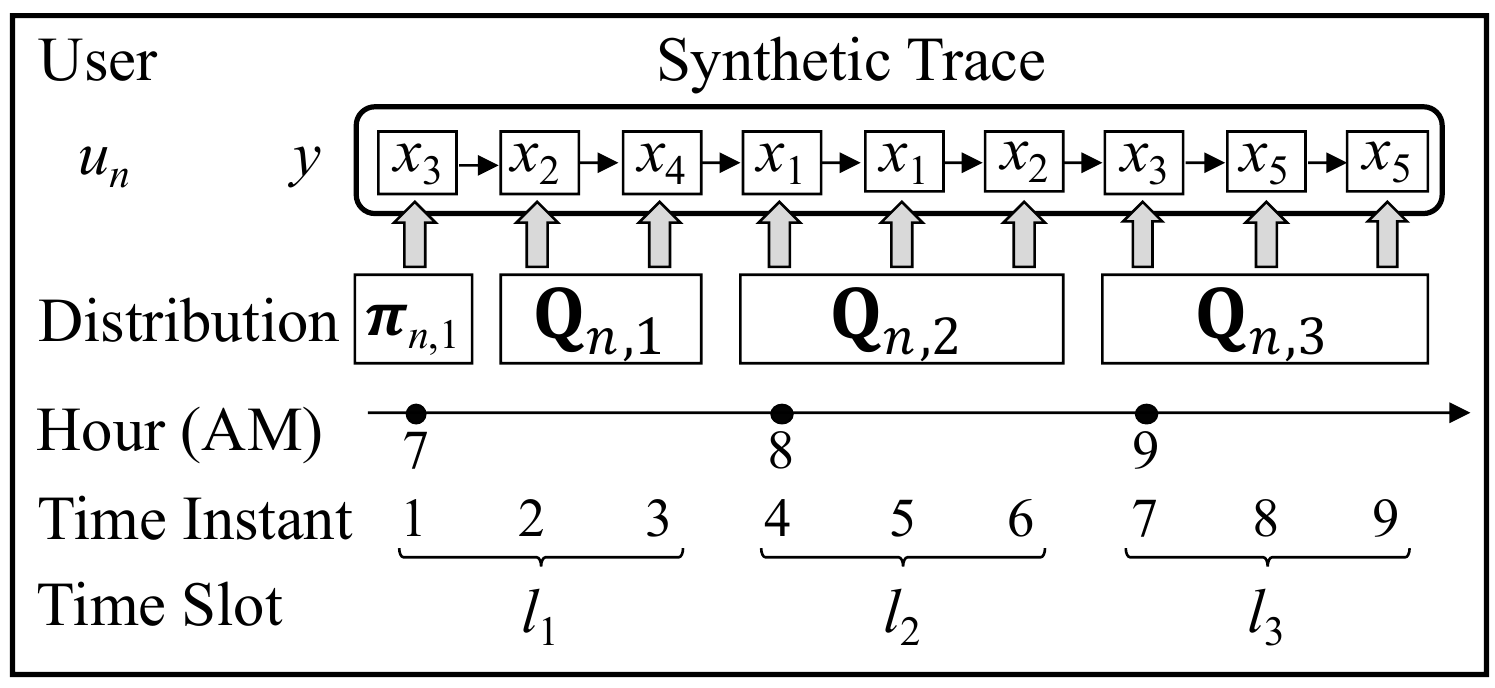}
\caption{Generation of a synthetic trace ($|\calX| = 5$, $|\calT| = 9$, $|\calL| = 3$). 
Each 
location is randomly generated from a distribution in the same time slot.}
\label{fig:synthesize}
\end{figure}

\smallskip
\noindent{\textbf{Generating traces.}}~~After computing $(\bmQ_{n,i}, \pi_{n,i})$ via the MH algorithm, 
we synthesize a trace $y \in \calR$ of user $u_n$ as follows. 
We 
randomly generate the first location in time slot $l_1$ from the visit-probability distribution $\pi_{n,1}$. Then we randomly generate the subsequent location in time slot $l_i$ using the transition-probability matrix $\bmQ_{n,i}$. 
Figure~\ref{fig:synthesize} shows an example of synthesizing a trace $y$ of user $u_n$. 
In this example, 
a location at time instant $7$ is randomly generated from the conditional distribution $\bmQ_{n,3}$ given the location $x_2$ at time instant $6$. 

The synthetic trace $y$ is generated 
in such a way 
that 
a visit probability in time slot $l_i$ is given by $\pi_{n,i}$. 
In addition, the transition matrix is computed by using $\bmQ_n^*$ 
as a proposal distribution. 
Therefore, we can 
synthesize 
traces that 
preserve the statistical feature of training traces such as 
the time-dependent population distribution and the transition matrix.


\subsection{Privacy Protection}
\label{sub:privacy}
We finally perform the PD test for a synthetic trace $y$.


Let 
%
$\calMPro$ 
be our generative model in 
step (iii) 
that, given an input user index $n \in [|\calU|]$, 
outputs 
a synthetic trace $y \in \calR$ with probability 
$p(y = \calMPro(n))$. 
Let $\sigma: \calT \rightarrow \calX$ be a function that, given time instant $t \in \calT$, outputs an index of the location at time instant $t$ in $y$; e.g., $\sigma(1) = 3, \sigma(2) = 2, \cdots, \sigma(9) = 5$ in Figure~\ref{fig:synthesize}. 
Furthermore, let $\omega: \calT \rightarrow \calL$ be a function that, given time instant $t \in \calT$, outputs an index of the corresponding time slot; e.g., $\omega(1) = \omega(2) = \omega(3) = 1, \cdots, \omega(7) = \omega(8) = \omega(9) = 3$ in Figure~\ref{fig:synthesize}. 

Recall that the first location in $y$ is randomly generated from $\pi_{n,1}$, and 
the subsequent location 
at time instant $t \in \calT$ is randomly generated from $\bmQ_{n,\omega(t)}$.
%
Then, 
\begin{align}
&p(y = \calMPro(n)) \nonumber\\
&= \pi_{n,1}(x_{\sigma(1)}) 
\textstyle\prod_{t=2}^{|\calT|} \bmQ_{n,\omega(t)}(x_{\sigma(t)} | x_{\sigma(t-1)}).
\nonumber
\end{align}

Thus, 
given $y \in \calR$, we can compute 
$p(y = \calMPro(m))$ for any $m \in [|\calU|]$ 
as follows: 
(i) compute $(\bmQ_{m,i}, \pi_{m,i})$ for each time slot $l_i \in \calL$ via the MH algorithm (as described in Section~\ref{sub:MH}); 
(ii) compute 
$p(y = \calMPro(m))$ using $(\bmQ_{m,i}, \pi_{m,i})$.
Then we can verify 
whether $y$ 
is releasable with $(k, \eta)$-\PD{} 
by counting the number of 
training users 
such that (\ref{eq:PD}) holds. 

Specifically, 
we 
use 
the following PD test in \cite{Bindschaedler_VLDB17}:

\begin{privacy_test} [\conference{Deterministic Test in  \cite{Bindschaedler_VLDB17}}\arxiv{Deterministic Test}] 
\label{def:PD_test} 
Let 
$k \in \nats$ 
and $\eta \in \nngreals$. 
Given a generative model $\calM$, 
training user set $\calU$, 
input user index $n \in [|\calU|]$, 
and synthetic trace $y$, 
output \emph{pass} or \emph{fail} as follows:
\begin{enumerate}
    \item Let $i \in \nnints$ be a non-negative integer that satisfies: 
    \begin{align}
    e^{-(i+1) \eta} < p(y = \calM(n)) \leq e^{-i \hspace{0.3mm} \eta}.
    \label{eq:PD_test_1}
    \end{align}
    \item Let $k' \in \nnints$ be the number of 
    training user indexes $m \in [|\calU|]$
    such that:
    \begin{align}
    e^{-(i+1) \eta} < p(y = \calM(m)) \leq e^{-i \hspace{0.3mm} \eta}.
    \label{eq:PD_test_2}
    \end{align}
    \item If 
    $k' \geq k$, 
    then return \emph{pass}, otherwise return \emph{fail}.
\end{enumerate}
\end{privacy_test}
By 
(\ref{eq:PD}), (\ref{eq:PD_test_1}), and (\ref{eq:PD_test_2}), 
if $y$ passes \textbf{Privacy Test 1}, 
then $y$ is releasable with 
$(k, \eta)$-\PD{}. 
In addition, 
$(k, \eta)$-\PD{} 
is guaranteed 
even if $\The$ is \textit{not} sampled from the exact posterior distribution $p(\The | \bmR)$. 



The time complexity of \textbf{Privacy Test 1} is linear in $|\calU|$. 
In this paper, we randomly select a subset 
$\calU^* \subseteq \calU$ of training users from $\calU$ 
(as in \cite{Bindschaedler_VLDB17}) 
to 
ascertain more quickly 
whether $k' \geq k$ or not. 
Specifically, we 
initialize $k'$ to $0$, and 
check (\ref{eq:PD_test_2}) for 
each training user in $\calU^* \cup \{u_n\}$ 
(increment $k'$ if (\ref{eq:PD_test_2}) holds). 
If 
$k' \geq k$, 
then we return pass (otherwise, return fail). 
The time complexity of this faster version of \textbf{Privacy Test 1} is linear in $|\calU^*|$ ($\leq |\calU|$). 
A smaller 
$|\calU^*|$ leads to a faster $(k, \eta)$-\PD{} test at the expense of 
fewer synthetic traces passing the test. 

In Section~\ref{sec:exp}, 
we use the faster version of \textbf{Privacy Test 1} with $|\calU^*|=32000$, $k=10$ to $200$, and $\eta=1$ to guarantee $(k, \eta)$-\PD{} for reasonable $k$ and $\eta$ 
(note that $\epsilon=1$ is considered to be reasonable in $\epsilon$-\DP{} \cite{Dwork_JPC09,DP_Li}). 


\smallskip
\noindent{\textbf{Overall privacy.}}~~As described in Section~\ref{sub:privacy_metrics}, 
even if a synthetic trace $y$ satisfies $(k, \eta)$-PD, $y$ may leak information about the MTF parameters. We finally discuss the overall privacy of $y$ including this issue.

Given 
the input user index $n$, 
\textsyn{PPMTF} 
generates $y$ from $(\bma_n, \bmB, \bmC, \bmD)$, as described in Section~\ref{sub:MH}. 
Since the linkage of the input user $u_n$ and $y$ is 
alleviated 
by PD, the leakage of $\bma_n$ is also 
alleviated 
by PD. 
Therefore, the remaining issue is the leakage of $(\bmB, \bmC, \bmD)$.

Here we note that 
$\bmB$ and $\bmC$ are information about locations (i.e., \textit{location profiles}), 
and $\bmD$ is information about time (i.e., \textit{time profile}). 
Thus, 
even if the adversary perfectly infers $(\bmB, \bmC, \bmD)$ from $y$, 
it is hard to infer 
private information (i.e., training traces $\bmS$) 
of users $\calU$ from 
$(\bmB, \bmC, \bmD)$ (unless she obtains \textit{user profile} $\bmA$). 
In fact, 
some studies on privacy-preserving matrix factorization \cite{Meng_AAAI2018,Nikolaenko_CCS13} release an item profile publicly. 
Similarly, \textsyn{SGLT} \cite{Bindschaedler_SP16} assumes that semantic clusters of locations (parameters of their generative model) leak almost no information about $\calU$ because the location clusters are a kind of location profile. We also assume 
that the location and time profiles leak almost no information about users $\calU$. 
Further analysis is left for future work.

\smallskip
\noindent{\textbf{Tuning hyperparameters.}}~~As 
described in Sections~\ref{sub:comp_tensors}, 
\ref{sub:post_smpl}, 
and \ref{sub:privacy}, 
we set 
$\lambda\one = \lambda\two = 10^2$, 
$r_{max}\one = r_{max}\two = 10$ 
(because the number of positive elements per user and the value of counts were respectively less than $100$ and $10$ in most cases), 
$z=16$ (as in \cite{Murakami_TIFS16}), 
$\rho\one = \rho\two = 10^3$, 
$|\calU^*|=32000$, 
and changed $\alpha$ from $10^{-6}$ and $10^3$ in our experiments. 
If we set these values to very small values, the utility is lost (we show its example by changing $\alpha$ in our experiments). 
%
For the parameters of the hyperpriors, we set $\mu_0 = 0$, $\beta_0 = 2$, $\nu_0 = z$, and $W_0$ to the identity matrix in the same way as \cite{Salakhutdinov_ICML08}.

We set 
the hyperparameters as above based on the previous work or the datasets. 
To optimize the hyperparameters, we could use, for example, cross-validation \cite{prml}, which 
assesses the hyperparameters by dividing
a dataset into a training set and testing (validation) set. 

\section{Experimental Evaluation}
\label{sec:exp}


In our experiments, we used two publicly available datasets: 
the SNS-based people flow data \cite{SNS_people_flow} 
and the Foursquare dataset 
in \cite{Yang_WWW19}. 
The former is a relatively small-scale dataset with no missing events. 
It 
is used 
to compare 
the proposed method with 
two state-of-the-art synthesizers \cite{Bindschaedler_SP16,Bindschaedler_VLDB17}. 
The latter 
is one of the largest 
publicly available location datasets; 
e.g., 
much larger than \cite{Gowalla,CRAWDAD,Yang_TIST16,Geolife}. 
Since the location synthesizer in \cite{Bindschaedler_SP16} cannot be applied to this large-scale dataset 
(as shown in Section~\ref{sub:results_PF}), 
we compare the proposed method with 
\cite{Bindschaedler_VLDB17}. 





\subsection{Datasets}
\label{sub:dataset}


\noindent{\textbf{SNS-based People Flow Data.}}~~The 
SNS-based people flow data \cite{SNS_people_flow} 
(denoted by \textdat{PF}) 
includes 
artificial traces around 
the Tokyo metropolitan area. 
The traces were generated 
from real geo-tagged tweets by 
interpolating locations every five minutes using railway and road information \cite{Sekimoto_PC11}. 

We divided the Tokyo metropolitan area 
into $20 \times 20$ regions; i.e., $|\calX| = 400$. 
Then we set the interval between two time instants to $20$ minutes, and extracted traces from 9:00 to 19:00 for $1000$ users 
(each user has a single trace comprising $30$ events). 
We also set time slots to $20$ minutes long from 9:00 to 19:00. 
In other words, 
we assumed that each time slot 
comprises 
one time instant; i.e., $|\calL| = 30$. 
We randomly divided the $1000$ traces into 
$500$ 
training 
traces and $500$ testing traces; i.e., $|\calU| = 500$. 
The training traces were used for training generative models and synthesizing traces. 
The testing traces were used for evaluating the utility. 

Since the number of users is small in 
\textdat{PF}, 
we generated ten synthetic traces 
from each training trace 
(each synthetic trace is from 9:00 to 19:00) 
and averaged the utility and privacy results over the ten traces to stabilize the performance.

\smallskip
\noindent{\textbf{Foursquare Dataset.}}~~The Foursquare dataset 
(Global-scale Check-in Dataset with User Social Networks) \cite{Yang_WWW19} 
(denoted by \textdat{FS}) 
includes 
$90048627$ real check-ins by $2733324$ users all over the world. 

We selected six cities with 
numerous 
check-ins and 
with 
cultural diversity in the same way as \cite{Yang_WWW19}: 
Istanbul (\textdat{IST}), 
Jakarta (\textdat{JK}), 
New York City (\textdat{NYC}), 
Kuala Lumpur (\textdat{KL}), 
San Paulo (\textdat{SP}), and 
Tokyo (\textdat{TKY}).
For each city, we extracted 
$1000$ POIs, 
for which the 
number of visits 
from all users was the largest; i.e., $|\calX| = 1000$. 
We set the interval between two time instants to $1$ hour (we rounded down minutes), and 
assigned every $2$ hours into one of $12$ time slots $l_1$ ($0$-$2$h), $\cdots$, 
$l_{12}$ ($22$-$24$h) in a cyclic manner; i.e., 
$|\calL| = 12$. 
For each city, we randomly selected $80\%$ of traces as training traces and used the remaining traces as testing traces. 
The numbers $|\calU|$ of users in \textdat{IST}, \textdat{JK}, \textdat{NYC}, \textdat{KL}, \textdat{SP}, and 
\textdat{TKY} were $219793$, $83325$, $52432$, $51189$, $42100$, and $32056$, respectively. 
Note that 
there were many missing events in \textdat{FS} 
because 
\textdat{FS} is a location check-in dataset. 
The numbers of 
temporally-continuous events in the training traces of 
\textdat{IST}, \textdat{JK}, \textdat{NYC}, \textdat{KL}, \textdat{SP}, and 
\textdat{TKY} were $109027$, $19592$, $7471$, $25563$, $13151$, and $47956$, respectively. 

From each training trace, 
we generated one synthetic trace 
with the length of one day. 


\subsection{Location Synthesizers}
\label{sub:location_synthesizers}
We evaluated 
the proposed method (\textsyn{PPMTF}), 
the synthetic location traces generator in \cite{Bindschaedler_SP16} (\textsyn{SGLT}), and 
the synthetic data generator in \cite{Bindschaedler_VLDB17} (\textsyn{SGD}). 

In \textsyn{PPMTF}, we set 
$\lambda\one = \lambda\two = 10^2$, 
$r_{max}\one = r_{max}\two = 10$, 
$z=16$, 
$\rho\one = \rho\two = 10^3$, 
$\mu_0 = 0$, $\beta_0 = 2$, 
$\nu_0 = z$, and 
$W_0$ to the identity matrix, 
as explained in Section~\ref{sec:PPMTF}. 
Then we 
evaluated the utility and privacy for each value. 

In \textsyn{SGLT} \cite{Bindschaedler_SP16}, we used the SGLT tool (C++) in \cite{SGLT}. 
We set the location-removal probability $par_c$ to $0.25$, 
the location merging probability $par_m$ to $0.75$, and 
the randomization multiplication factor $par_v$ to $4$ in the same way as \cite{Bindschaedler_SP16} 
(for details of the parameters in \textsyn{SGLT}, see \cite{Bindschaedler_SP16}). 
For the number $c$ of semantic clusters, we attempted various values: 
$c=50$, $100$, $150$, or $200$ (as shown later, \textsyn{SGLT} provided 
the best performance when 
$c=50$ or $100$). 
For each case, we set the probability $par_l$ of removing the true location 
in the input user 
to various values from $0$ to $1$ 
($par_l = 1$ in \cite{Bindschaedler_SP16}) 
to evaluate the trade-off between 
utility and privacy. 


In \textsyn{SGD} \cite{Bindschaedler_VLDB17}, 
we trained the transition matrix for each time slot ($|\calL| \times |\calX| \times |\calX|$ elements in total) and the visit-probability vector for the first time instant ($|\calX|$ elements in total) 
from the training traces via maximum likelihood estimation. 
Note that the transition matrix and the visit-probability vector are common to all users. 
Then we generated a synthetic trace from 
an input user 
by copying the first $\xi \in \nnints$ events 
of the input user 
and generating the remaining events using the trained transition matrix. 
When $\xi=0$, we randomly generated a location at the first time instant using the visit-probability vector. 
For more details of \textsyn{SGD} for location traces, 
see 
Appendix~\ref{sec:details_SGD}. 
We implemented \textsyn{PPMTF} and \textsyn{SGD} with C++, and 
published it as open-source software \cite{PPMTF}. 

\subsection{Performance Metrics}
\label{sub:performance_metrics}



\noindent{\textbf{Utility.}}~~We evaluated the utility listed 
in Section~\ref{sec:intro}. 


\smallskip
\noindent{\textbf{\textit{(a) Time-dependent population distribution.}}}~~We 
computed 
a frequency distribution 
($|\calX|$-dim 
vector) 
of the testing traces and 
that of the synthetic traces for each time slot. 
Then we evaluated 
the average total variation 
between the two distributions over all time slots 
(denoted by \textutl{TP-TV}). 

Frequently visited 
locations are especially important for some tasks \cite{Do_TMC13,Zheng_WWW09}. 
Therefore, 
for each time slot, 
we also 
selected the top $50$ locations, 
whose frequencies in the testing traces were the largest, and 
regarded 
the 
absolute error 
for the remaining locations in \textutl{TP-TV} as $0$
(\textutl{TP-TV-Top50}). 

\smallskip
\noindent{\textbf{\textit{(b) Transition matrix.}}}~~We 
computed an average transition-probability matrix ($|\calX| \times |\calX|$ matrix) over all users and all time instances from the testing traces. 
Similarly, we computed an average transition-probability matrix from the synthetic traces. 

Since each row of the 
transition 
matrix represents a conditional distribution, 
we evaluated the EMD (Earth Mover's Distance) between the two conditional distributions over the $x$-axis (longitude) and $y$-axis (latitude), and averaged it over all rows 
(\textutl{TM-EMD-X} and  \textutl{TM-EMD-Y}). 
\textutl{TM-EMD-X} and \textutl{TM-EMD-Y} 
represent how the two transition matrices differ over the $x$-axis and $y$-axis, respectively. 
They are large especially when one matrix allows only a transition between close locations and the other allows a transition between far-away locations  (e.g., two countries). 
The EMD is also used in \cite{Bindschaedler_SP16} to measure the difference in two transition matrices. 
We did not evaluate the two-dimensional EMD, because the computational cost of the EMD is expensive. 

\smallskip
\noindent{\textbf{\textit{(c) Distribution of visit-fractions.}}}~~Since 
we used POIs in \textdat{FS} 
(regions in \textdat{PF}), 
we evaluated how well the synthetic traces preserve a distribution of visit-fractions 
in \textdat{FS}. 
We first excluded testing traces 
that have a 
few events 
(fewer than $5$).  
Then, 
for each of the remaining traces, 
we computed a fraction of visits for each POI. 
Based on 
this, 
we computed a distribution of visit-fractions for each POI 
by dividing the fraction into $24$ bins as 
$(0,\frac{1}{24}], (\frac{1}{24},\frac{2}{24}],\cdots,(\frac{23}{24},1)$. 
Similarly, we computed a distribution of visit-fractions for each POI from the synthetic traces. 
Finally, we evaluated the 
total variation 
between the two distributions 
(\textutl{VF-TV}). 


\smallskip
\noindent{\textbf{\textit{(d) Cluster-specific population distribution.}}}~~To 
show that \textsyn{PPMTF} is also effective in this 
respect, 
we 
conducted 
the following analysis. 
We used the fact that each column in the factor matrix $\bmA$ represents a cluster 
($z=16$ clusters in total). 
Specifically, for each column in $\bmA$, 
we extracted the top $10\%$ users whose values in the column are the largest. 
These users 
form a cluster 
who exhibit 
similar behavior. 
%
For some clusters, 
we visualized 
factor matrices and the frequency distributions (i.e., cluster-specific population distributions) of the training traces 
and synthetic traces. 

\smallskip
\noindent{\textbf{Privacy.}}~~In \textdat{PF}, 
we evaluated 
the three synthesizers. 
Although \textsyn{PPMTF} and \textsyn{SGD} 
provide 
$(k, \eta)$-\PD{} in Definition~\ref{def:PD}, \textsyn{SGLT} 
provides \PD{} using a semantic distance between traces \cite{Bindschaedler_SP16}, 
which 
differs 
from 
\PD{} in Definition~\ref{def:PD}. 

To compare the three synthesizers using the same privacy metrics, 
we considered two privacy attacks: \textit{re-identification (or de-anonymization) attack} \cite{Shokri_SP11,Gambs_JCSS14,Murakami_TrustCom15} and \textit{membership inference attack} \cite{Shokri_SP17,Jayaraman_USENIX19}. 
In the re-identification attack, the adversary identifies, 
for each synthetic trace $y$, an \textit{input user} 
of $y$
from $|\calU|=500$ training users.
We evaluated a \textit{re-identification rate} 
as the proportion of correctly identified synthetic traces.

In the membership inference attack, the adversary obtains all synthetic traces. 
Then the adversary determines, for each of $1000$ users ($500$ training users and $500$ testing users), whether her trace is used for training the model. 
Here training users are members and testing users are non-members (they are randomly chosen, as described in Section~\ref{sub:dataset}). 
We used 
\textit{membership advantage} \cite{Yeom_CSF18} as a privacy metric in the same way as \cite{Jayaraman_USENIX19}. 
Specifically, let $tp$, $tn$, $fp$, and $fn$ be the number of true positives, true negatives, false positives, and false negatives, respectively, 
where ``positive/negative'' represents a member/non-member. 
Then membership advantage is defined in \cite{Yeom_CSF18} as the difference between the true positive rate and the false positive rate; i.e., 
membership advantage $=\frac{tp}{tp+fn}-\frac{fp}{fp+tn} = \frac{tp-fp}{500}$. 
Note that membership advantage can be easily translated into \textit{membership inference accuracy}, which is the proportion of correct adversary's outputs   ($=\frac{tp+tn}{tp+tn+fp+fn}=\frac{tp+tn}{1000}$), as follows: 
$\text{membership inference accuracy} = \frac{\text{membership advantage} + 1}{2}$ 
(since $tn+fp=500$).
A random guess that randomly outputs ``member'' with probability $q \in [0,1]$ achieves advantage $=0$ and 
accuracy $=0.5$. 

For both the re-identification attack and membership inference attack, we assume the \textit{worst-case scenario} about the background knowledge of the adversary; i.e., \textit{maximum-knowledge attacker model} \cite{Domingo-Ferrer_PST15}. 
Specifically, we assumed that the adversary obtains the $1000$ original traces ($500$ training traces and $500$ testing traces) in \textdat{PF}. 
Note that the adversary does not know which ones are training traces (and therefore performs the membership inference attack). 
The adversary uses the $1000$ original traces to build an attack model. 
For a re-identification algorithm, 
we used the Bayesian re-identification algorithm in \cite{Murakami_TrustCom15}. 
For a membership inference algorithm, we implemented a likelihood-ratio based membership inference algorithm, which partly uses the algorithm in \cite{Murakami_TIFS17}. 
For details of the attack algorithms, see Appendix~\ref{sec:attack_algorithms}. 

Note that evaluation might be difficult for a partial-knowledge attacker who has less background knowledge. 
In particular, when the amount of training data is small, it is very challenging to accurately train an attack model (transition matrices) \cite{Murakami_PoPETs17,Murakami_TIFS17,Murakami_TrustCom15}. 
We note, however, 
that if a location synthesizer is secure against the maximum-knowledge attacker, then we can say that it is also secure against the partial-knowledge attacker, 
without implementing clever attack algorithms. 
Therefore, we focus on the maximum-knowledge attacker model.

In \textdat{FS}, we used $(k, \eta)$-\PD{} in Definition~\ref{def:PD} as a privacy metric 
because 
we evaluated only \textsyn{PPMTF} and \textsyn{SGD}. 
As a PD test, we used the (faster) \textbf{Privacy Test 1} with 
$|\calU^*|=32000$, $k=10$ to $200$, and $\eta=1$. 

\begin{figure*}[t]
\centering
\includegraphics[width=0.49\linewidth]{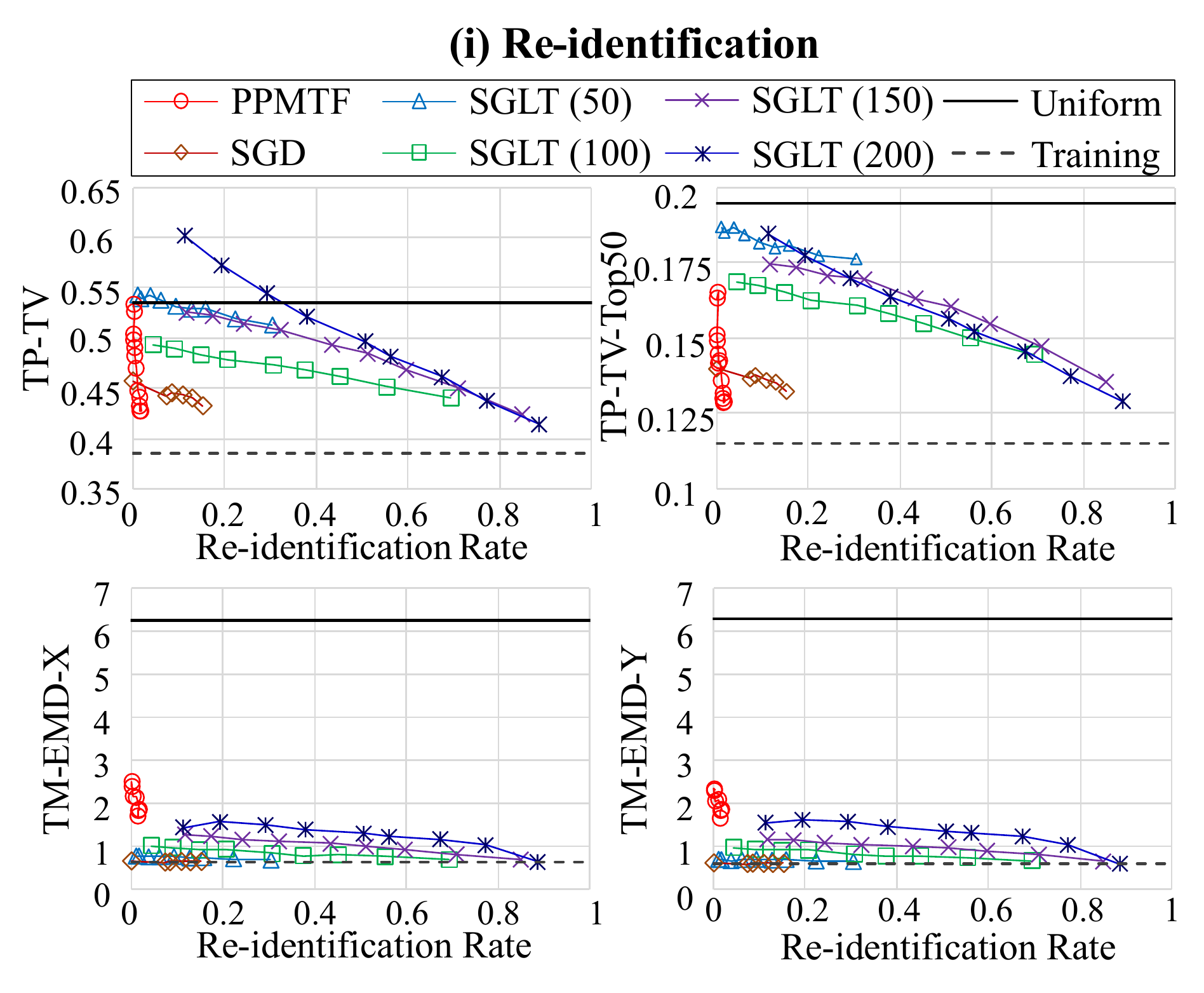}
\includegraphics[width=0.49\linewidth]{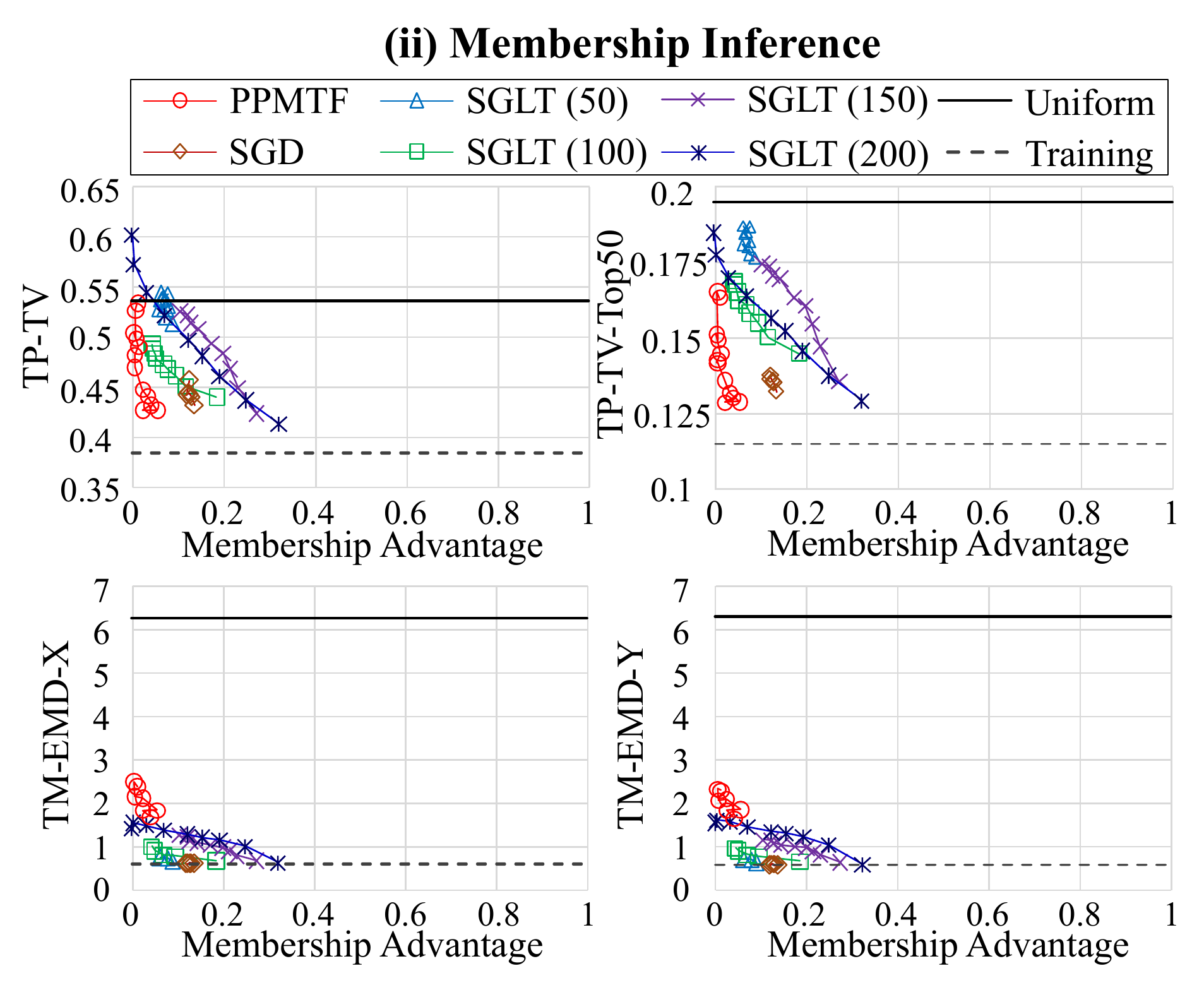}
\caption{Privacy and utility 
in \textdat{PF}. The number in \textsyn{SGLT} represents the number $c$ of clusters. 
In \textsyn{PPMTF}, \textsyn{SGLT} and \textsyn{SGD}, we varied $\alpha$, $par_l$ and $\xi$, respectively. Lower is better in all of the utility metrics.
}
\label{fig:res_PF_util_priv}
\end{figure*}

\smallskip
\noindent{\textbf{Scalability.}}~~We 
measured the time 
to synthesize traces 
using 
the ABCI (AI Bridging Cloud Infrastructure) \cite{ABCI}, 
which is a supercomputer ranking 
8th in the Top 500 
(as of June 2019). 
We used one computing node, which consists of two 
Intel Xeon Gold 6148 processors (2.40 GHz, 20 Cores) 
and 
412 GB 
main memory.


\subsection{Experimental Results in PF}
\label{sub:results_PF}


\noindent{\textbf{Utility and privacy.}}~~Figure~\ref{fig:res_PF_util_priv} shows 
the re-identification rate, membership advantage, and utility with regard to 
(a) the time-dependent population distribution and 
(b) transition matrix 
in \textdat{PF}. 
Here, 
we set the precision $\alpha$ in \textsyn{PPMTF} to various values from $0.5$ to $1000$. 
\textsyn{Uniform} represents the utility 
when 
all locations in synthetic traces are independently 
sampled from a uniform distribution. 
\textsyn{Training} represents the utility of the training traces; 
i.e., the utility 
when we output the training traces as synthetic traces without modification. 
Ideally, the utility of the synthetic traces should be much better than that of \textsyn{Uniform} and close to that of \textsyn{Training}.

Figure~\ref{fig:res_PF_util_priv} shows 
that 
\textsyn{PPMTF} achieves \textutl{TP-TV} and \textutl{TP-TV-Top50} close to \textsyn{Training} for 
while protecting user privacy. 
For example, 
\textsyn{PPMTF} achieves \textutl{TP-TV} $=0.43$ 
and \textutl{TP-TV-Top50} $=0.13$, both of which are close to those of \textsyn{Training} (\textutl{TP-TV} $=0.39$ and  \textutl{TP-TV-Top50} $=0.12$), 
while keeping re-identification rate $<0.02$ and 
membership advantage $<0.055$ (membership inference accuracy $<0.53$). 
We consider that \textsyn{PPMTF} achieved low membership advantage because (\ref{eq:PD}) held for not only $k-1$ training users but testing users (non-members). 

In \textsyn{SGLT} and \textsyn{SGD}, 
privacy rapidly gets worse 
with decrease in \textutl{TP-TV} and \textutl{TP-TV-Top50}. 
This is because both \textsyn{SGLT} and \textsyn{SGD} synthesize traces by copying over some events from the training traces. 
Specifically, \textsyn{SGLT} (resp.~\textsyn{SGD}) increases the number of copied events by decreasing $par_l$ (resp.~increasing $\xi$). 
Although a larger number of copied events 
result in 
a 
decrease of both \textutl{TP-TV} and \textutl{TP-TV-Top50}, 
they also 
result in the rapid increase of 
the re-identification rate. 
This result is consistent with the \textit{uniqueness} of location data; e.g., 
only 
three 
locations are 
sufficient 
to uniquely characterize 
$80\%$ 
of the individuals among 
$1.5$ 
million people \cite{Montjoye_SR13}. 

Figure~\ref{fig:res_PF_util_priv} also shows 
that 
\textsyn{PPMTF} performs worse than 
\textsyn{SGLT} and \textsyn{SGD} 
in terms of 
\textutl{TM-EMD-X} and \textutl{TM-EMD-Y}. 
This 
is 
because \textsyn{PPMTF} modifies 
the transition matrix so that it is consistent with a visit-probability vector 
using the MH algorithm 
(\textsyn{SGLT} and \textsyn{SGD} do not modify the transition matrix). 
It should be noted, however, that 
\textsyn{PPMTF} significantly outperforms \textsyn{Uniform} 
with regard to \textutl{TM-EMD-X} and \textutl{TM-EMD-Y}. 
This 
means that 
\textsyn{PPMTF} preserves the transition matrix well.



\begin{figure}[t]
\centering
\includegraphics[width=0.99\linewidth]{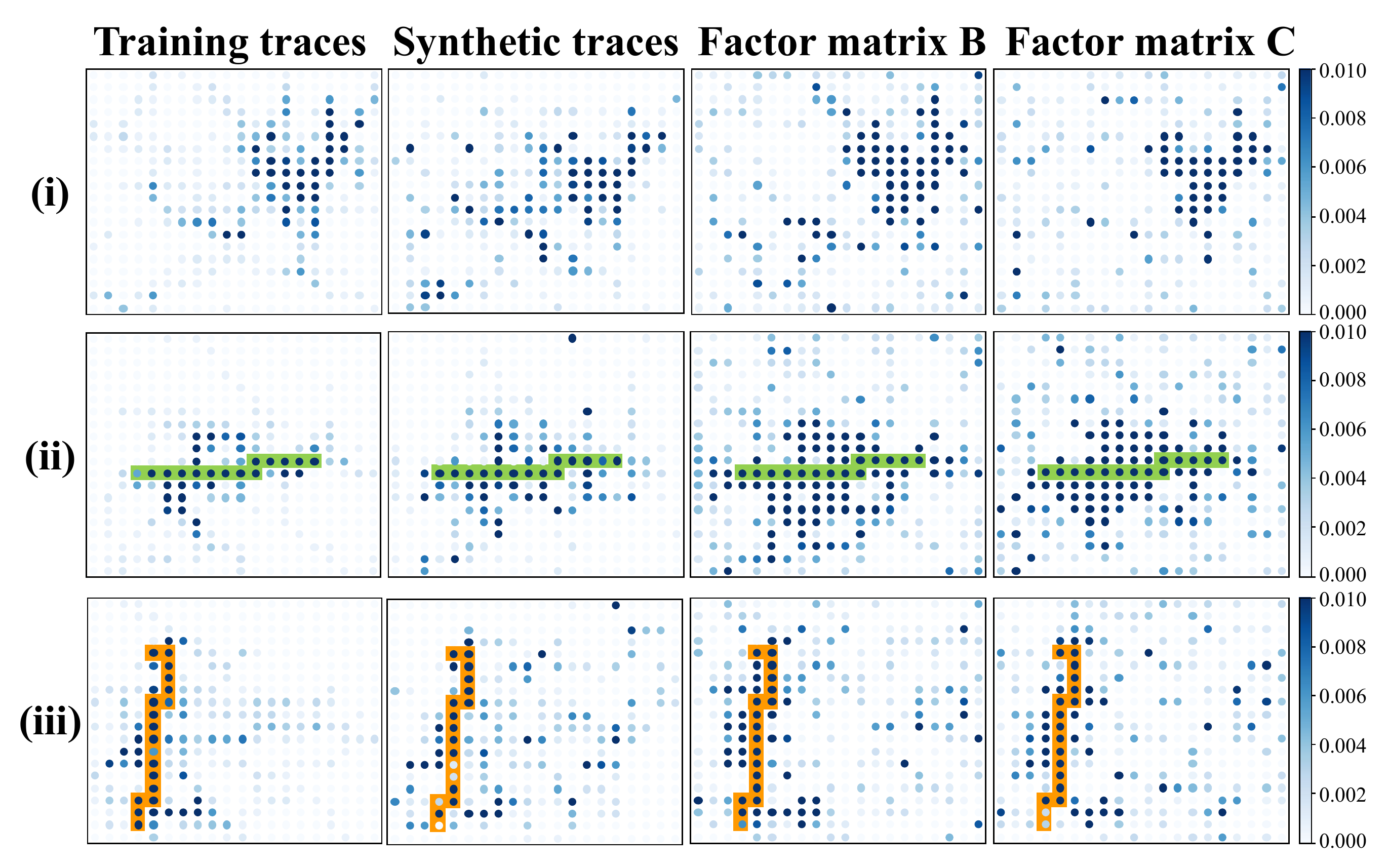}
\caption{Frequency distributions and the columns of factor matrices $\bmB$ and $\bmC$ for three clusters ($50$ users for each cluster) in \textdat{PF}. 
The green line in (ii) and the orange line in (iii) represent subways (Shinjuku and Fukutoshin lines, respectively).}
\label{fig:res_PF_visual}
\end{figure}

\smallskip
\noindent{\textbf{Analysis on cluster-specific features.}}~~Next, we show the utility with regard to (d) the cluster-specific population distribution. 
Specifically, 
we 
show 
in Figure~\ref{fig:res_PF_visual} the frequency distributions of training traces 
and synthetic traces and the columns of factor matrices \textbf{B} and \textbf{C} 
for three clusters 
(we set $\alpha=200$ 
because it provided almost the 
best utility in Figure~\ref{fig:res_PF_util_priv}; 
we also 
normalized elements in each column of \textbf{B} and \textbf{C} 
so that the square-sum is one). 
Recall 
that for each cluster, we extracted the top $10\%$ users; i.e., $50$ users. 

\begin{figure}[t]
\centering
\includegraphics[width=0.99\linewidth]{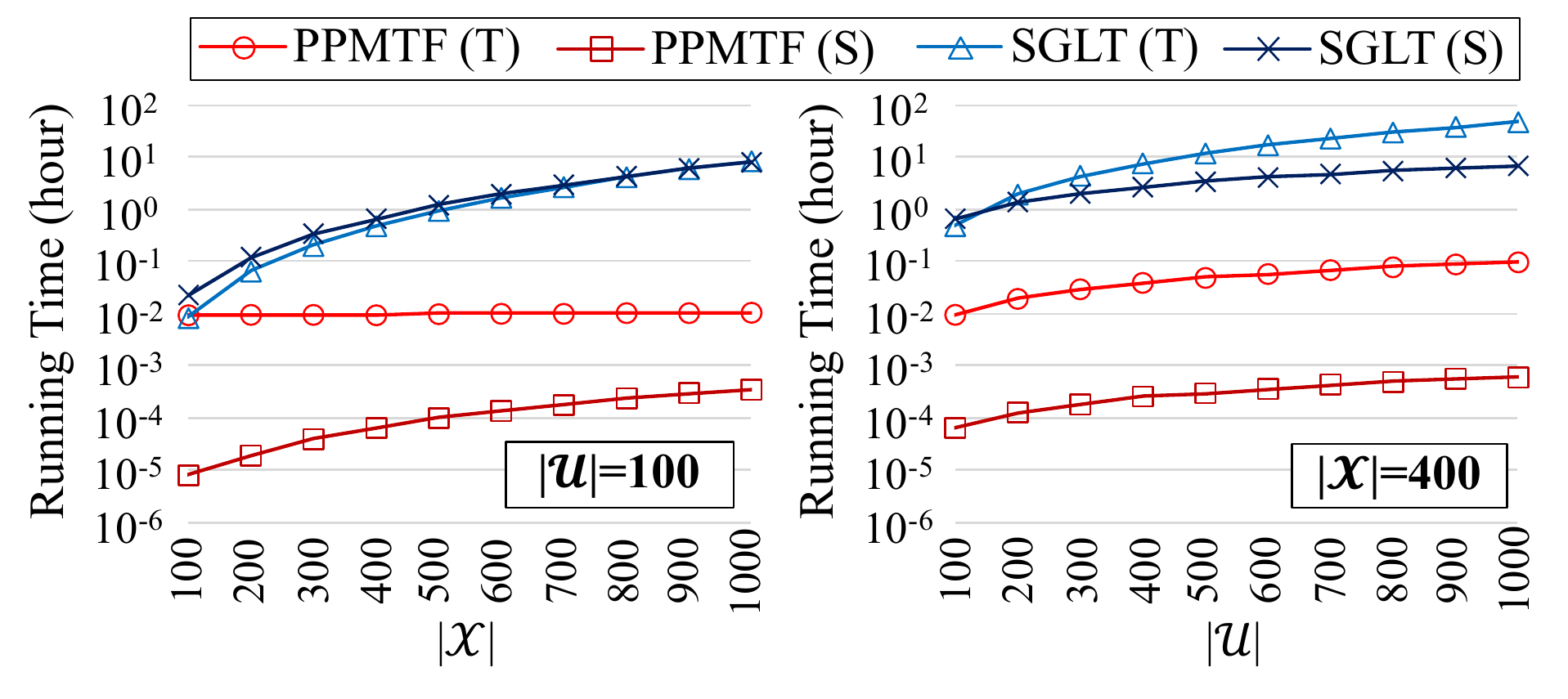}
\caption{Running time in \textdat{PF}. 
``T'' and ``S'' in the parentheses represent the time to train a generative model (i.e., MTF parameters in \textsyn{PPMTF} and semantic clusters in \textsyn{SGLT}) and the time to generate $500$ synthetic traces, respectively.}
\label{fig:res_PF_time}
\end{figure}

Figure~\ref{fig:res_PF_visual} shows 
that the frequency distributions of training traces 
differ 
from cluster to cluster, and 
that 
the users in each cluster exhibit 
similar behavior; 
e.g., the users in (i) stay in the northeastern area of Tokyo; 
the users in (ii) and (iii) often 
use the 
subways. 
\textsyn{PPMTF} models such a cluster-specific behavior via 
\textbf{B} and \textbf{C}, and 
synthesizes traces that preserve the behavior using \textbf{B} and \textbf{C}. 
Figure~\ref{fig:res_PF_visual} 
shows that 
\textsyn{PPMTF} is useful for 
geo-data analysis such as 
modeling human location patterns \cite{Lichman_KDD14} and 
map inference \cite{Biagioni_TRB12,Liu_KDD12}. 

\smallskip
\noindent{\textbf{Scalability.}}~~We 
also measured the time to synthesize traces from training traces. 
Here we generated one synthetic trace from each training trace ($500$ synthetic traces in total), and measured the time. 
We also changed the numbers of users and locations (i.e., $|\calU|$, $|\calX|$) 
for various values from $100$ to $1000$ to 
see 
how the running time depends on $|\calU|$ and $|\calX|$. 

Figure~\ref{fig:res_PF_time} shows the results 
(we set $\alpha=200$ in \textsyn{PPMTF}, and $c=100$ and $par_l=1$ in \textsyn{SGLT}; 
we also obtained almost the same results for other values). 
Here we excluded the running time of \textsyn{SGD} 
because 
it was very small; 
e.g., less than one second when $|\calU| = 1000$ and $|\calX| = 400$ 
(we compare the running time of \textsyn{PPMTF} with that of \textsyn{SGD} in \textdat{FS}, as 
described 
later). 
The running time of \textsyn{SGLT} is much larger than 
that of \textsyn{PPMTF}. 
Specifically, the running time 
of \textsyn{SGLT} is 
quadratic in $|\calU|$ (e.g., when $|\calX| = 400$, 
\textsyn{SGLT}(T) requires $0.47$ and $47$ hours for $|\calU| = 100$ and $1000$, respectively) 
and cubic in $|\calX|$ (e.g., when $|\calU| = 100$, \textsyn{SGLT}(T) requires $8.1 \times 10^{-3}$ and $8.4$ hours for $|\calX| = 100$ and $1000$, respectively). 
On the other hand, the running time of \textsyn{PPMTF} is 
linear in $|\calU|$ (e.g., \textsyn{PPMTF}(S) requires $6.3 \times 10^{-5}$ and $5.9 \times 10^{-4}$ hours for $|\calU| = 100$ and $1000$, respectively) 
and quadratic in $|\calX|$ (e.g., \textsyn{PPMTF}(S) requires $9.3 \times 10^{-3}$ and $0.96$ hours for $|\calX| = 100$ and $1000$, respectively). 
This is consistent with the time complexity 
described in Section~\ref{sub:overview}. 

From Figure~\ref{fig:res_PF_time}, we can 
estimate the running time of \textsyn{SGLT} for generating large-scale 
traces. 
Specifically, when $|\calU| = 219793$ and $|\calX| = 1000$ as in \textdat{IST} of \textdat{FS}, 
\textsyn{SGLT}(T) (semantic clustering) would require 
about 4632 years (=$8.4 \times (219793/ \allowbreak 100)^2 / (365 \times 24)$). 
Even if we use $1000$ nodes of the ABCI 
(which has $1088$ nodes 
\cite{ABCI})
in parallel, 
\textsyn{SGLT}(T) would require 
more than 
four years. 
Consequently, 
\textsyn{SGLT} cannot be applied to \textdat{IST}. 
Therefore, 
we compare \textsyn{PPMTF} with \textsyn{SGD} in \textdat{FS}.

\subsection{Experimental Results in FS}
\label{sub:results_FS}

\noindent{\textbf{Utility and privacy.}}~~In \textdat{FS}, 
we set $\alpha=200$ in \textsyn{PPMTF} 
(as in Figures~\ref{fig:res_PF_visual} and \ref{fig:res_PF_time}). 
In \textsyn{SGD}, we set $\xi=0$ 
for the following two reasons: 
(1) the re-identification rate is high for $\xi \geq 1$ in Figure~\ref{fig:res_PF_util_priv} 
because of 
the \textit{uniqueness} of location data \cite{Montjoye_SR13}; 
(2) the event in the first time slot is missing for 
many 
users in \textdat{FS}, 
and cannot be copied. 
Note that \textsyn{SGD} with $\xi=0$ always passes the PD test 
because 
it generates synthetic traces independently of the input data record \cite{Bindschaedler_VLDB17}. 
We evaluated all 
the utility metrics for \textsyn{PPMTF} and \textsyn{SGD}.
\begin{figure}[t]
\centering
\includegraphics[width=0.99\linewidth]{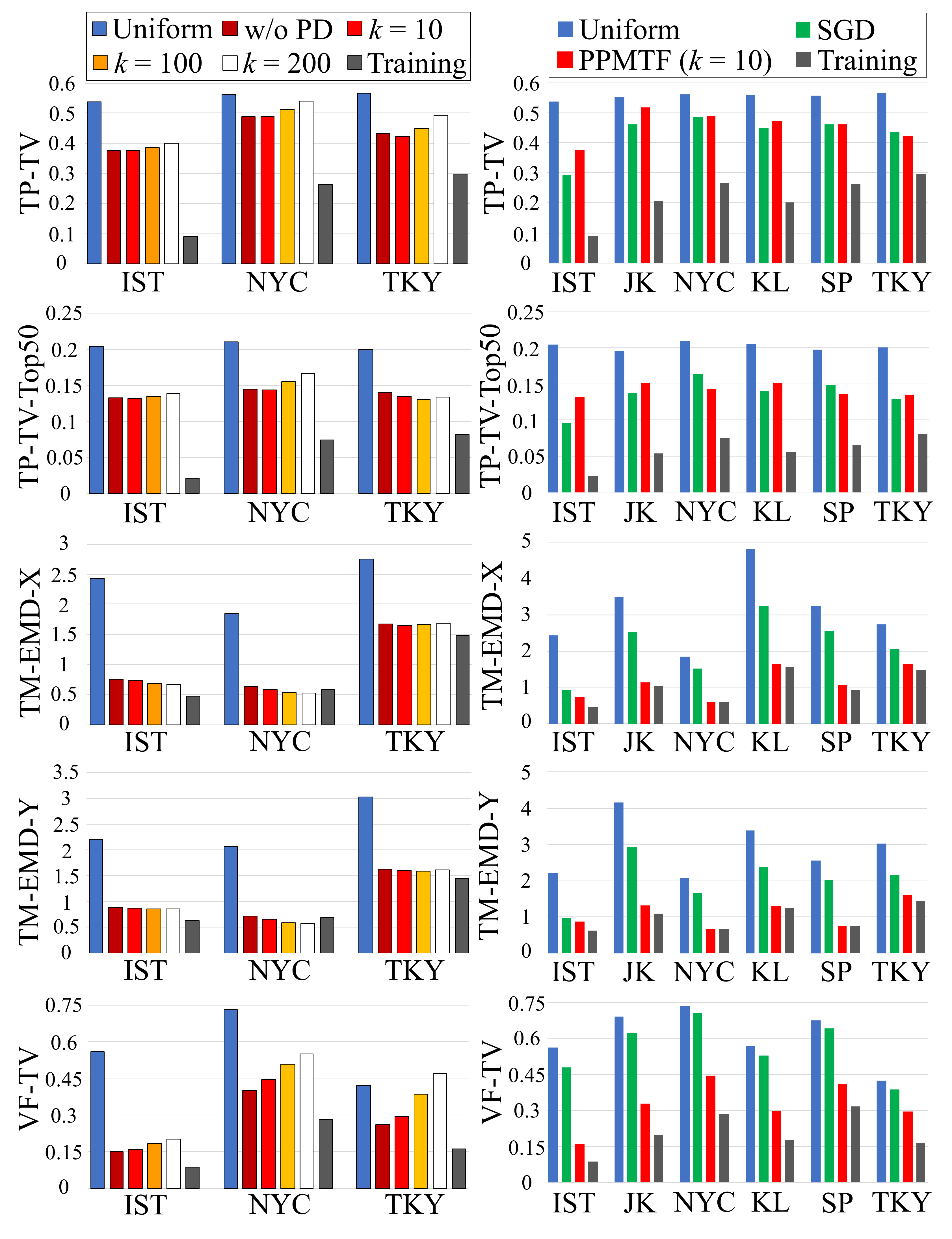}
\caption{Utility of synthetic traces with 
$(k,1)$-\PD{} 
in \textdat{FS}. 
The left graphs show the utility of \textsyn{PPMTF} without the PD test, with $k=10$, $100$ or $200$. 
Lower is better in all of the utility metrics.}
\label{fig:res_FS_util_priv}
\end{figure}

Figure~\ref{fig:res_FS_util_priv} shows 
the results. 
The left graphs show 
\textsyn{PPMTF} without the PD test, 
with $k=10$, $100$, or $200$ in \textdat{IST}, \textdat{NYC}, and \textdat{TKY} 
(we confirmed that the results of the other cities were similar to those of \textdat{NYC} and \textdat{TKY}). 
The right graphs show 
\textsyn{PPMTF} with $k=10$ and \textsyn{SGD}. 

The left graphs show that all of the utility metrics are 
minimally 
affected by running the \PD{} test with $k=10$ in all of the cities. 
Similarly, 
all of the utility metrics are 
minimally 
affected 
in \textdat{IST}, even when $k=200$. 
We confirmed that about $70\%$ of the synthetic traces passed the \PD{} test when $k=10$, 
whereas only about $20\%$ of the synthetic traces passed the \PD{} test when $k=200$ 
(see Appendix~\ref{sec:PD_k} for details). 
Nevertheless, \textsyn{PPMTF} significantly outperforms \textsyn{Uniform} 
in \textdat{IST}. 
This is because the number of users is very large in \textdat{IST} ($|\calU| = 219793$). 
Consequently, 
even if the PD test pass rate is low, 
many synthetic traces still pass the test and preserve 
various statistical features. 
Thus \textsyn{PPMTF} achieves high utility especially for a large-scale dataset. 

The right graphs in Figure~\ref{fig:res_FS_util_priv} show that 
for \textutl{TP-TV} and \textutl{TP-TV-Top50}, 
\textsyn{PPMTF} is roughly the same as \textsyn{SGD}. 
For \textutl{TM-EMD-X} and \textutl{TM-EMD-Y}, 
\textsyn{PPMTF} outperforms \textsyn{SGD}, 
especially in \textdat{JK}, \textdat{NYC}, \textdat{KL}, and \textdat{SP}. 
This is because 
many missing events 
exist 
in \textdat{FS} and 
the transitions in the training traces are few 
in \textdat{JK}, \textdat{NYC}, \textdat{KL}, and \textdat{SP} 
(as described in Section~\ref{sub:dataset}). 

A crucial difference between \textsyn{PPMTF} and \textsyn{SGD} lies 
in 
the fact that 
\textsyn{PPMTF} models the 
cluster-specific 
mobility features (i.e., both (c) and (d)), 
whereas 
\textsyn{SGD} ($\xi=0$) does not. 
This causes 
the results of 
\textutl{VF-TV} 
in Figure~\ref{fig:res_FS_util_priv}. 
Specifically, 
for 
\textutl{VF-TV}, 
\textsyn{SGD} performs almost the same as \textsyn{Uniform}, whereas 
\textsyn{PPMTF} significantly outperforms \textsyn{SGD}.
Below we perform more detailed analysis to show how well \textsyn{PPMTF} provides (c) and (d). 

\begin{figure}[t]
\centering
\includegraphics[width=0.96\linewidth]{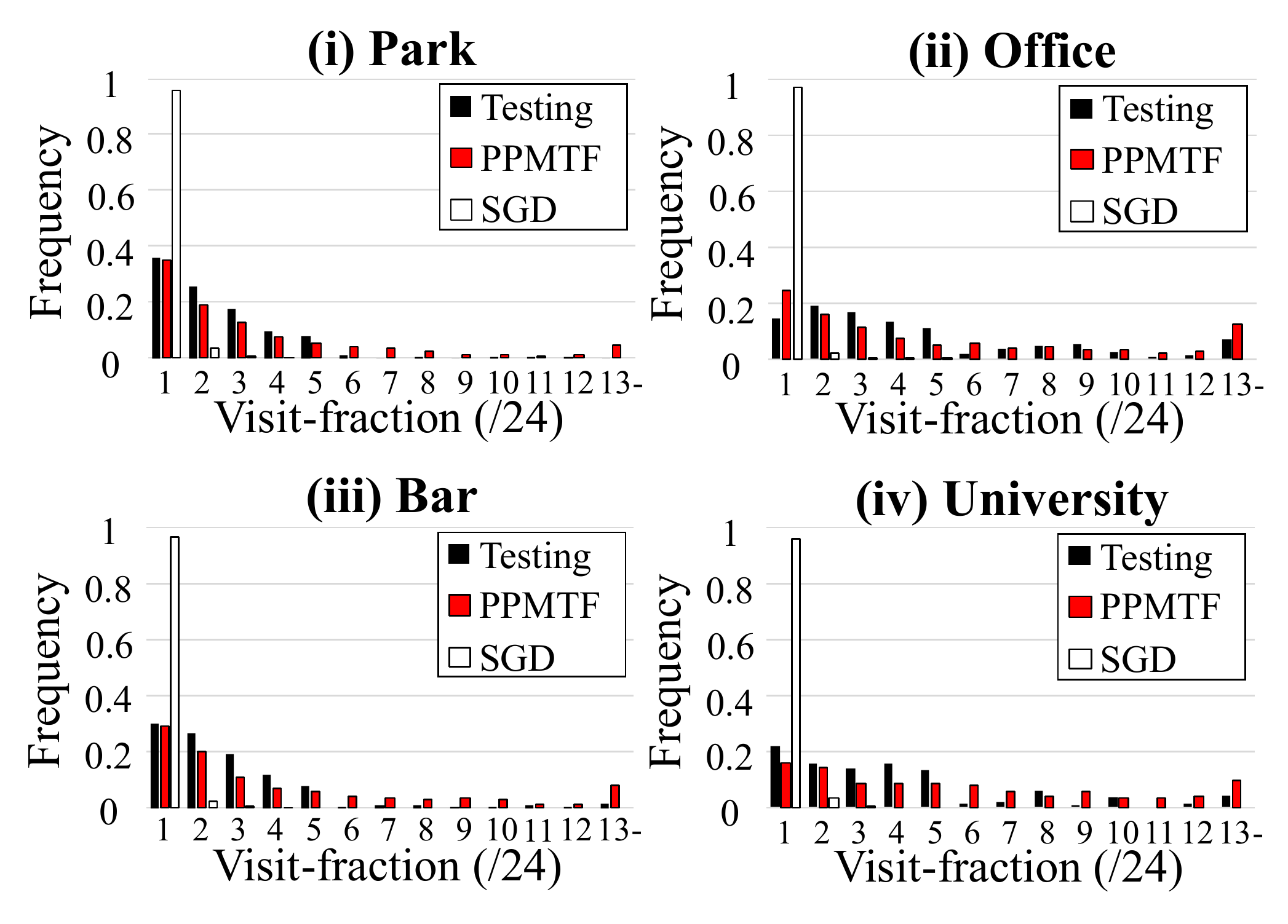}
\caption{Distributions of visit-fractions 
in \textdat{NYC}. 
\textsyn{PPMTF} provides $(10,1)$-\PD{}.} 
\label{fig:res_FS_vf}
\end{figure}
\begin{figure}[t]
\centering
\includegraphics[width=0.99\linewidth]{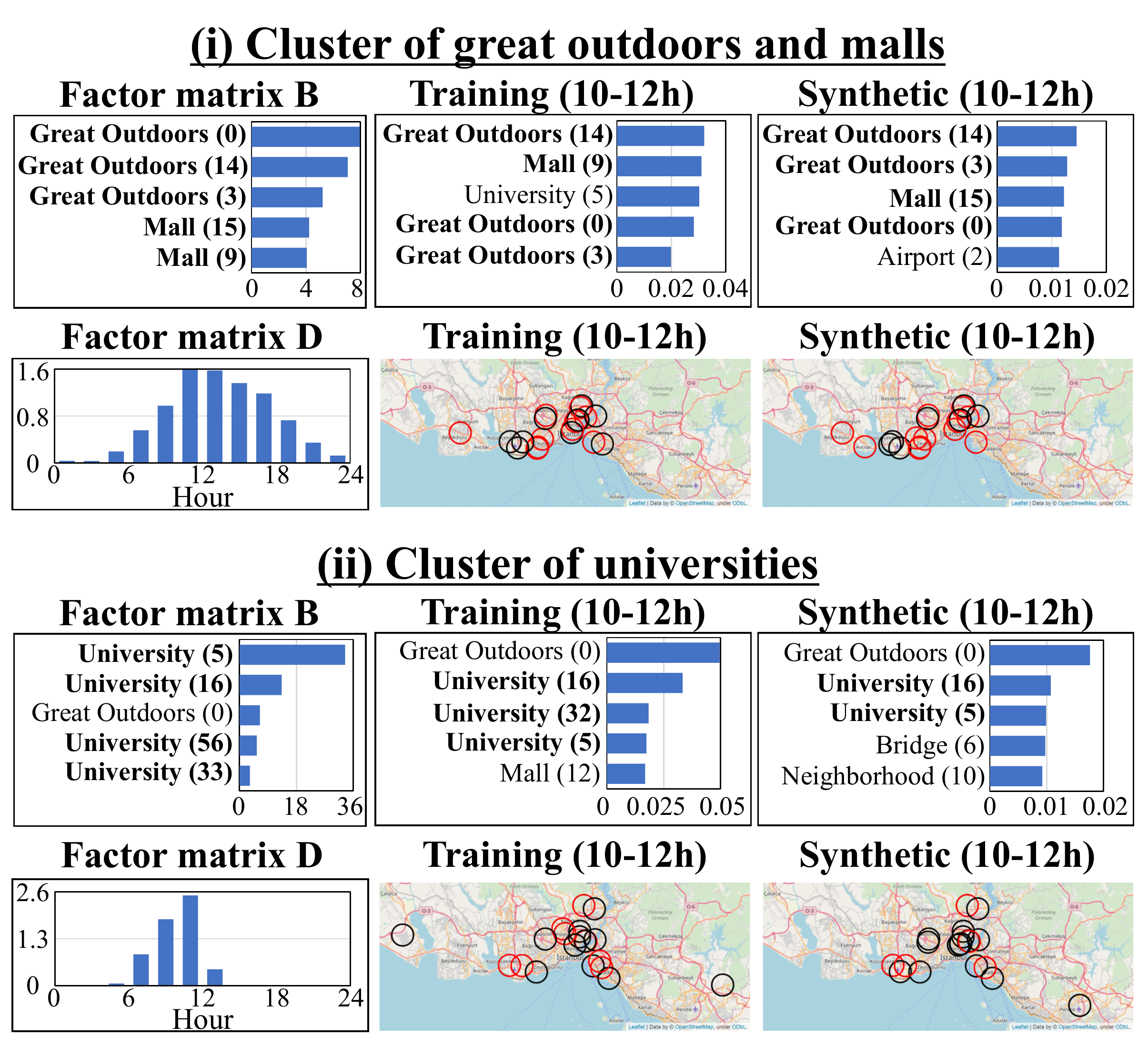}
\caption{Two clusters in \textdat{IST} ($21980$ users for each cluster). 
Here \textsyn{PPMTF} provides ($10,1$)-\PD{}. 
For 
\textbf{B} and training/synthetic traces, 
we show the top $5$ POIs (numbers in 
parentheses represent POI IDs), whose values or frequencies from 10:00 to 12:00 are the highest. 
We 
show the top $20$ POIs by circles in the map. Red circles in (i) (resp.~(ii)) represent outdoors/malls (resp.~universities). } 
\label{fig:res_FS_visual}
\end{figure}


\smallskip
\noindent{\textbf{Analysis on cluster-specific features.}}~~First, we show in Figure~\ref{fig:res_FS_vf} 
the distributions of visit-fractions for four POI categories in \textdat{NYC} 
(\textsyn{Testing} represents the distribution of testing traces). 
The distribution of \textsyn{SGD} concentrates at the visit-fraction of $1/24$ (i.e., $0$ to $0.042$). 
This is 
because \textsyn{SGD} ($\xi=0$) 
uses the transition matrix and visit-probability vector common to all users, and 
synthesizes traces independently of 
input users. 
Consequently, all users spend almost the same amount of time on each POI category. 
On the other hand, \textsyn{PPMTF} models 
a \textit{histogram of visited locations for each user} via the visit-count tensor, 
and generates traces based on the tensor. 
As a result, the distribution of \textsyn{PPMTF} is similar to that of \textsyn{Testing}, 
and reflects the fact that 
about $30$ to $35\%$ of users  
spend less than $1/24$ of their time at a park or bar, 
whereas 
about $80\%$ of users spend more than 
$1/24$ of their time at an office or university. 
This result explains the low values of 
\textutl{VF-TV} 
in \textsyn{PPMTF}. 
Figure~\ref{fig:res_FS_vf} also shows that 
\textsyn{PPMTF} is 
useful 
for semantic annotation of POIs \cite{Do_TMC13,Ye_KDD11}. 

Next, we visualize 
in Figure~\ref{fig:res_FS_visual} 
the columns of factor matrices \textbf{B} and \textbf{D} 
and training/synthetic traces 
for two clusters. As with \textdat{PF}, 
the training users in each cluster exhibit 
a similar behavior; 
e.g., the users in (i) enjoy great outdoors and shopping at a mall, whereas the users in (ii) go to universities. 
Note that users and POIs in each cluster 
are \textit{semantically} similar; 
e.g., people 
who enjoy great outdoors also 
enjoy shopping at a mall; 
many users in (ii) would be students, faculty, or staff. 
The activity times 
are 
also different between 
the two clusters. 
For example, we confirmed that 
many training users in (i) enjoy great 
outdoors and shopping from morning until night, whereas 
most training users in (ii) are not 
at 
universities at night. 
\textsyn{PPMTF} models such a behavior via 
factor matrices, 
and synthesizes traces 
preserving 
the behavior. 
We emphasize 
that 
this 
feature 
is useful for 
various 
analysis; 
e.g., modeling human location patterns, semantic annotation of POIs.

\textsyn{SGD} ($\xi=0$) and others \cite{Chen_CCS12,Chen_KDD12,He_VLDB15} do not provide such cluster-specific features 
because they generate traces only based on parameters common to all users.

\begin{figure}[t]
\centering
\includegraphics[width=0.99\linewidth]{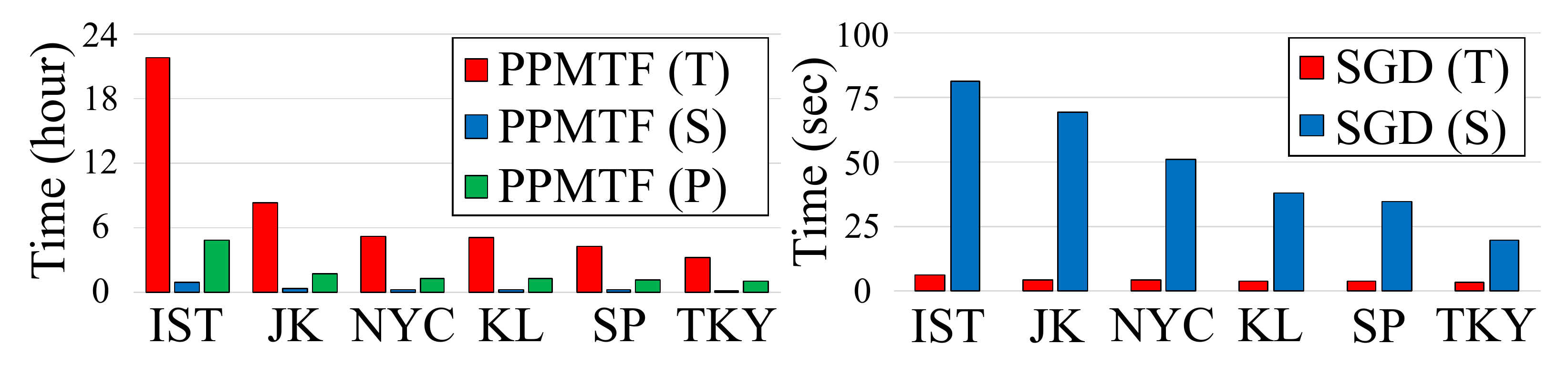}
\caption{Running time in \textdat{FS}. ``T'', ``S'', and ``P'' represent the time to train a generative model, 
synthesize traces, and 
run the PD test, respectively.} 
\label{fig:res_FS_time}
\end{figure}

\smallskip
\noindent{\textbf{Scalability.}}~~Figure~\ref{fig:res_FS_time} 
shows the running time in \textdat{FS}. 
\textsyn{SGD} is much faster than \textsyn{PPMTF}. 
The reason for this lies in 
the simplicity of  \textsyn{SGD}; i.e., 
\textsyn{SGD} trains a transition matrix for each time slot 
via maximum likelihood estimation; 
it 
then synthesizes traces using 
the transition matrix. 
However, \textsyn{SGD} 
does not 
generate 
cluster-specific 
traces. 
To generate such traces, 
\textsyn{PPMTF} is necessary. 

Note that even though we used a supercomputer in our experiments, we used a single node and did not parallelize the process. 
We can also run \textsyn{PPMTF} on a regular computer with large memory. 
For example, assume that we use $8$ bytes to store a real number, and that we want to synthesize all of $219793$ traces in \textdat{IST}. 
Then, $8|\calU|(\lambda\one+\rho\one+\lambda\two+\rho\two) +  8z(|\calU|+2|\calX|+|\calL|) = 3.9$ GB memory is required to perform MTF, and the other processes need less memory.
%
\textsyn{PPMTF} could also be parallelized by using asynchronous Gibbs sampling \cite{Terenin_AISTATS20}.

\section{Conclusion}
\label{sec:conc}
In this paper, we proposed 
PPMTF (Privacy-Preserving Multiple Tensor Factorization), 
a location synthesizer that 
preserves various statistical features, 
protects user privacy, and synthesizes large-scale location traces in practical time. 
Our experimental results showed that PPMTF significantly outperforms two state-of-the-art location synthesizers \cite{Bindschaedler_SP16,Bindschaedler_VLDB17} in terms of utility and scalability at the same level of privacy.


We assumed a scenario where parameters of the generative model are kept secret (or discarded after synthesizing traces). 
As future work, we would like to design a location synthesizer that provides strong privacy guarantees 
in a scenario where the parameters of the generative model are made public. 
For example, one possibility might be to release only parameters $(\bmB, \bmC, \bmD)$ (i.e., location and time profiles) and randomly generate $\bmA$ (i.e., user profile) from some distribution. We would like to investigate how much this approach can reduce $\epsilon$ in \DP{}.





\bibliographystyle{abbrv}
\bibliography{main}

\begin{thebibliography}{10}

\bibitem{PPMTF}
Tool: Privacy-preserving multiple tensor factorization ({PPMTF}).
\newblock https://github.com/PPMTF/PPMTF.

\bibitem{PWSCup2019}
{PWS Cup} 2019.
\newblock https://www.iwsec.org/pws/2019/cup19\_e.html, 2019.

\bibitem{recommender_systems}
C.~C. Aggarwal.
\newblock {\em Recommender Systems}.
\newblock Springer, 2016.

\bibitem{Albright_SAS14}
R.~Albright, J.~Cox, D.~Duling, A.~N. Langville, and C.~D. Meyer.
\newblock Algorithms, initializations, and convergence for the nonnegative
  matrix factorization.
\newblock {\em SAS Technical Report}, pages 1--18, 2014.

\bibitem{Biagioni_TRB12}
J.~Biagioni and J.~Eriksson.
\newblock Inferring road maps from global positioning system traces: Survey and
  comparative evaluation.
\newblock {\em Journal of the Transportation Research Board},
  2291(2291):61--71, 2012.

\bibitem{Bilogrevic_WPES13}
I.~Bilogrevic, K.~Huguenin, M.~Jadliwala, F.~Lopez, J.-P. Hubaux, P.~Ginzboorg,
  and V.~Niemi.
\newblock Inferring social ties in academic networks using short-range wireless
  communications.
\newblock In {\em Proceedings of the 12th ACM Workshop on Privacy in the
  Electronic Society (WPES'13)}, pages 179--188, 2013.

\bibitem{SGLT}
V.~Bindschaedler and R.~Shokri.
\newblock Synthetic location traces generator (sglt).
\newblock https://vbinds.ch/node/70.

\bibitem{Bindschaedler_SP16}
V.~Bindschaedler and R.~Shokri.
\newblock Synthesizing plausible privacy-preserving location traces.
\newblock In {\em Proceedings of the 2016 IEEE Symposium on Security and
  Privacy (S\&P'16)}, pages 546--563, 2016.

\bibitem{Bindschaedler_VLDB17}
V.~Bindschaedler, R.~Shokri, and C.~A. Gunter.
\newblock Plausible deniability for privacy-preserving data synthesis.
\newblock {\em Proceedings of the {VLDB} Endowment}, 10(5):481--492, 2017.

\bibitem{prml}
C.~Bishop.
\newblock {\em Pattern Recognition and Machine Learning}.
\newblock Springer, 2006.

\bibitem{Chatzikokolakis_FTPS17}
K.~Chatzikokolakis, E.~Elsalamouny, C.~Palamidessi, and A.~Pazii.
\newblock Methods for location privacy: A comparative overview.
\newblock {\em Foundations and Trends in Privacy and Security}, 1(4):199--257,
  2017.

\bibitem{Chen_CCS12}
R.~Chen, G.~Acs, and C.~Castelluccia.
\newblock Differentially private sequential data publication via
  variable-length n-grams.
\newblock In {\em Proceedings of the 19th ACM Conference on Computer and
  Communications Security (CCS'12)}, pages 638--649, 2012.

\bibitem{Chen_KDD12}
R.~Chen, B.~C.~M. Fung, B.~C. Desai, and N.~M. Sossou.
\newblock Differentially private transit data publication: A case study on the
  montreal transportation system.
\newblock In {\em Proceedings of the 18th ACM SIGKDD International Conference
  on Knowledge Discovery and Data Mining (KDD'12)}, pages 213--221, 2012.

\bibitem{Gowalla}
E.~Cho, S.~A. Myers, and J.~Leskovec.
\newblock Friendship and mobility: User movement in location-based social
  networks.
\newblock In {\em Proceedings of the 17th ACM SIGKDD International Conference
  on Knowledge Discovery and Data Mining (KDD'11)}, pages 1082--1090, 2011.

\bibitem{Chow_WPES09}
R.~Chow and P.~Golle.
\newblock Faking contextual data for fun, profit, and privacy.
\newblock In {\em Proceedings of the 8th ACM Workshop on Privacy in the
  Electronic Society (WPES'09)}, pages 105--108, 2009.

\bibitem{Cichocki_Wiley09}
A.~Cichocki, R.~Zdunek, A.~H. Phan, and S.~Amari.
\newblock {\em Nonnegative Matrix and Tensor Factorizations: Applications to
  Exploratory Multi-way Data Analysis and Blind Source Separation}.
\newblock Wiley, 2009.

\bibitem{Cranshaw_ICWSM12}
J.~Cranshaw, R.~Schwartz, J.~I. Hong, and N.~Sadeh.
\newblock The livehoods project: Utilizing social media to understand the
  dynamics of a city.
\newblock In {\em Proceedings of the Sixth International AAAI Conference on
  Weblogs and Social Media (ICWSM'12)}, pages 58--65, 2012.

\bibitem{Montjoye_SR13}
Y.-A. de~Montjoye, C.~A. Hidalgo, M.~Verleysen, and V.~D. Blondel.
\newblock Unique in the crowd: The privacy bounds of human mobility.
\newblock {\em Scientific Reports}, 3(1376):1--5, 2013.

\bibitem{Do_TMC13}
T.~M.~T. Do and D.~Gatica-Perez.
\newblock The places of our lives: Visiting patterns and automatic labeling
  from longitudinal smartphone data.
\newblock {\em IEEE Transactions on Mobile Computing}, 13(3):638--648, 2013.

\bibitem{Domingo-Ferrer_PST15}
J.~Domingo-Ferrer, S.~Ricci, and J.~Soria-Comas.
\newblock Disclosure risk assessment via record linkage by a maximum-knowledge
  attacker.
\newblock In {\em Proceedings of the 13th Annual Conference on Privacy,
  Security and Trust (PST'15)}, pages 3469--3478, 2015.

\bibitem{Dwork_ICALP06}
C.~Dwork.
\newblock Differential privacy.
\newblock In {\em Proceedings of the 33rd international conference on Automata,
  Languages and Programming (ICALP'06)}, pages 1--12, 2006.

\bibitem{DP}
C.~Dwork and A.~Roth.
\newblock {\em The Algorithmic Foundations of Differential Privacy}.
\newblock Now Publishers, 2014.

\bibitem{Dwork_JPC09}
C.~Dwork and A.~Smith.
\newblock Differential privacy for statistics: What we know and what we want to
  learn.
\newblock {\em Journal of Privacy and Confidentiality}, 1(2):135--154, 2009.

\bibitem{Eagle_PNAS09}
N.~Eagle, A.~Pentland, and D.~Lazer.
\newblock Inferring friendship network structure by using mobile phone data.
\newblock {\em Proceedings of the National Academy of Sciences (PNAS)},
  106(36):15274--15278, 2009.

\bibitem{Ernvall_CJ82}
J.~Ernvall and O.~Nevalainen.
\newblock An algorithm for unbiased random sampling.
\newblock {\em The Computer Journal}, 25(1):45--47, 1982.

\bibitem{Gambs_JCSS14}
S.~Gambs, M.-O. Killijian, and M.~{N\'{u}\~{n}ez del Prado Cortez}.
\newblock De-anonymization attack on geolocated data.
\newblock {\em Journal of Computer and System Sciences}, 80(8):1597--1614,
  2014.

\bibitem{privacy_for_LBS}
G.~Ghinita.
\newblock {\em Privacy for Location-based Services}.
\newblock Morgan \& Claypool Publishers, 2013.

\bibitem{He_VLDB15}
X.~He, G.~Cormode, A.~Machanavajjhala, C.~M. Procopiuc, and D.~Srivastava.
\newblock {DPT}: Differentially private trajectory synthesis using hierarchical
  reference systems.
\newblock {\em Proceedings of the {VLDB} Endowment}, 11(8):1154--1165, 2015.

\bibitem{Hu_SIGMOD12}
H.~Hu, J.~Xu, Q.~Chen, and Z.~Yang.
\newblock Authenticating location-based services without compromising location
  privacy.
\newblock In {\em Proceedings of the 2012 ACM SIGMOD International Conference
  on Management of Data (SIGMOD'12)}, pages 301--312, 2012.

\bibitem{Iwata_AAAI19}
T.~Iwata and H.~Shimizu.
\newblock Neural collective graphical models for estimating spatio-temporal
  population flow from aggregated data.
\newblock In {\em Proceedings of the 33rd {AAAI} Conference on Artificial
  Intelligence (AAAI'19)}, pages 3935--3942, 2019.

\bibitem{Jayaraman_USENIX19}
B.~Jayaraman and D.~Evans.
\newblock Evaluating differentially private machine learning in practice.
\newblock In {\em Proceedings of the 28th USENIX Security Symposium (USENIX
  Security'19)}, pages 1895--1912, 2019.

\bibitem{Kato_SIGSPATIAL12}
R.~Kato, M.~Iwata, T.~Hara, A.~Suzuki, X.~Xie, Y.~Arase, and S.~Nishio.
\newblock A dummy-based anonymization method based on user trajectory with
  pauses.
\newblock In {\em Proceedings of the 20th International Conference on Advances
  in Geographic Information Systems (SIGSPATIAL'12)}, pages 249--258, 2012.

\bibitem{Kawamoto_ESORICS19}
Y.~Kawamoto and T.~Murakami.
\newblock Local obfuscation mechanisms for hiding probability distributions.
\newblock In {\em Proceedings of the 24th European Symposium on Research in
  Computer Security (ESORICS'19)}, pages 128--148, 2019.

\bibitem{Keshavan_NIPS09}
R.~H. Keshavan, A.~Montanari, and S.~Oh.
\newblock Matrix completion from noisy entries.
\newblock In {\em Proceedings of the 22nd Conference on Neural Information
  Processing Systems (NIPS'09)}, pages 952--960, 2009.

\bibitem{Khan_ECMLPKDD14}
S.~A. Khan and S.~Kaski.
\newblock Bayesian multi-view tensor factorization.
\newblock In {\em Proceeding of the European Conference on Machine Learning and
  Principles and Practice of Knowledge Discovery in Databases (ECML PKDD'14)},
  pages 656--671, 2014.

\bibitem{Kido_ICPS05}
H.~Kido, Y.~Yanagisawa, and T.~Satoh.
\newblock An anonymous communication technique using dummies for location-based
  services.
\newblock {\em Proceedings of the 2005 IEEE International Conference on
  Pervasive Services (ICPS'05)}, pages 88--97, 2005.

\bibitem{Krumm_PUC09}
J.~Krumm.
\newblock A survey of computational location privacy.
\newblock {\em Personal and Ubiquitous Computing}, 13(6):391--399, 2009.

\bibitem{DP_Li}
N.~Li, M.~Lyu, and D.~Su.
\newblock {\em Differential Privacy: From Theory to Practice}.
\newblock Morgan \& Claypool Publishers, 2016.

\bibitem{Liao_IJRR07}
L.~Liao, D.~Fox, and H.~Kautz.
\newblock Extracting places and activities from gps traces using hierarchical
  conditional random fields.
\newblock {\em International Journal of Robotics Research}, 26(1):119--134,
  2007.

\bibitem{Lichman_KDD14}
M.~Lichman and P.~Smyth.
\newblock Modeling human location data with mixtures of kernel densities.
\newblock In {\em Proceedings of the 20th ACM SIGKDD International Conference
  on Knowledge Discovery and Data Mining (KDD'14)}, pages 35--44, 2014.

\bibitem{Liu_KDD12}
X.~Liu, J.~Biagioni, J.~Eriksson, Y.~Wang, G.~Forman, and Y.~Zhu.
\newblock Mining large-scale, sparse gps traces for map inference: Comparison
  of approaches.
\newblock In {\em Proceedings of the 18th ACM SIGKDD international conference
  on Knowledge discovery and data mining (KDD'12)}, pages 669--677, 2012.

\bibitem{Liu_CIKM13}
X.~Liu, Y.~Liu, K.~Aberer, and C.~Miao.
\newblock Personalized point-of-interest recommendation by mining users'
  preference transition.
\newblock In {\em Proceedings of the 22nd ACM international conference on
  Information \& Knowledge Management (CIKM'13)}, pages 733--738, 2013.

\bibitem{Liu_RecSys15}
Z.~Liu, Y.-X. Wang, and A.~J. Smola.
\newblock Fast differentially private matrix factorization.
\newblock In {\em Proceedings of the 9th ACM Conference on Recommender Systems
  (RecSys'15)}, pages 171--178, 2015.

\bibitem{Matsuo_IJCAI07}
Y.~Matsuo, N.~Okazaki, K.~Izumi, Y.~Nakamura, T.~Nishimura, and K.~Hasida.
\newblock Inferring long-term user properties based on users' location history.
\newblock In {\em Proceedings of the 20th International Joint Conference on
  Artifical Intelligence (IJCAI'07)}, pages 2159--2165, 2007.

\bibitem{Meng_AAAI2018}
X.~Meng, S.~Wang, K.~Shu, J.~Li, B.~Chen, H.~Liu, and Y.~Zhang.
\newblock Personalized privacy-preserving social recommendation.
\newblock In {\em Proceedings of 32nd {AAAI} Conference on Artificial
  Intelligence (AAAI'18)}, pages 1--8, 2018.

\bibitem{Murakami_PoPETs17}
T.~Murakami.
\newblock Expectation-maximization tensor factorization for practical location
  privacy attacks.
\newblock {\em Proceedings on Privacy Enhancing Technologies (PoPETs)},
  4:138--155, 2017.

\bibitem{Murakami_TrustCom15}
T.~Murakami, A.~Kanemura, and H.~Hino.
\newblock Group sparsity tensor factorization for de-anonymization of mobility
  traces.
\newblock In {\em Proceedings of the 14th IEEE International Conference on
  Trust, Security and Privacy in Computing and Communications (TrustCom'15)},
  pages 621--629, 2015.

\bibitem{Murakami_TIFS17}
T.~Murakami, A.~Kanemura, and H.~Hino.
\newblock Group sparsity tensor factorization for re-identification of open
  mobility traces.
\newblock {\em IEEE Transactions on Information Forensics and Security},
  12(3):689--704, 2017.

\bibitem{Murakami_TIFS16}
T.~Murakami and H.~Watanabe.
\newblock Localization attacks using matrix and tensor factorization.
\newblock {\em IEEE Transactions on Information Forensics and Security},
  11(8):1647--1660, 2016.

\bibitem{mlpp}
K.~P. Murphy.
\newblock {\em Machine Learning: A Probabilistic Perspective}.
\newblock The {MIT} Press, 2012.

\bibitem{ABCI}
{National Institute of Advanced Industrial Science and Technology (AIST)}.
\newblock {AI} bridging cloud infrastructure ({ABCI}).
\newblock https://abci.ai/.

\bibitem{SNS_people_flow}
{Nightley and Center for Spatial Information Science at the University of Tokyo
  (CSIS)}.
\newblock {SNS}-based people flow data.
\newblock http://nightley.jp/archives/1954, 2014.

\bibitem{Nikolaenko_CCS13}
V.~Nikolaenko, S.~Ioannidis, U.~Weinsberg, M.~Joye, N.~Taft, and D.~Boneh.
\newblock Privacy-preserving matrix factorization.
\newblock In {\em Proceedings of the 2013 {ACM} {SIGSAC} Conference on Computer
  \& Communications Security (CCS'13)}, pages 801--812, 2013.

\bibitem{Pan_ICDM08}
R.~Pan, Y.~Zhou, B.~Cao, N.~N. Liu, R.~Lukose, M.~Scholz, and Q.~Yang.
\newblock One-class collaborative filtering.
\newblock In {\em Proceedings of the 8th IEEE International Conference on Data
  Mining (ICDM'08)}, pages 502--511, 2008.

\bibitem{matrix_cookbook}
K.~B. Petersen and M.~S. Pedersen.
\newblock The matrix cookbook.
\newblock http://matrixcookbook.com, 2012.

\bibitem{CRAWDAD}
M.~Piorkowski, N.~Sarafijanovic-Djukic, and M.~Grossglauser.
\newblock {CRAWDAD} dataset epfl/mobility (v. 2009-02-24).
\newblock http://crawdad.org/epfl/mobility/20090224, 2009.

\bibitem{Primault_CST19}
V.~Primault, A.~Boutet, S.~B. Mokhtar, and L.~Brunie.
\newblock The long road to computational location privacy: A survey.
\newblock {\em {IEEE} Communications Surveys \& Tutorials}, 21(3):2772--2793,
  2019.

\bibitem{Salakhutdinov_NIPS07}
R.~Salakhutdinov and A.~Mnih.
\newblock Probabilistic matrix factorization.
\newblock In {\em Proceedings of the 20th International Conference on Neural
  Information Processing Systems (NIPS'07)}, pages 1257--1264, 2007.

\bibitem{Salakhutdinov_ICML08}
R.~Salakhutdinov and A.~Mnih.
\newblock Bayesian probabilistic matrix factorization using markov chain monte
  carlo.
\newblock In {\em Proceedings of the 25th International Conference on Machine
  Learning (ICML'08)}, pages 880--887, 2008.

\bibitem{Sekimoto_PC11}
Y.~Sekimoto, R.~Shibasaki, H.~Kanasugi, T.~Usui, and Y.~Shimazaki.
\newblock {PFlow}: Reconstructing people flow recycling large-scale social
  survey data.
\newblock {\em {IEEE} Pervasive Computing}, 10(4):27--35, 2011.

\bibitem{Shekhar_MobiDE12}
S.~Shekhar, M.~R. Evans, V.~Gunturi, and K.~Yang.
\newblock Spatial big-data challenges intersecting mobility and cloud
  computing.
\newblock In {\em Proceedings of the 11th ACM International Workshop on Data
  Engineering for Wireless and Mobile Access (MobiDE'12)}, pages 1--12, 2012.

\bibitem{Shokri_SP17}
R.~Shokri, M.~Stronati, C.~Song, and V.~Shmatikov.
\newblock Membership inference attacks against machine learning models.
\newblock In {\em Proceedings of the 2017 IEEE Symposium on Security and
  Privacy (S\&P'17)}, pages 3--18, 2017.

\bibitem{Shokri_SP11}
R.~Shokri, G.~Theodorakopoulos, J.-Y.~L. Boudec, and J.-P. Hubaux.
\newblock Quantifying location privacy.
\newblock In {\em Proceedings of the 2011 IEEE Symposium on Security and
  Privacy (S\&P'11)}, pages 247--262, 2011.

\bibitem{Song_TMC06}
L.~Song, D.~Kotz, R.~Jain, and X.~He.
\newblock Evaluating next-cell predictors with extensive wi-fi mobility data.
\newblock {\em IEEE Transactions on Mobile Computing}, 5(12):1633--1649, 2006.

\bibitem{Suzuki_GIS10}
A.~Suzuki, M.~Iwata, Y.~Arase, T.~Hara, X.~Xie, and S.~Nishio.
\newblock A user location anonymization method for location based services in a
  real environment.
\newblock In {\em Proceedings of the 18th {SIGSPATIAL} International Conference
  on Advances in Geographic Information Systems (GIS'10)}, pages 398--401,
  2010.

\bibitem{Takeuchi_ICDM13}
K.~Takeuchi, R.~Tomioka, K.~Ishiguro, A.~Kimura, and H.~Sawada.
\newblock Non-negative multiple tensor factorization.
\newblock In {\em Proceedings of the IEEE 13th International Conference on Data
  Mining (ICDM'13)}, pages 1199--1204, 2013.

\bibitem{Terenin_AISTATS20}
A.~Terenin, D.~Simpson, and D.~Draper.
\newblock Asynchronous gibbs sampling.
\newblock In {\em Proceedings of the 23rdInternational Conference on Artificial
  Intelligence and Statistics (AISTATS'20)}, pages 144--154, 2020.

\bibitem{Wang_ICML15}
Y.-X. Wang, S.~E. Fienberg, and A.~J. Smola.
\newblock Privacy for free: Posterior sampling and stochastic gradient monte
  carlo.
\newblock In {\em Proceedings of the 32nd International Conference on
  International Conference on Machine Learning (ICML'15)}, pages 2493--2502,
  2015.

\bibitem{Yang_WWW19}
D.~Yang, B.~Qu, J.~Yang, and P.~Cudre-Mauroux.
\newblock Revisiting user mobility and social relationships in {LBSN}s: A
  hypergraph embedding approach.
\newblock In {\em Proceedings of the 2019 World Wide Web Conference (WWW'19)},
  pages 2147--2157, 2019.

\bibitem{Yang_TIST16}
D.~Yang, D.~Zhang, and B.~Qu.
\newblock Participatory cultural mapping based on collective behavior data in
  location based social network.
\newblock {\em ACM Transactions on Intelligent Systems and Technology},
  7(3):30:1--30:23, 2016.

\bibitem{Ye_KDD11}
M.~Ye, D.~Shou, W.-C. Lee, P.~Yin, and K.~Janowicz.
\newblock On the semantic annotation of places in location-based social
  networks.
\newblock In {\em Proceedings of the 17th ACM SIGKDD International Conference
  on Knowledge Discovery and Data Mining (KDD'11)}, pages 520--528, 2011.

\bibitem{Yeom_CSF18}
S.~Yeom, I.~Giacomelli, M.~Fredrikson, and S.~Jha.
\newblock Privacy risk in machine learning: Analyzing the connection to
  overfitting.
\newblock In {\em Proceedings of the 2018 IEEE 31st Computer Security
  Foundations Symposium (CSF'18)}, pages 268--282, 2018.

\bibitem{You_MDM07}
T.-H. You, W.-C. Peng, and W.-C. Lee.
\newblock Protecting moving trajectories with dummies.
\newblock In {\em Proceedings of the 2007 International Conference on Mobile
  Data Management (MDM'07)}, pages 278--282, 2007.

\bibitem{Zheng_LBSN09}
V.~W. Zheng, Y.~Zheng, and Q.~Yang.
\newblock Joint learning user's activities and profiles from {GPS} data.
\newblock In {\em Proceedings of the 2009 International Workshop on Location
  Based Social Networks (LBSN'09)}, pages 17--20, 2009.

\bibitem{Geolife}
Y.~Zheng, X.~Xie, and W.-Y. Ma.
\newblock {G}eo{L}ife: A collaborative social networking service among user,
  location and trajectory.
\newblock {\em IEEE Data Engineering Bulletin}, 32(2):32--40, 2010.

\bibitem{Zheng_WWW09}
Y.~Zheng, L.~Zhang, X.~Xie, and W.-Y. Ma.
\newblock Mining interesting locations and travel sequences from {GPS}
  trajectories.
\newblock In {\em Proceedings of the 18th International Conference on World
  Wide Web (WWW'09)}, pages 791--800, 2009.

\end{thebibliography}

\appendix

\section{Notations and Abbreviations}
\label{sec:notation_table}

Tables~\ref{tab:notations} and \ref{tab:abbreviations} respectively 
show the basic notations 
and abbreviations 
used in this paper. 

\begin{table}[t]
\caption{Basic notations in this paper ($\dagger$ represents $\mathrm{I}$ or $\mathrm{II}$).} 
\centering
\hbox to\hsize{\hfil
\begin{tabular}{l|l}
\hline
Symbol		&	Description\\
\hline
$\calU$	    &		Finite set of training users.\\
$\calX$		&		Finite set of locations.\\
$\calT$	    &		Finite set of time instants over $\nats$.\\
$\calL$	    &		Finite set of time slots ($\calL \subseteq \powerset(\calT)$).\\
$\calE$		&		Finite set of events ($\calE = \calX \times \calT$).\\
$\calR$		&		Finite set of traces ($\calR = \calU \times \calE^*$).\\
$\calS$	    &		Finite set of training traces ($\calS \subseteq \calR$).\\
$\calF$     &       Randomized algorithm with domain $\powerset(\calR)$.\\
$\calM$     &       Generative model.\\
$u_n$		&		$n$-th training user ($u_n \in \calU$).\\
$x_i$		&		$i$-th location ($x_i \in \calX$).\\
$s_n$	    &		$n$-th training trace ($s_n \in \calS$).\\
$y$         &       Synthetic trace ($y \in \calR$).\\
$\bmR$              &       Tuple of two tensors ($\bmR = (\bmR\one, \bmR\two)$).\\
$\hbmR^\dagger$     &       Reconstructed tensors by $\The$.\\
$r_{n,i,j}^\dagger$ &       ($n,i,j$)-th element of $\bmR^\dagger$.\\
$\hat{r}_{n,i,j}^\dagger$     &       ($n,i,j$)-th element of $\hbmR^\dagger$.\\
$\The$              &       Tuple of MTF parameters ($\The = (\bmA, \bmB, \bmC, \bmD)$).\\
$z$                 &       Number of columns in each factor matrix.\\
$\calFPro$          &       Proposed training algorithm.\\
$\calMPro$          &       Proposed generative model.\\
$\bmQ_{n,i}$        &       Transition-probability matrix of user $u_n$ for \\
                    &       time slot $l_i$ in $\calMPro$.\\
$\pi_{n,i}$         &       Visit-probability vector of user $u_n$ for time \\
                    &       slot $l_i$ in $\calMPro$.\\
$\lambda^\dagger$   &       Maximum number of positive elements per \\
                    &       user in $\bmR^\dagger$.\\
$\rho^\dagger$      &       Number of 
selected 
zero elements per user \\
                    &       in $\bmR^\dagger$.\\
$r_{max}^\dagger$   &       Maximum value of counts for each element \\
                    &       in $\bmR^\dagger$.\\
$I_{n,i,j}^\dagger$ &       Indicator function that takes $0$ if $r_{n,i,j}^\dagger$ is \\
                    &       missing, and takes $1$ otherwise.\\
\hline
\end{tabular}
\hfil}
\label{tab:notations}
\end{table}
\begin{table}[t]
\caption{Abbreviations in this paper.} 
\centering
\hbox to\hsize{\hfil
\begin{tabular}{l|l}
\hline
Abbreviation	&	Description\\
\hline
\textsyn{PPMTF}	&	Proposed location traces generator.\\
\textsyn{SGLT}	&	Synthetic location traces generator 
in \cite{Bindschaedler_SP16}.\\
\textsyn{SGD}	&	Synthetic data generator 
in \cite{Bindschaedler_VLDB17}.\\
\textdat{PF}	    &	SNS-based people flow data \cite{SNS_people_flow}.\\
\textdat{FS}	    &	Foursquare dataset \cite{Yang_WWW19}.\\
\textdat{IST}/\textdat{JK}/\textdat{NYC}/  &   Istanbul/Jakarta/New York City/\\
\textdat{KL}/\textdat{SP}/\textdat{TKY}    &  Kuala Lumpur/San Paulo/Tokyo.\\
\textutl{TP-TV(-Top50)}  &   Average total variation between time-\\
                &   dependent population distributions \\
                &   (over $50$ frequently visited locations).\\
\textutl{TM-EMD-X/Y} &   Earth Mover's Distance between transition-\\
                    &   probability matrices over the $x$/$y$-axis.\\
\textutl{VF-TV}  &   Total variation between distributions of \\
                &   visit-fractions.\\
\hline
\end{tabular}
\hfil}
\label{tab:abbreviations}
\end{table}

\section{Time Complexity}
\label{sec:time_complexity}

Assume that 
we generate a synthetic trace from each training trace $s_n \in \calS$ (i.e., $|\calU|$ synthetic traces in total). 
Assume that 
$\lambda\one$, $\rho\one$, $\lambda\two$, $\rho\two$, $z$, and $|\calU^*|$ are constants. 

In 
step (i), 
we simply count the number of transitions and the number of visits from a training trace set $\calS$. 
Consequently, 
the computation time of this step is 
much smaller than 
that of the remaining 
three 
steps.

In 
step (ii), we first randomly 
select 
$\rho\one$ and $\rho\two$ zero elements 
for each user in  $\bmR\one$ and $\bmR\two$, respectively. 
This can be done in 
$O(|\calU|)$ 
time in total by using a sampling technique in \cite{Ernvall_CJ82}. 
Subsequently, we train the MTF parameters $\The$ via 
Gibbs sampling. 
The computation time of 
Gibbs sampling can be expressed as 
$O(|\calU| + |\calX| + |\calL|)$. 

In 
step (iii), 
we generate synthetic traces 
via the MH algorithm. 
This 
is dominated by 
computation of 
the transition-probability matrices 
$\bmQ_n^*$, $\bmQ_{n,1}, \cdots, \bmQ_{n,|\calL|}$ for each training trace $s_n$, 
which takes 
$O(|\calU| |\calX|^2 |\calL|)$ 
time in total. 
Then we generate a synthetic trace $y$, which  
takes 
$O(|\calU| |\calX| |\calL|)$ time.

In 
step (iv), 
the faster version of \textbf{Privacy Test 1} in Section~\ref{sub:privacy} 
computes 
the transition-probability matrices 
$\bmQ_m^*$, $\bmQ_{m,1}, \cdots, \bmQ_{m,|\calL|}$ for each training trace $s_m \in \calS^*$, which takes 
$O(|\calX|^2 |\calL|)$ 
time in total. 
Subsequently, we check whether $k' \geq k$ for each training trace $s_n \in \calS$, which takes 
$O(|\calU| |\calX|  |\calL|)$ 
time in total.

In summary, 
the time complexity of 
the proposed method can be expressed as 
$O(|\calU| |\calX|^2 |\calL|)$. 



\section{Details on SGD}
\label{sec:details_SGD}

\textsyn{SGD} \cite{Bindschaedler_VLDB17} is a 
synthetic generator 
for any kind of data, which works as follows: 
(i) Train the dependency structure (graph) between data attributes; 
(ii) Train conditional probabilities for each attribute given its parent attributes; 
(iii) Generate a synthetic data record from an input data record by copying the top $\gamma \in \nnints$ attributes from the input data record and generating the remaining attributes using the trained conditional probabilities. 
Note that the dependency structure and the conditional probabilities are common to all users. 

We applied \textsyn{SGD} to 
synthesis of 
location traces as follows. 
We regarded an event as an attribute, and 
a location trace of length $|\calT|$ as a data record with $|\calT|$ attributes. 
Then it would be natural to consider that 
the dependency structure is given by the time-dependent Markov chain model 
as in \textsyn{PPMTF} and \textsyn{SGLT}, and 
the conditional probabilities are given by the transition matrix for each time slot. 
In other words, we 
need not 
train the dependency structure; i.e., we can skip (i). 

We trained the transition matrix $\tilde{\bmQ}_i \in \calQ$ for each time slot $l_i \in \calL$ ($|\calL| \times |\calX| \times |\calX|$ elements in total) and the visit-probability vector $\tilde{\pi} \in \calC$ for the first time instant ($|\calX|$ elements in total) 
from the training traces via maximum likelihood estimation. 
Then we synthesized a trace from 
an input user $u_n$ 
by copying the first $\gamma$ events 
in the training trace $s_n$ of $u_n$
and by generating the remaining events using the transition matrix. 
When $\gamma=0$, we generated a location at the first time instant using the visit-probability vector. 
Thus the parameters of the generative model $\calM_n$ of user $u_n$ can be expressed as: $(\tilde{\bmQ}_1, \cdots, \tilde{\bmQ}_{|\calL|},\tilde{\pi},s_n$). 

\textsyn{SGD} can provide $(\epsilon, \delta)$-\DP{} for one synthetic trace $y$ 
by using a \textit{randomized test} \cite{Bindschaedler_VLDB17}, which randomly selects 
an input user $u_n$ from $\calU$ 
and 
adds the Laplacian noise to the parameter $k$ in $(k, \eta)$-\PD{}. 
However, both $\epsilon$ and $\delta$ can be large for multiple synthetic traces generated from 
the same input user, 
as discussed in \cite{Bindschaedler_VLDB17}. 
Thus we 
did not use the randomized test in our experiments.


\section{Details on Privacy Attacks}
\label{sec:attack_algorithms}
\noindent{\textbf{Re-identification algorithm.}}~~We used the Bayesian re-identification algorithm in \cite{Murakami_TrustCom15}. 
Specifically, 
we first trained the transition matrix for each training user from the training traces via maximum likelihood estimation. 
Then we re-identified each synthetic trace $y$ 
by selecting a training user 
whose posterior probability of being the input user 
is the highest. 
Here we computed the posterior probability 
by 
calculating a likelihood for each training user 
and assuming a uniform prior for users. 
We calculated the likelihood by 
simply calculating a likelihood for each transition in $y$ using the transition matrix and multiplying them. 
We assigned a small positive value ($= 10^{-8}$) to zero elements in the transition matrix so that the likelihood never becomes $0$. 

\smallskip
\noindent{\textbf{Membership inference algorithm.}}~~We considered a likelihood ratio-based membership inference algorithm, which partly uses the algorithm in \cite{Murakami_TIFS17} as follows. 

Let $\calV$ be a finite set of all training and testing users (each of them is either a member or a non-member; $|\calV|=1000$ in \textdat{PF}), and $v_n \in \calV$ be the $n$-th user. 
Assume that the adversary attempts to determine whether user $v_n$ is a training user (i.e., member) or not. 
Since 
each training user is used as an input user 
to generate a synthetic trace, the adversary can perform the membership inference by determining, 
for each synthetic trace $y$, whether $v_n$ is used as an input user to generate $y$. 
To perform this two-class classification (i.e., $v_n$ is an input user of $y$ or not), we used the likelihood ratio-based two-class classification algorithm in \cite{Murakami_TIFS17}.
 

Specifically, given user $v_n$ and synthetic trace $y$, let $H_1$ (resp.~$H_0$) be the hypothesis that 
$v_n$ is (resp.~is not) an input user of $y$. 
We first trained the transition matrix for each of $|\calV|=1000$ users from her (original) trace. 
Let $\bmW_n$ be the transition matrix of user $v_n$. 
We calculated a \textit{population transition matrix} $\bmW_0$, which models the average behavior of users other than $v_n$ as the average of $\bmW_m$ ($m \neq n$); i.e., 
$\bmW_0 = \frac{1}{|\calV|-1} \sum_{m \neq n} \bmW_m$. 
Let $z_1$ (resp.~$z_0$) $\in \reals$ be the likelihood of $y$ given $H_1$ (resp.~$H_0$). 
We calculated $z_1$ (resp.~$z_0$) simply by calculating a likelihood for each transition in $y$ using the transition matrix $\bmW_n$ (resp. $\bmW_0$) and multiplying them (as in the re-identification attack). 
Then we compared the log-likelihood ratio $\log \frac{z_1}{z_0}$ with a threshold $\psi \in \reals$. 
If $\log \frac{z_1}{z_0} \geq \psi$, we accepted $H_1$; 
otherwise, we accepted $H_0$. 

We performed this two-class classification for each synthetic trace $y$. 
If we accepted $H_1$ for at least one synthetic trace $y$, then we decided that $v_n$ is a member. Otherwise, we decided that $v_n$ is a non-member. 

In our experiments, we changed the threshold $\psi$ to various values. 
Then we evaluated, for each location synthesizer, 
the maximum membership advantage over the various thresholds (Figure~\ref{fig:res_PF_visual} shows the results).

\arxiv{
\section{Effect of Sharing \textbf{A} and \textbf{B}}
\label{sec:PPMTF_share}

The proposed method (\textsyn{PPMTF}) shares the factor matrices $\textbf{A}$ and $\textbf{B}$ between two tensors. 
Here we show the effects of sharing them.
Specifically, 
we compare the proposed method with a method that \textit{independently} factorizes each tensor; i.e., 
factorizes $\bmR\one$ into factor matrices 
$\bmA\one \in \reals^{|\calU| \times z}$, 
$\bmB\one \in \reals^{|\calX| \times z}$, 
$\bmC\one \in \reals^{|\calX| \times z}$, and 
$\bmR\two$ into factor matrices 
$\bmA\two \in \reals^{|\calU| \times z}$, 
$\bmB\two \in \reals^{|\calX| \times z}$, 
$\bmD\two \in \reals^{|\calL| \times z}$, respectively.
We train these factor matrices via 
Gibbs sampling. 
Then we 
generate synthetic traces 
via the MH algorithm in the same way 
as \textsyn{PPMTF}. 
We denote this method by \textbf{ITF} (Independent Tensor Factorization). 

We compare \textsyn{PPMTF} with \textbf{ITF} 
using the Foursquare dataset (\textdat{FS}). 
Here we selected Tokyo (\textdat{TKY}) as a city 
(we also evaluated the other cities and 
obtained similar results). 
We used the same parameters as 
those described 
in Section~\ref{sub:location_synthesizers}. 
Then we evaluated the reconstruction errors 
of $\bmR\one$ and $\bmR\two$. 
Specifically, 
we evaluated the sum of the $l_1$-loss (absolute error) between $\bmR\one$ and $\hbmR\one$. 
We first evaluated the sum of the $l_1$-loss for \textit{observed elements} (i.e., positive elements or zero elements treated as 0s). 
Then we evaluated the sum of the $l_1$-loss for \textit{unobserved elements}
(i.e., zero elements treated as missing). 
Note that the number of unobserved elements is very large. 
Specifically, let $\zeta\one \in \nnints$ be the total number of unobserved elements in $\bmR\one$. Then $\zeta\one$ is close to $|\calU| \times |\calX| \times |\calX|$ ($= 47956 \times 1000 \times 1000$) 
because 
$\bmR\one$ is very sparse (note that $\lambda\one = 10^2$ and $\rho\one = 10^3$, as described in Section~\ref{sub:location_synthesizers}). 
Thus, we randomly selected $1000$ unobserved elements for each user and evaluated the sum of the $l_1$-loss for the selected elements. 
Then we multiplied the $l_1$-loss by $\frac{\zeta\one}{1000|\calU|}$ to estimate the $l_1$-loss for all of the unobserved elements.
We evaluated the $l_1$-loss for the observed and unobserved elements in $\bmR\two$ in the same way. 
We also evaluated all of the utility metrics 
in Section~\ref{sub:performance_metrics}. 

\begin{figure}[t]
\centering
\includegraphics[width=0.9\linewidth]{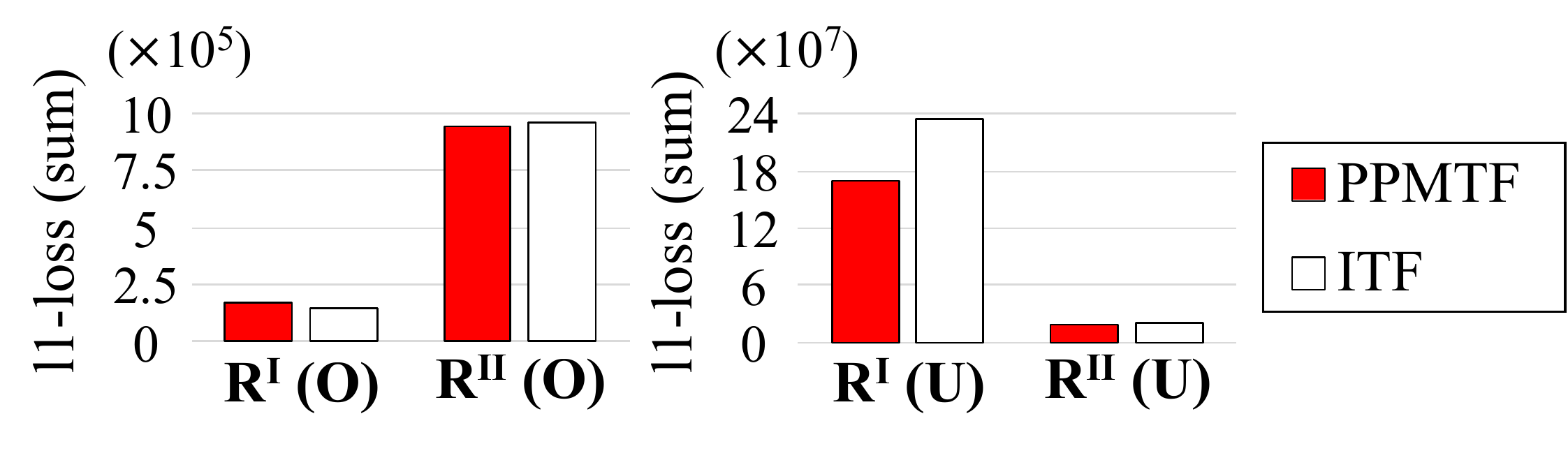}
\vspace{-4mm}
\caption{Reconstruction errors (sum of the $l_1$ loss) in $\bmR\one$ and $\bmR\two$ (\textdat{TKY}). ``O'' and ``U'' in the parentheses represent observed elements and unobserved elements, respectively.} 
\label{fig:res_FS_recerror}
\end{figure}
\begin{figure}[t]
\centering
\includegraphics[width=0.9\linewidth]{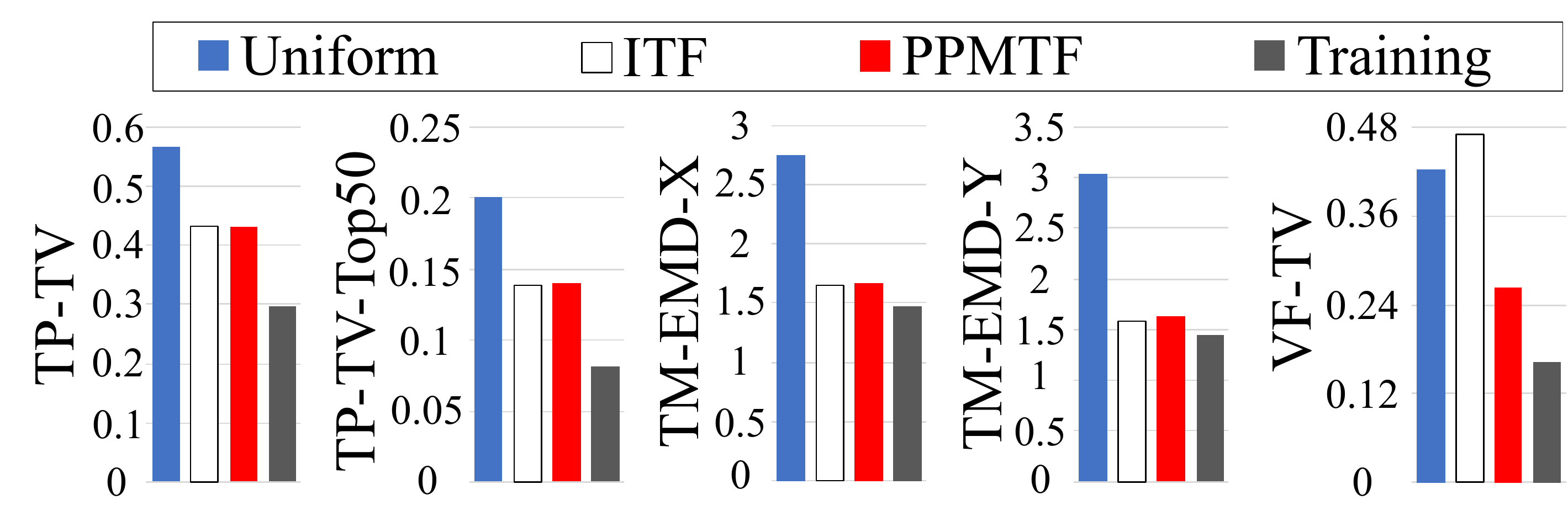}
\vspace{-4mm}
\caption{Utility of \textsyn{PPMTF} and \textbf{ITF} (\textdat{TKY}).} 
\label{fig:res_FS_ITF}
\end{figure}


Figure~\ref{fig:res_FS_recerror} shows the 
reconstruction errors in $\bmR\one$ and $\bmR\two$. 
\textsyn{PPMTF} significantly outperforms \textbf{ITF} with regard to 
the reconstruction error of unobserved elements in $\bmR\one$. 
This is because $\bmR\one$ (which 
includes 
$47956 \times 1000 \times 1000$ elements) 
is much more sparse than $\bmR\two$ (which 
includes 
$47956 \times 1000 \times 24$ elements) 
and $\bmR\two$ compensates for the sparseness of $\bmR\one$ in \textsyn{PPMTF} by sharing 
$\bmA$ and $\bmB$. 
This is consistent with the experimental results in \cite{Takeuchi_ICDM13}, where multiple 
tensor factorization works well especially when one of two tensors is extremely sparse.

Figure~\ref{fig:res_FS_ITF} shows the utility of \textsyn{PPMTF} and \textbf{ITF} 
(here we do not run the \PD{} test; even if we use the ($10,1$)-\PD{} test, the utility of \textsyn{PPMTF} is hardly changed, as shown in 
Figure~\ref{fig:res_FS_util_priv}). 
\textsyn{PPMTF} significantly outperforms \textbf{ITF} 
in terms of 
\textutl{VF-TV}. 
We consider that 
this is because \textsyn{PPMTF} trains $\bmA$ and $\bmB$, which model 
the similarity structure among users and locations, respectively, 
more accurately by sharing them between $\bmR\one$ and $\bmR\two$. 
Consequently, \textsyn{PPMTF} generates user-specific (or cluster-specific) traces more effectively. 

In summary, \textsyn{PPMTF} addresses the sparseness of $\bmR\one$ and 
achieves high utility by sharing factor matrices. 
}

\section{Relationship between $k$ and the PD Test Pass Rate}
\label{sec:PD_k}

\begin{figure}[t]
\centering
\includegraphics[width=0.75\linewidth]{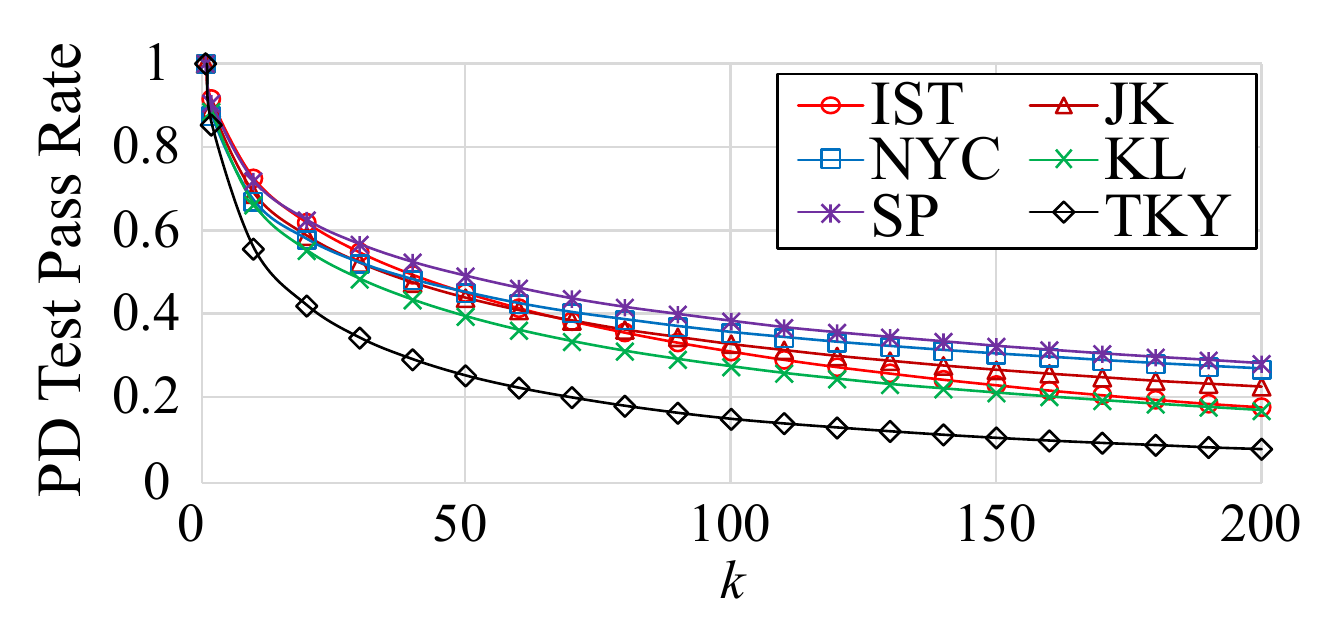}
\caption{Relationship between $k$ and the PD test pass rate.} 
\label{fig:res_FS_PDPass}
\end{figure}

We evaluated the \textit{PD test pass rate}, which is the proportion of synthetic traces that have passed the \PD{} test to all synthetic traces 
when we changed $k$ from $1$ to $200$. 
We set the other parameters to the same values as in Section~\ref{sec:exp} (e.g., $\eta=1$, $|\calU^*|=32000$). 

Figure~\ref{fig:res_FS_PDPass} shows the results obtained for six cities in \textdat{FS}. 
The PD test pass rate decreases 
with an 
increase in $k$. 
For example, 
the PD test pass rate is about 
$70\%$ when $k=10$, 
whereas it is about $20\%$ when $k=200$. 


Note that when $k=200$, the PD test pass rate of \textdat{IST} ($17.9\%$) is lower than that of \textdat{NYC} ($26.9\%$), as shown in Figure~\ref{fig:res_FS_PDPass}. 
Nevertheless, \textsyn{PPMTF} significantly outperforms \textbf{Uniform} with regard to all of the utility metrics in \textdat{IST}, as shown in Figure~\ref{fig:res_FS_util_priv}. 
This is because the number of users is very large in \textdat{IST} ($|\calU| = 219793$). 
Consequently, 
even if the PD test pass rate is low, 
many 
synthetic traces still pass the test and preserve 
various statistical features. 

Therefore, \textsyn{PPMTF} achieves high utility especially for a large-scale dataset. 



\section{DP for the MTF parameters $\The$}
\label{sec:proof_DP}

Here we explain \DP{} (Differential Privacy) \cite{Dwork_ICALP06,DP} as a privacy metric (Appendix~\ref{sub:DP_def}). 
Then we analyze the privacy budget $\epsilon$ in DP for the MTF parameters $\The$ in \textsyn{PPMTF} (Appendix~\ref{sub:proof_DP_theory}), and evaluate $\epsilon$ for $\Theta$ using the Foursquare dataset (Appendix~\ref{sub:DP_The}).

\subsection{Differential Privacy}
\label{sub:DP_def}
We define the notion of neighboring data sets in the same way as \cite{DP,Liu_RecSys15,Wang_ICML15} as follows. 
Let $\calS, \calS' \subseteq \calR$ be two sets of training traces. 
We say $\calS$ and $\calS'$ are \textit{neighboring} if they differ 
by 
at most one trace and 
include 
the same number of traces, i.e., $|\calS| = |\calS'|$. 
For example, 
given a trace $s'_1 \in \calR$, 
$\calS = \{s_1, s_2, s_3\}$ 
and 
$\calS' = \{s'_1, s_2, s_3\}$ are neighboring. 
%
%
%
%
Then \DP{} 
\cite{Dwork_ICALP06,DP} 
is defined as follows:
\begin{definition} [$\epsilon$-\DP{}] 
\label{def:DP} 
Let 
$\epsilon \in \nngreals$. 
A randomized algorithm $\calF$ 
with domain $\powerset(\calR)$ 
provides 
\emph{$\epsilon$-\DP{}} 
if for 
any neighboring 
$\calS, \calS' \subseteq \calR$ 
and 
any $Z \subseteq \mathrm{Range}(\calF)$, 
\begin{align}
\hspace{-2mm} e^{-\epsilon} p(\calF(\calS') \in Z) \leq p(\calF(\calS) \in Z) \leq e^\epsilon p(\calF(\calS') \in Z).
\label{eq:DP}
\end{align}
\end{definition}
$\epsilon$-\DP{} guarantees that an adversary who has observed 
the output of $\calF$ 
cannot determine, for any pair of $\calS$ and $\calS'$, whether it comes from $\calS$ or $\calS'$ (i.e., a particular user's trace is included in the training trace set) with a certain degree of confidence. 
As the privacy budget $\epsilon$ approaches 
$0$, 
$\calS$ and $\calS'$ become almost equally likely, which means 
that a user's privacy is strongly protected. 

\subsection{Theoretical Analysis}
\label{sub:proof_DP_theory}

We now analyze the privacy budget $\epsilon$ in DP for the MTF parameters $\The$ in \textsyn{PPMTF}. 

Let 
$\calFPro$ be our training algorithm in 
step (ii), which takes as input the training trace set $\calS$ and outputs 
the MTF parameters $\The$. 
Assume that $\The$ is sampled from the exact posterior distribution $p(\The | \bmR)$. 

Recall that 
the maximum counts in $\bmR\one$ and $\bmR\two$ are 
$r_{max}\one$ and $r_{max}\two$,
respectively, as defined 
in Section~\ref{sub:comp_tensors}. 
Let $\kappa \in \nngreals$ 
be a non-negative real number such 
that 
$\hat{r}_{n,i,j}\one \in [-\kappa, \allowbreak r_{max}\one+\kappa]$ and
$\hat{r}_{n,i,j}\two \in [-\kappa, r_{max}\two+\kappa]$
for each triple ($n,i,j$).
The value of $\kappa$ can be made small by iterating the sampling of $\The$ until we find $\The$ with small $\kappa$ 
\cite{Liu_RecSys15}. 
Note that this ``retry if fail'' procedure guarantees that $\The$ is sampled from the posterior distribution under the constraint that $\hat{r}_{n,i,j}\one$ and $\hat{r}_{n,i,j}\two$ are bounded as above (see the proof of Theorem 1 in \cite{Liu_RecSys15}). 
Then we obtain:

\begin{proposition}
\label{prop:DP} 
$\calFPro$ provides $\epsilon$-\DP{}, where
\begin{align}
\epsilon 
=& \alpha \left (\min\{3\lambda\one, \lambda\one + \rho\one\} (r_{max}\one +\kappa)^2 \right. \nonumber\\
& \left. + \min\{3\lambda\two, \lambda\two + \rho\two\} (r_{max}\two +\kappa)^2 \right).
\label{eq:epsilon_prop_DP}
\end{align}
\end{proposition}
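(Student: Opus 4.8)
The plan is to reduce the statement to a uniform bound on the \emph{data-dependent part} of the log-likelihood and then invoke the ``privacy for free'' property of posterior sampling \cite{Wang_ICML15}. Since $\calFPro$ returns an exact sample from $p(\The \mid \bmR)$, write $\log p(\bmR \mid \The) - \log p(\bmR' \mid \The) = C + g(\The)$ for neighboring trace sets $\calS, \calS'$ (inducing tensors $\bmR, \bmR'$), where $C$ collects the $\The$-independent pieces and $g(\The)$ the $\The$-dependent ones. The key observation is that $C$ cancels in the posterior ratio: for any $\The$,
\[
\frac{p(\The\mid\calS)}{p(\The\mid\calS')}
= \frac{p(\bmR\mid\The)}{p(\bmR'\mid\The)}\cdot\frac{p(\bmR')}{p(\bmR)}
= e^{g(\The)}\cdot\frac{\int p(\bmR'\mid\The')\,p(\The')\,d\The'}{\int e^{g(\The')}p(\bmR'\mid\The')\,p(\The')\,d\The'},
\]
so with $G := \sup_{\The}|g(\The)|$ both factors lie in $[e^{-G},e^{G}]$ and the ratio lies in $[e^{-2G},e^{2G}]$. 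Integrating over any $Z$ yields (\ref{eq:DP}) with $\epsilon = 2G$. This already explains why no $\log\alpha$ term survives in (\ref{eq:epsilon_prop_DP}): the additive Gaussian normalizers $\tfrac12\log\tfrac{\alpha}{2\pi}$ are exactly the $\The$-independent contributions, and they vanish in the ratio regardless of how many entries are observed.

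Next I would open up $g(\The)$ using (\ref{eq:p_bmR_The}). Because neighboring sets differ in a single trace, only user $n$'s slice of each tensor changes, so $g(\The)$ is a sum over that slice of differences of quadratic terms $-\tfrac{\alpha}{2}(r_{n,i,j}^{\dagger}-\hat r_{n,i,j}^{\dagger})^2$, where the reconstructed mean $\hat r_{n,i,j}^{\dagger}$ depends only on $\The$ and is thus the same under $\calS$ and $\calS'$. Two facts bound each surviving summand. First, the truncation $r_{n,i,j}^{\dagger}\in[0,r_{max}^{\dagger}]$ together with the hypothesis $\hat r_{n,i,j}^{\dagger}\in[-\kappa,\,r_{max}^{\dagger}+\kappa]$ forces $(r_{n,i,j}^{\dagger}-\hat r_{n,i,j}^{\dagger})^2\le (r_{max}^{\dagger}+\kappa)^2$; hence any entry whose contribution differs between $\calS$ and $\calS'$ (whether it is observed in both with changed value, or in exactly one dataset) moves $g(\The)$ by at most $\tfrac{\alpha}{2}(r_{max}^{\dagger}+\kappa)^2$ in absolute value. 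Second, entries that are zero in both datasets contribute identically and cancel. Therefore $|g(\The)|\le \tfrac{\alpha}{2}\sum_{\dagger} M^{\dagger}(r_{max}^{\dagger}+\kappa)^2$, where $M^{\dagger}$ is the number of slice entries of tensor $\bmR^{\dagger}$ whose contribution actually differs.

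The crux is bounding $M^{\dagger}$ by $\min\{3\lambda^{\dagger},\lambda^{\dagger}+\rho^{\dagger}\}$. One bound is immediate: at most $\lambda^{\dagger}+\rho^{\dagger}$ entries are observed for user $n$ in a given dataset (at most $\lambda^{\dagger}$ trimmed positives plus $\rho^{\dagger}$ sampled zeros), which caps the differing entries that can enter through that dataset. The sharper $3\lambda^{\dagger}$ bound follows by classifying the slice entries by their (positive/zero) status in the two datasets and coupling the zero-sampling randomness across $\calS$ and $\calS'$: at most $\lambda^{\dagger}$ entries are positive under $\calS$ and at most $\lambda^{\dagger}$ under $\calS'$, so at most $2\lambda^{\dagger}$ positions flip between positive and zero, and under the coupling the sampled zero-sets can differ only at such flipped positions while entries sampled as zeros on both sides contribute value $0$ on both sides and cancel. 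Taking the smaller of the two estimates gives $M^{\dagger}\le\min\{3\lambda^{\dagger},\lambda^{\dagger}+\rho^{\dagger}\}$, whence $G\le \tfrac{\alpha}{2}\sum_{\dagger}\min\{3\lambda^{\dagger},\lambda^{\dagger}+\rho^{\dagger}\}(r_{max}^{\dagger}+\kappa)^2$ and $\epsilon=2G$ is exactly (\ref{eq:epsilon_prop_DP}); this counting follows the technique in the proof of Theorem~1 of \cite{Liu_RecSys15}.

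I expect the main obstacle to be this last counting step, specifically the treatment of the \emph{randomly sampled zero entries}. The per-entry magnitude bound and the cancellation of the normalizers are routine, but making $M^{\dagger}$ rigorous requires committing to how the $\rho^{\dagger}$ sampled zeros are handled under a pair of neighboring inputs: one must exhibit (or fix as internal randomness) a coupling of the sampling that simultaneously leaves the common zero positions identical and confines the symmetric difference of the selected zero-sets to the $O(\lambda^{\dagger})$ flipped positions, so that the $3\lambda^{\dagger}$ branch is valid. I would also verify the degenerate case in which a user has fewer than $\lambda^{\dagger}$ positive entries (so no trimming occurs) and confirm that the differing-entry count and the coupling argument still go through, falling back on the $\lambda^{\dagger}+\rho^{\dagger}$ branch whenever the finer estimate is unavailable.
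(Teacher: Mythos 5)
Your overall architecture is the paper's own proof with the exponential-mechanism guarantee inlined: the paper likewise writes $p(\The|\bmR)\propto\exp[-G(\bmR,\The)]$ with $G$ collecting the quadratic terms and $-\ln p(\The)$, observes that the Gaussian normalizers $\ln\sqrt{\alpha/2\pi}$ and $\ln p(\bmR)$ are $\The$-independent and cancel (your constant $C$), uses the same per-entry bound from $r_{n,i,j}^{\dagger}\in[0,r_{max}^{\dagger}]$ and $\hat r_{n,i,j}^{\dagger}\in[-\kappa,r_{max}^{\dagger}+\kappa]$, and resolves the coupling issue you flag by an explicit two-stage sampling of the $\rho^{\dagger}$ zeros (select $\rho^{\dagger}$ positions uniformly including nonzeros, then reselect the $\rho_0^{\dagger}\le\lambda^{\dagger}$ nonzero hits among untouched zeros), which guarantees at least $\max\{\rho^{\dagger}-2\lambda^{\dagger},0\}$ common sampled zeros; a footnote notes the choice of sampler is immaterial since $p(\The|\calS)$ marginalizes $\bmR$. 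Your factor-of-two bookkeeping ($\epsilon=2\sup_{\The}|g(\The)|$) matches the paper's ``sensitivity $\le\epsilon/2$ yields $\epsilon$-DP.''

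The genuine gap is in your last counting step: bounding $|g(\The)|$ by $\tfrac{\alpha}{2}M^{\dagger}(r_{max}^{\dagger}+\kappa)^2$ with $M^{\dagger}$ the number of entries in the \emph{union} whose contribution differs cannot deliver $M^{\dagger}\le\min\{3\lambda^{\dagger},\lambda^{\dagger}+\rho^{\dagger}\}$. For the $\lambda^{\dagger}+\rho^{\dagger}$ branch, take $\rho^{\dagger}=0$: the positive supports of the changed user's slices under $\calS$ and $\calS'$ can be disjoint, so $M^{\dagger}$ can equal $2\lambda^{\dagger}>\lambda^{\dagger}+\rho^{\dagger}$, and no coupling helps since there are no sampled zeros to couple --- your route then proves only twice the claimed $\epsilon$. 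For the $3\lambda^{\dagger}$ branch, your premise that the sampled zero-sets can differ only at flipped positions is false: both sets have cardinality exactly $\rho^{\dagger}$, so each selected zero of $\calS$ landing on a position that is positive under $\calS'$ must be compensated by a replacement element that is typically zero in \emph{both} data sets, inflating the union to about $2\lambda^{\dagger}+\min\{\rho^{\dagger},\lambda^{\dagger}\}$ even under a best-possible coupling (and note your own arithmetic --- $2\lambda^{\dagger}$ flips, differences confined to them --- would give $2\lambda^{\dagger}$, not the $3\lambda^{\dagger}$ you assert). The paper never bounds the union. It writes the differing part of $G$ as $a-b$, where $a$ (resp.\ $b$) is the sum of the nonnegative quadratic terms over the changed user's non-common observed entries in $\bmR$ (resp.\ $\bmR'$); each \emph{one-sided} sum has at most $\lambda^{\dagger}+\rho^{\dagger}$ terms trivially, and at most $(\lambda^{\dagger}+\rho^{\dagger})-(\rho^{\dagger}-2\lambda^{\dagger})=3\lambda^{\dagger}$ terms after discounting the coupled common zeros, so $a,b\in[0,\epsilon/2]$ and $|a-b|\le\max(a,b)\le\epsilon/2$. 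That elementary inequality for nonnegative $a,b$ is precisely what the union count forfeits; restoring it closes your argument and yields the proposition's constants exactly.
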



\begin{proof}[Proof]
By (\ref{eq:p_bmR_The}), 
$\ln p(\The | \bmR)$ 
can be written as follows:
\begin{align}
&\ln p(\The | \bmR) \nonumber\\
=& \ln p(\bmR | \The) + \ln p(\The) - \ln p(\bmR) ~~~(\text{by Bayes' theorem}) \nonumber\\
=& - \sum_{n=1}^{|\calU|} \sum_{i=1}^{|\calX|} \sum_{j=1}^{|\calX|} I_{n,i,j}\one \left( \frac{\alpha (r_{n,i,j}\one - \hat{r}_{n,i,j}\one)^2}{2} + \ln \sqrt{\frac{\alpha}{2\pi}} \right) \nonumber\\
&- \sum_{n=1}^{|\calU|} \sum_{i=1}^{|\calX|} \sum_{j=1}^{|\calL|} I_{n,i,j}\two \left( \frac{\alpha (r_{n,i,j}\two - \hat{r}_{n,i,j}\two)^2}{2} + \ln \sqrt{\frac{\alpha}{2\pi}} \right) \nonumber\\
&+ \ln p(\The) - \ln p(\bmR).
\label{eq:ln_p_The_bmR}
\end{align}
The sum of the first and second terms in (\ref{eq:ln_p_The_bmR}) is the log-likelihood $\ln p(\bmR | \The)$, and is bounded by the trimming that ensures 
$r_{n,i,j}\one \in [0,r_{max}\one]$ and 
$r_{n,i,j}\two \in [0,r_{max}\two]$. 

Let $G$ be a function that takes as input 
$\bmR$ and 
$\The$ and outputs $G(\bmR,\The) \in \reals$ as follows:
\begin{align}
G(\bmR,\The) =&\sum_{n=1}^{|\calU|} \sum_{i=1}^{|\calX|} \sum_{j=1}^{|\calX|} \frac{\alpha \hspace{0.5mm} I_{n,i,j}\one (r_{n,i,j}\one - \hat{r}_{n,i,j}\one)^2}{2} \nonumber\\
&+ \sum_{n=1}^{|\calU|} \sum_{i=1}^{|\calX|} \sum_{j=1}^{|\calL|} \frac{\alpha \hspace{0.5mm} I_{n,i,j}\two (r_{n,i,j}\two - \hat{r}_{n,i,j}\two)^2}{2} \nonumber\\
&- \ln p(\The).
\label{eq:G_bmR_The_1}
\end{align}
Note that 
$\ln \sqrt{\frac{\alpha}{2\pi}}$ and 
$\ln p(\bmR)$ in (\ref{eq:ln_p_The_bmR}) 
do not depend on $\The$. 
Thus, by (\ref{eq:G_bmR_The_1}), 
$\ln p(\The | \bmR)$ in (\ref{eq:ln_p_The_bmR}) can be expressed as: 
\begin{align}
p(\The|\bmR) = 
\frac{\exp[-G(\bmR,\The)]}{\int_\The \exp[-G(\bmR,\The)] d \The}.
\label{eq:G_bmR_The_2}
\end{align}

Then, Proposition~\ref{prop:DP} can be 
proven 
by using the fact that $\calF_{PPMTF}$ is the exponential mechanism \cite{DP} that uses $-G(\bmR,\The)$ as a utility function. 
Specifically, 
let $\bmR'$ be the tuple of two tensors that differ from $\bmR$ at most one user's elements; i.e., $\bmR$ and $\bmR'$ are \textit{neighboring}. 
We write $\bmR \sim \bmR'$ to represent that $\bmR$ and $\bmR'$ are neighboring. 
Let $\Delta G \in \reals$ be the sensitivity of $G$ given by:
\begin{align}
\Delta G = \underset{\The}{\max} \underset{\bmR, \bmR': \bmR \sim \bmR'}{\max} |G(\bmR,\The) - G(\bmR',\The)|.
\label{eq:Delta_G}
\end{align}
Here we note that when $\rho\one$ is large, many zero elements are common in $\bmR\one$ and ${\bmR'}\one$. 
Specifically, for each user, 
we can randomly 
select 
$\rho\one$ zero elements 
as follows: (i) randomly 
select 
$\rho\one$ elements from $\bmR\one$ (including non-zero elements), (ii) count the number $\rho_0\one$ ($\leq \lambda\one$) of non-zero elements in the 
selected 
elements, (iii) randomly 
reselect 
$\rho_0\one$ elements from 
zero 
(and not 
selected) elements in $\bmR\one$. 
Note that this algorithm eventually 
selects 
$\rho\one$ zero elements from 
$\bmR\one$ at random.\footnote{Other random sampling algorithms do not change our conclusion because $p(\The | \calS)$ is obtained by marginalizing $\bmR = (\bmR\one, \bmR\two)$.} 
In this case, for each user, at least $\max\{\rho\one - 2\lambda\one,0\}$ zero elements are common in $\bmR\one$ and ${\bmR'}\one$ (since $\bmR\one$ and ${\bmR'}\one$ have at most $2\lambda\one$ 
reselected 
elements in total). 

Except for such common zero elements, 
$I_{n,i,j}\one$ in (\ref{eq:G_bmR_The_1}) takes 1 at most $\min\{3\lambda\one, \lambda\one + \rho\one\}$ elements for each user (since 
$(\lambda\one + \rho\one) - (\rho\one - 2\lambda\one) = 3\lambda\one$). 
Similarly, 
except for common zero elements, 
$I_{n,i,j}\two$ in (\ref{eq:G_bmR_The_1}) takes 1 at most $\min\{3\lambda\two, \lambda\two + \rho\two\}$ elements for each user. 
In addition, 
$r_{n,i,j}\one \in [0,r_{max}\one]$, 
$r_{n,i,j}\two \in [0,r_{max}\two]$, 
$\hat{r}_{n,i,j}\one \in [-\kappa, r_{max}\one+\kappa]$, and
$\hat{r}_{n,i,j}\two \in [-\kappa, r_{max}\two+\kappa]$
for each triple ($n,i,j$), as described in Section~\ref{sub:privacy}. 
Moreover, the ``retry if fail'' procedure, which iterates the sampling of $\The$ until $\hat{r}_{n,i,j}\one$ and $\hat{r}_{n,i,j}\two$ are bounded as above, guarantees that $\The$ is sampled from the posterior distribution under this constraint \cite{Liu_RecSys15}. 

Consequently, 
the sum of the first and second terms in (\ref{eq:G_bmR_The_1}) is less than (resp.~more than) or equal to 
$\frac{\epsilon}{2}$ (resp.~$0$), where 
$\epsilon$ is given by (\ref{eq:epsilon_prop_DP}). 
Then, 
since the third term in (\ref{eq:G_bmR_The_1}) is the same for $G(\bmR,\The)$ and $G(\bmR',\The)$ in (\ref{eq:Delta_G}), 
$\Delta G$ can be bounded above by $\frac{\epsilon}{2}$: i.e., $\Delta G \leq \frac{\epsilon}{2}$.
Since the exponential mechanism with sensitivity $\frac{\epsilon}{2}$ provides $\epsilon$-DP \cite{DP}, $\calF_{PPMTF}$ provides $\epsilon$-DP.
\end{proof}

\smallskip
\noindent{\textbf{$\epsilon$ for a single location.}}~~We also analyze $\epsilon$ for neighboring data sets $\bmR$ and $\bmR'$ that differ in a single location. 
Here we assume $\rho\one = \rho\two = 0$ to simplify the analysis (if $\rho\one > 0$ or $\rho\two > 0$, then $\epsilon$ will be larger because 
selected 
zero elements can be different in $\bmR$ and $\bmR'$). 
In this case, $\bmR\one$ and ${\bmR'}\one$ (resp.~$\bmR\two$ and ${\bmR'}\two$) differ in at most two (resp.~four) elements, and the value in each element differs by $1$.\footnote{In $\bmR\two$ and ${\bmR'}\two$, we can consider the case where one transition-count differs by $2$ and two transition-counts differ by $1$ (e.g., transition $x_1 \rightarrow x_1 \rightarrow x_1$ changes to $x_1 \rightarrow x_2 \rightarrow x_1$). We can ignore such cases because $|G(\bmR,\The) - G(\bmR',\The)|$ is smaller.} 
Then by (\ref{eq:G_bmR_The_1}) and (\ref{eq:Delta_G}), 
we obtain: 
\begin{align*}
\Delta G 
&\leq 2 \cdot \frac{\alpha}{2} \left( (r_{max}\one+\kappa)^2 -  (r_{max}\one+\kappa-1)^2 \right) \\
&\hspace{3.5mm} + 4 \cdot \frac{\alpha}{2} \left ( (r_{max}\two+\kappa)^2 -  (r_{max}\two+\kappa-1)^2) \right) \\
&= \alpha(2r_{max}\one + 4r_{max}\two + 6\kappa -3),
\end{align*}
and therefore $\epsilon = \alpha(4r_{max}\one + 8r_{max}\two + 12\kappa -6)$. 

%

\smallskip
Note that a trace $y$ is synthesized from $\The$ after $\calFPro$ outputs $\The$. 
Then by the immunity to post-processing \cite{DP}, 
$\calFPro$ also provides $\epsilon$-\DP{} for all synthetic traces. 
However, $\epsilon$ 
needs to be large to achieve high utility, as shown in Appendix~\ref{sub:DP_The}.

\subsection{Experimental Evaluation}
\label{sub:DP_The}
We evaluated the privacy budget $\epsilon$ in DP for $\The$ and the utility by changing $\alpha$ in Proposition~\ref{prop:DP} from $10^{-6}$ and $10^3$ 
using the Foursquare dataset \cite{Yang_WWW19}. 
Figure~\ref{fig:DP_The} shows the results in 
\textdat{IST} (Istanbul), 
where 
$\epsilon$ is the value in Proposition~\ref{prop:DP} when $\kappa=0$. 
In practice, $\epsilon$ can be larger than this value because $\kappa\geq0$. 

\begin{figure}[t]
\centering
\includegraphics[width=0.99\linewidth]{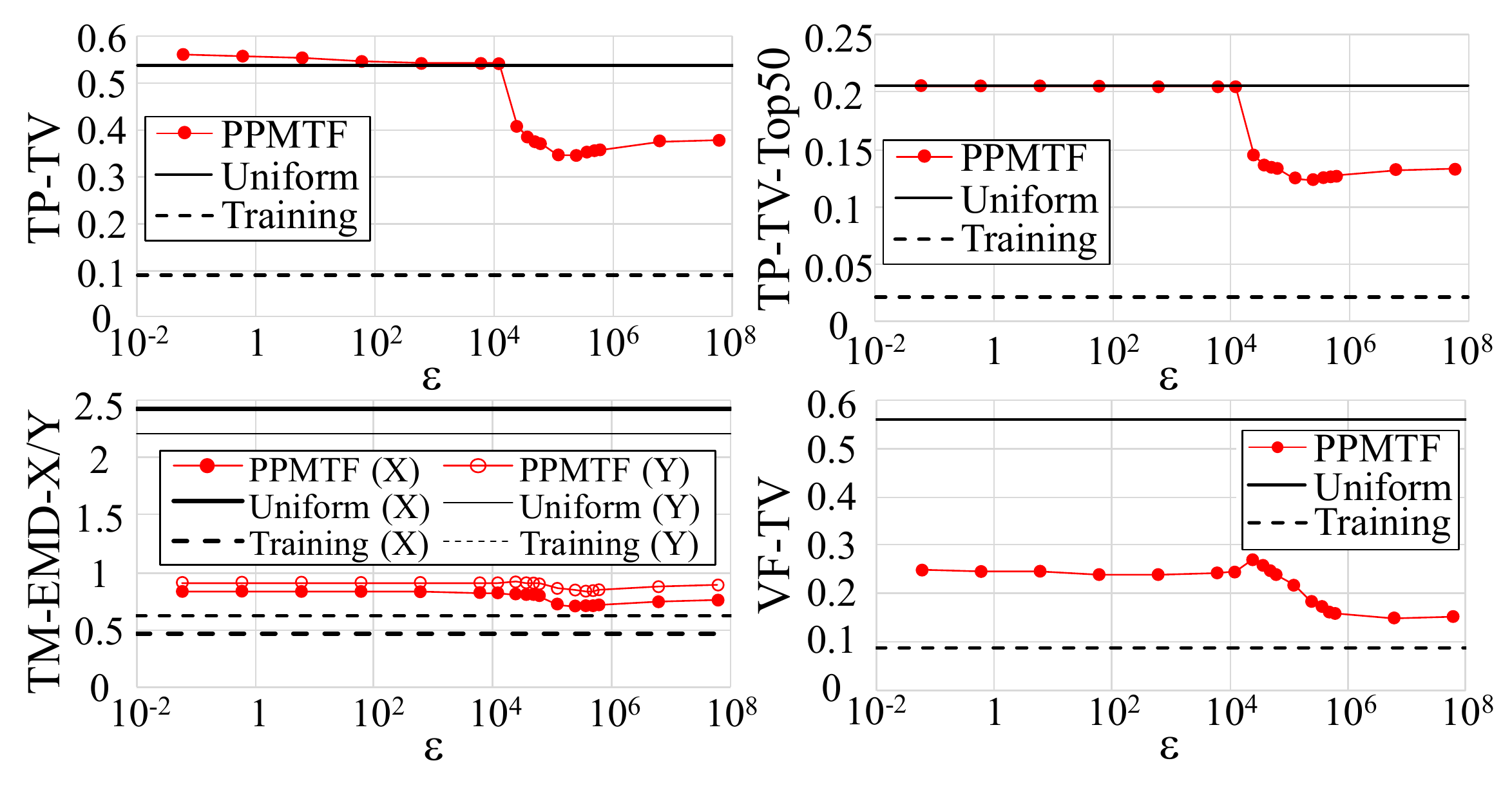}
\caption{Relation between 
$\epsilon$ 
and utility in \textdat{IST} ($\kappa=0$).} 
\label{fig:DP_The}
\end{figure}

Figure~\ref{fig:DP_The} shows that $\epsilon$ needs to be larger than 
$2 \times 10^4$ 
to provide high utility. 
%
%
%
This is because $\alpha$ in Proposition~\ref{prop:DP} needs to be large to achieve high utility. 
Specifically, 
by (\ref{eq:p_bmR_The}), $\alpha$ needs to be large so that 
$\hat{r}_{n,i,j}\one$ and $\hat{r}_{n,i,j}\two$ in (\ref{eq:p_bmR_The}) are close to $r_{n,i,j}\one$ and $r_{n,i,j}\two$, respectively. 
For example, when $\alpha=0.01$ (i.e., standard deviation in (\ref{eq:p_bmR_The}) $=10$), transition/visit-counts can be frequently changed by $\pm{10}$ after sampling (e.g., $\hat{r}_{n,i,j}\one = r_{n,i,j}\one \pm 10$), which destroys the utility. 
In Figure~\ref{fig:DP_The}, we need $\alpha \geq 0.4$ to achieve high utility, which results in $\epsilon > 2 \times 10^4$.

If we consider neighboring data sets $\calS$ and $\calS'$ that differ in a \textit{single location} (rather than one trace), 
$\epsilon$ becomes much smaller. 
However, $\epsilon$ is still large. 
Specifically, 
if $\kappa = \rho\one = \rho\two = 0$, 
then $\epsilon = \alpha(4r_{max}\one + 8r_{max}\two -6)$; 
otherwise, $\epsilon$ is larger than this value (see 
Appendix~\ref{sub:proof_DP_theory}). 
Thus, when $\alpha = 0.4$, 
the privacy budget is $\epsilon=45.6$ or more (since $r_{max}\one = r_{max}\two = 10$).

Finally, we note that 
adding the Laplacian noise to $\The$ (rather than sampling $\The$) does not provide DP. 
For example, assume that $\The$ is trained from $\calS$ by 
the MAP (Maximum a Posteriori) estimation algorithm $\calF$ \cite{prml}, which calculates $\The$ that maximizes $p(\The | \calS)$; i.e., 
$\calF(\calS) = \argmax_\The p(\The | \calS)$. 
If $p(\The | \calS)$ is uniform (or nearly uniform), then $\calF(\calS')$ can take any value for neighboring trace set $\calS'$. 
Therefore, the sensitivity is unbounded and adding the Laplacian noise does not provide DP.

For these reasons, providing a small $\epsilon$ in DP is difficult in our location synthesizer.

\arxiv{
\section{Details of Gibbs Sampling}
\label{sec:details_gibbs}

Let $\Psi = (\PsiA, \PsiB, \PsiC, \PsiD)$. 
$\PsiA$, $\PsiB$, $\PsiC$, and $\PsiD$ are called \textit{hyper-parameters} in Bayesian statistics. 
We explain details of how to sample 
the hyper-parameters $\Psi$ and 
the MTF 
parameters $\The$ 
using 
Gibbs sampling. 

In the $t$-th iteration, 
we sample 
$\PsiA\tth$, 
$\PsiB\tth$, 
$\PsiC\tth$, 
$\PsiD\tth$, 
$\bmA\tth$, 
$\bmB\tth$, 
$\bmC\tth$, and 
$\bmD\tth$ from the conditional distribution given the current values of the other variables. 
Specifically, based on the graphical model in Figure~\ref{fig:graphical}, we sample 
each variable as follows:
\begin{align}
\PsiA\tth &\sim p(\PsiA | \bmA\tmoneth) \label{TheAtone}\\
\PsiB\tth &\sim p(\PsiB | \bmB\tmoneth) \label{TheBtone}\\
\PsiC\tth &\sim p(\PsiC | \bmC\tmoneth) \label{TheCtone}\\
\PsiD\tth &\sim p(\PsiD | \bmD\tmoneth) \label{TheDtone}\\
\bmA\tth &\sim p(\bmA | \bmR, \bmB\tmoneth, \bmC\tmoneth, \bmD\tmoneth, \PsiA\tth) \label{bmAtone}\\
\bmB\tth &\sim p(\bmB | \bmR, \bmA\tth, \bmC\tmoneth, \bmD\tmoneth, \PsiB\tth) \label{bmBtone}\\
\bmC\tth &\sim p(\bmC | \bmR\one, \bmA\tth, \bmB\tth, \PsiC\tth) \label{bmCtone}\\
\bmD\tth &\sim p(\bmD | \bmR\two, \bmA\tth, \bmB\tth, \PsiD\tth) \label{bmDtone}
\end{align}

Below we explain 
details of 
how to compute the sampling distribution for the hyper-parameters and 
MTF parameters. 

\smallskip
\noindent{\textbf{Sampling of the hyper-parameters.}}~~We explain the computation of the right-hand side of (\ref{TheAtone}) in 
Gibbs sampling. 
We omit the computation of (\ref{TheBtone}), (\ref{TheCtone}), and (\ref{TheDtone}) 
because 
they are computed in the same way as (\ref{TheAtone}); i.e., 
(\ref{TheBtone}), (\ref{TheCtone}), and (\ref{TheDtone}) can be computed by replacing $\bmA$ in (\ref{TheAtone}) with $\bmB$, $\bmC$, and $\bmD$, respectively. 
Below we omit 
superscripts $(t)$ and 
$(t-1)$. 

$p(\PsiA | \bmA)$ in (\ref{TheAtone}) can be computed by using the fact that the Normal-Wishart distribution is a conjugate prior of the multivariate normal distribution \cite{prml}. 
Specifically, 
following \cite{Salakhutdinov_ICML08}, 
we compute $p(\PsiA | \bmA)$ in (\ref{TheAtone}) 
as follows:
\begin{align}
p(\PsiA | \bmA) 
&= {\textstyle\frac{p(\bmA | \PsiA) p(\PsiA)}{p(\bmA)}} ~~\text{(by Bayes' theorem)} \nonumber\\
&= \calN(\muA | \mu_0^*, (\beta_0^* \LamA)\inv) \calW(\LamA | W_0^*, \nu_0^*), 
\label{eq:p_psiA_A}
\end{align}
where 
\begin{align}
\hspace{-0.5mm}\mu_0^* &= {\textstyle\frac{\beta_0 \mu_0 + |\calU|\bar{\bma}}{\beta_0 + |\calU|}} \label{eq:mu_0_ast} \\
\hspace{-0.5mm}\beta_0^* &= \beta_0 + |\calU| \\
\hspace{-0.5mm}W_0^* &= \left[W_0\inv + |\calU|\bar{\bmS} + {\textstyle\frac{\beta_0 |\calU|}{\beta_0 + |\calU|}}(\mu_0 - \bar{\bma})^\top (\mu_0 - \bar{\bma}) \right]\inv \\
\hspace{-0.5mm}\nu_0^* &= \nu_0 + |\calU| \\
\hspace{-0.5mm}\bar{\bma} &= {\textstyle\frac{1}{|\calU|}}\sum_{n=1}^{|\calU|} \bma_n \\
\hspace{-0.5mm}\bar{\bmS} &= {\textstyle\frac{1}{|\calU|}}\sum_{n=1}^{|\calU|} \bma_n^\top \bma_n. \label{eq:bar_S}
\end{align}
Therefore, 
we compute $p(\PsiA | \bmA)$ by (\ref{eq:p_psiA_A}) to (\ref{eq:bar_S}). 
Then we 
sample $\PsiA$ from $p(\PsiA | \bmA)$. 

\smallskip
\noindent{\textbf{Sampling of the MTF parameters.}}~~Next we explain the computation of (\ref{bmAtone}) and (\ref{bmCtone}). 
We omit the computation of (\ref{bmBtone}) and (\ref{bmDtone}) 
because 
they are computed in the same way as (\ref{bmAtone}) and (\ref{bmCtone}), respectively; i.e., (\ref{bmBtone}) and (\ref{bmDtone}) can be computed by exchanging $\bmA$ for $\bmB$ in (\ref{bmAtone}), and $\bmC$ for $\bmD$ in (\ref{bmCtone}). 

$p(\bmA | \bmR, \bmB, \bmC, \bmD, \PsiA)$ in (\ref{bmAtone}) can be written as follows:
\begin{align}
p(\bmA | \bmR, \bmB, \bmC, \bmD, \PsiA) 
&= \prod_{n=1}^{|\calU|} p(\bma_n | \bmR_n\one, \bmR_n\two, \bmB, \bmC, \bmD, \PsiA), 
\label{eq:A_RBCDPsiA_1}
\end{align}
where $\bmR_n\one$ and $\bmR_n\two$ are the $n$-th 
matrices in $\bmR\one$ and $\bmR\two$, respectively. 
By Bayes' theorem and the graphical model in Figure~\ref{fig:graphical}, 
$p(\bma_n | \bmR_n\one, \bmR_n\two, \bmB, \bmC, \bmD, \PsiA)$ in (\ref{eq:A_RBCDPsiA_1}) can be written as follows:
\begin{align}
& p(\bma_n | \bmR_n\one, \bmR_n\two, \bmB, \bmC, \bmD, \PsiA) \nonumber\\
=& \frac{p(\bmR_n\one, \bmR_n\two | \bma_n, \bmB, \bmC, \bmD, \PsiA) p(\bma_n | \bmB, \bmC, \bmD, \PsiA)}{p(\bmR_n\one, \bmR_n\two | \bmB, \bmC, \bmD, \PsiA)} \nonumber\\
=& \frac{p(\bmR_n\one | \bma_n, \bmB, \bmC) p(\bmR_n\two | \bma_n, \bmB, \bmD) p(\bma_n | \PsiA)}{p(\bmR_n\one, \bmR_n\two | \bmB, \bmC, \bmD, \PsiA)}.
\label{eq:a_RBCDPsiA_1}
\end{align}
Note that 
$p(\bmR_n\one | \bma_n, \bmB, \bmC)$, 
$p(\bmR_n\two | \bma_n, \bmB, \bmD)$, and 
$p(\bma_n | \PsiA)$ 
are normal distributions (as described in Section~\ref{sub:post_smpl}), and 
$p(\bmR_n\one, \bmR_n\two | \bmB, \bmC, \bmD, \PsiA)$ is a 
normalization constant 
so that the sum of $p(\bma_n | \bmR_n\one, \bmR_n\two, \bmB, \bmC, \bmD, \PsiA)$ over all values of $\bma_n$ is one. 
In addition, 
let 
$\bmbc_{ij} \in \reals^z$ and $\bmbd_{ij} \in \reals^z$ are shorthand for 
$\bmb_i \circ \bmc_j$ and $\bmb_i \circ \bmd_j$, respectively, where 
$\circ$ represents the Hadamard product. 
Then by (\ref{eq:hr_one+two}), 
$\hat{r}_{n,i,j}\one$ and $\hat{r}_{n,i,j}\two$ can be expressed as: 
$\hat{r}_{n,i,j}\one = \bma_n^\top \bmbc_{ij}$ 
and 
$\hat{r}_{n,i,j}\two = \bma_n^\top \bmbd_{ij}$, 
respectively. 

Thus, $p(\bma_n | \bmR_n\one, \bmR_n\two, \bmB, \bmC, \bmD, \PsiA)$ can be expressed as:
\begin{align}
& p(\bma_n | \bmR_n\one, \bmR_n\two, \bmB, \bmC, \bmD, \PsiA) \nonumber\\
=& d_1 \exp \Biggl [\sum_{i=1}^{|\calX|}\sum_{j=1}^{|\calX|} \alpha \hspace{0.2mm} I_{n,i,j}\one (\bma_n^\top \bmbc_{ij} - r_{n,i,j}\one)^2  \nonumber\\
& \hspace{14mm} + \sum_{i=1}^{|\calX|}\sum_{j=1}^{|\calL|} \alpha \hspace{0.2mm} I_{n,i,j}\two (\bma_n^\top \bmbd_{ij} - r_{n,i,j}\two)^2 \nonumber\\
& \hspace{14mm} + (\bma_n - \mu_A)^\top \LamA (\bma_n - \mu_A) \Biggr ],
\label{eq:a_RBCDPsiA_2}
\end{align}
where $d_1 \in \reals$ is a normalization constant.

To simplify (\ref{eq:A_RBCDPsiA_1}) and (\ref{eq:a_RBCDPsiA_2}), we use the following two facts. 
First, 
for any $v \in \reals$ and any $\bmw \in \reals^z$, 
we obtain:
\begin{align*}
v^2 
&= v(\bmw^{-1 \top} \bmw) (\bmw^\top \bmw^{-1}) v \nonumber\\
&= ((v\bmw^{-1})^\top \bmw) (\bmw^\top (v\bmw^{-1})) \nonumber\\
&= (v\bmw^{-1})^\top (\bmw \bmw^\top) (v\bmw^{-1})) ~~\text{(by associativity)}.
\end{align*}
Thus, 
\begin{align*}
&(\bma_n^\top \bmbc_{ij} - r_{n,i,j}\one)^2 \nonumber\\
=& (\bma_n - r_{n,i,j}\one \bmbc_{ij}^{-1})^\top (\bmbc_{ij} \bmbc_{ij}^\top) (\bma_n - r_{n,i,j}\one \bmbc_{ij}^{-1}) 
\end{align*}
and
\begin{align*}
&(\bma_n^\top \bmbd_{ij} - r_{n,i,j}\two)^2 \nonumber\\
=& (\bma_n - r_{n,i,j}\two \bmbd_{ij}^{-1})^\top (\bmbd_{ij} \bmbd_{ij}^\top) (\bma_n - r_{n,i,j}\two \bmbd_{ij}^{-1}).
\end{align*}
Therefore, we obtain:
\begin{align}
&p(\bma_n | \bmR_n\one, \bmR_n\two, \bmB, \bmC, \bmD, \PsiA) \nonumber\\
=& d_2 \prod_{i=1}^{|\calX|}\prod_{j=1}^{|\calX|} \calN(\bma_n | r_{n,i,j}\one \bmbc_{ij}^{-1}, (\alpha \hspace{0.2mm} I_{n,i,j}\one \bmbc_{ij} \bmbc_{ij}^\top)^{-1}) \nonumber\\
& \hspace{4mm} \cdot \prod_{i=1}^{|\calX|}\prod_{j=1}^{|\calL|} \calN(\bma_n | r_{n,i,j}\two \bmbd_{ij}^{-1}, (\alpha \hspace{0.2mm} I_{n,i,j}\two \bmbd_{ij} \bmbd_{ij}^\top)^{-1}) \nonumber\\
& \hspace{4mm} \cdot \calN(\bma_n | \mu_A, \LamA^{-1}),
\label{eq:an_three_normals}
\end{align}
where $d_2 \in \reals$ is a normalization constant. 

Second, 
the product of two Gaussian densities is proportional to a Gaussian density \cite{matrix_cookbook}. 
Specifically, 
for 
any $\bmw \in \reals^z$, 
any $\bmm_1, \bmm_2 \in \reals^z$, and 
any $\Lam_1, \Lam_2 \in \reals^{z \times z}$, we obtain:
\begin{align}
\calN(\bmw | \bmm_1, \Lam_1^{-1}) \cdot \calN(\bmw | \bmm_2, \Lam_2^{-1}) = d_3 \calN(\bmw | \bmm_c, \Lam_c^{-1}),
\label{eq:caln_bmw_1_2}
\end{align}
where 
\begin{align}
d_3 &= \calN(\bmm_1 | \bmm_2, \Lam_1^{-1} + \Lam_2^{-1}) \nonumber\\
\bmm_c &= (\Lam_1 + \Lam_2)^{-1} (\Lam_1 \bmm_1 + \Lam_2 \bmm_2) \nonumber\\
\Lam_c &= \Lam_1 + \Lam_2.
\label{eq:d2_bmmc_lamc}
\end{align}
By 
(\ref{eq:an_three_normals}), 
(\ref{eq:caln_bmw_1_2}), and (\ref{eq:d2_bmmc_lamc}), 
$p(\bmA | \bmR, \bmB, \bmC, \bmD, \PsiA)$ in (\ref{eq:A_RBCDPsiA_1}) can be written as follows:
\begin{align}
p(\bmA | \bmR, \bmB, \bmC, \bmD, \PsiA) 
= \prod_{n=1}^{|\calU|} \calN(\bma_n | \mu_{\bmA,n}^*, \Lam_{\bmA,n}^*),
\label{eq:A_RBCDPsiA}
\end{align}
where 
\begin{align}
\Lam_{\bmA,n}^* &= \LamA + \alpha \hspace{0.2mm} I_{n,i,j}\one \sum_{i=1}^{|\calX|} \sum_{j=1}^{|\calX|} (\bmbc_{ij} \bmbc_{ij}^\top) \\
& \hspace{3.5mm} + \alpha \hspace{0.2mm} I_{n,i,j}\two \sum_{i=1}^{|\calX|} \sum_{j=1}^{|\calL|} (\bmbd_{ij} \bmbd_{ij}^\top) \nonumber\\
\mu_{\bmA,n}^* &= [\Lam_{\bmA,n}^*]^{-1} \left(\LamA \muA  + \alpha I_{n,i,j}\one \sum_{i=1}^{|\calX|}\sum_{j=1}^{|\calX|} r_{n,i,j}\one \bmbc_{ij} \right. \nonumber\\
& \hspace{19mm} + \left. \alpha I_{n,i,j}\two  \sum_{i=1}^{|\calX|}\sum_{j=1}^{|\calL|} r_{n,i,j}\two \bmbd_{ij} \right). \label{eq:mu_An_ast}
\end{align}

Similarly, $p(\bmC | \bmR\one, \bmA, \bmB, \PsiC)$ in (\ref{bmCtone}) can be written as follows:
\begin{align}
p(\bmC | \bmR\one, \bmA, \bmB, \PsiC) 
&= \prod_{j=1}^{|\calX|} p(\bmc_j | \bmR\one, \bmA, \bmB, \PsiC) \nonumber\\
&= \prod_{j=1}^{|\calX|} \calN(\bmc_j | \mu_{\bmC,j}^*, \Lam_{\bmC,j}^*),
\label{eq:C_RABPsiC}
\end{align}
where 
\begin{align}
\Lam_{\bmC,j}^* &= \LamC + \alpha \hspace{0.2mm} I_{n,i,j}\one \sum_{n=1}^{|\calU|} \sum_{i=1}^{|\calX|} \bmab_{ni} \bmab_{ni}^\top \\
\mu_{\bmC,j}^* &= [\Lam_{\bmC,j}^*]^{-1} \left(\LamC \muC  + \alpha \hspace{0.2mm} I_{n,i,j}\one \sum_{n=1}^{|\calU|}\sum_{i=1}^{|\calX|} r_{n,i,j}\one \bmab_{ni} \right)
\label{eq:mu_Cj_ast}
\end{align}
and $\bmab_{ni}$ is shorthand for $\bma_n \circ \bmb_i$. 
Thus we compute $p(\bmA | \bmR, \allowbreak\bmB, \bmC, \bmD, \PsiA)$ and $p(\bmC | \bmR\one, \bmA, \bmB,\allowbreak \PsiC)$ by (\ref{eq:A_RBCDPsiA}) to (\ref{eq:mu_An_ast}) and (\ref{eq:C_RABPsiC}) to (\ref{eq:mu_Cj_ast}), respectively. 
Then we sample $\bmA$ and $\bmC$ from these distributions. 
}

\end{document}